\newtheoremstyle{experimentstyle}%
  {5pt}
  {5pt}
  {}
  {}
  {\bfseries}
  {.}
  {0.5em}
  {}
\theoremstyle{experimentstyle}
\newmdtheoremenv[
  backgroundcolor=gray!10,
  linecolor=gray!40,
  innertopmargin=5pt,
  innerbottommargin=5pt,
  skipabove=10pt,
  skipbelow=10pt
]{experiment}{Experiment}
\newif\ifshowcomments
\newcommand{\todo}[1]{\ifshowcomments{\color{red}[TODO: #1]}\fi}  
\newcommand{\jacob}[1]{\ifshowcomments\textcolor{purple}{[jacob: #1]}\fi}  %
\newcommand{\STATE}{\State}
\newcommand{\REQUIRE}{\Require}
\newcommand{\RETURN}{\Return}
\newcommand{\COMMENT}{\Comment}
\newcommand{\FOR}{\For}
\newcommand{\ENDFOR}{\EndFor}
\newcommand{\IF}{\If}
\newcommand{\ENDIF}{\EndIf}
\theoremstyle{plain}
\newtheorem{theorem}{Theorem}[section]
\newtheorem{proposition}[theorem]{Proposition}
\newtheorem{lemma}[theorem]{Lemma}
\newtheorem{corollary}[theorem]{Corollary}
\theoremstyle{definition}
\newtheorem{definition}[theorem]{Definition}
\theoremstyle{remark}
\tikzstyle{startstop} = [rectangle, rounded corners, minimum width=3cm, minimum height=1cm,text centered, draw=black, fill=red!30]
\tikzstyle{process} = [rectangle, minimum width=3cm, minimum height=1cm, text centered, draw=black, fill=blue!30]
\tikzstyle{arrow} = [thick,->,>=stealth]
\tikzstyle{data} = [ellipse, minimum width=3cm, minimum height=1cm, text centered, draw=black, fill=green!30]
\tikzstyle{decision} = [diamond, minimum width=3cm, minimum height=1cm, text centered, draw=black, fill=yellow!30]
\newcommand{\ex}[2]{{\ifx&#1& \mathbb{E} \else \underset{#1}{\mathbb{E}} \fi \left[#2\right]}}
\newcommand{\pr}[2]{{\ifx&#1& \mathbb{P} \else \underset{#1}{\mathbb{P}} \fi \left[#2\right]}}
\newcommand{\var}[2]{{\ifx&#1& \mathrm{Var} \else \underset{#1}{\mathrm{Var}} \fi \left(#2\right)}}
\renewcommand{\epsilon}{\varepsilon}  
\newcommand{\GLSFull}{Token-DiFR}  
\newcommand{\CGSFull}{Token-IPT-DiFR}  
\newcommand{\FSSL}{FSSL}
\newcommand{\GLS}{FSSL-GM}  
\newcommand{\CGS}{FSSL-IPT}  
\newcommand{\GLSestimator}{Token-DiFR}  
\newcommand{\CGSestimator}{Token-IPT-DiFR}  
\def\eprint{eprint}
\def\sub{sub}
\def\format{eprint}
\newcommand{\calP}{\mathcal{P}}
\newcommand{\Toks}{\mathcal{T}}
\newcommand{\dist}{\mathcal{D}}
\newcommand{\A}{\mathcal{A}}
\newcommand{\warden}{\mathcal{W}}
\newcommand{\N}{\mathbb{N}}
\newcommand{\bits}{\{0,1\}}
\newcommand{\getsr}{\overset{{\scriptscriptstyle\$}}{\leftarrow}}
\def\secpar{\lambda}
\newcommand{\negl}{\mathrm{negl}}
\newcommand{\seed}{\mathsf{seed}}
\newcommand{\Steg}{\mathsf{Steg}}
\newcommand{\Enc}{\mathsf{Enc}}
\newcommand{\Dec}{\mathsf{Dec}}
\newcommand{\st}{\mathsf{st}}
\newcommand{\msglen}{\ell_{\mathrm{msg}}}
\newcommand{\coverlen}{\ell_{\mathrm{cov}}}
\newcommand{\KL}{D_{\mathrm{KL}}}
\newcommand{\minH}{H_{\mathrm{min}}}
\newcommand{\supp}{X}
\newcommand{\Ex}{\mathbb{E}}
\newcommand{\GExfil}{\mathbf{Exfil}}
\newcommand{\GExfilR}{\mathbf{Exfil}^{\mathbf{r}}}
\newcommand{\calS}{\mathcal{S}}
\newcommand{\hT}{\hat{T}}
\newcommand{\RO}{\mathsf{H}}
\newcommand{\shade}[1]{\colorbox{gray!30}{#1}}
\newcommand{\wFSSL}{\warden_{\mathrm{FSSL},\tau}}
\newcommand{\context}{\mathcal{H}}
\newcommand{\ot}[1]{}
\title{Verifying LLM Inference to Detect Model Weight Exfiltration 
}
\author{%
  Roy Rinberg$^{1,2}$\thanks{Correspondence to \texttt{royrinberg@g.harvard.edu}} \\
  \texttt{royrinberg@g.harvard.edu} \\
\And
  Adam Karvonen$^{2}$ \\
\And
  Alexander Hoover$^{3}$ \\
\And
  Daniel Reuter$^{2}$ \\
\And
  Keri Warr$^{4}$ \\
  \\[2mm]
  \small
  $^{1}$Harvard University \quad
  $^{2}$ML Alignment \& Theory Scholars (MATS) \\
  $^{3}$Stevens Institute of Technology \quad
  $^{4}$Anthropic
}
\author{
    Roy Rinberg$^{1,2}$\thanks{\textit{Corresponding author}: \texttt{royrinberg@g.harvard.edu}} 
    \and
    Adam Karvonen $^{2}$ 
    \and
    Alexander Hoover $^{3}$ 
    \and
    Daniel Reuter $^{2}$ 
    \and
    Keri Warr$^{4}$
      \\[2mm]
  \small
  $^{1}$Harvard University \quad
  $^{2}$ML Alignment \& Theory Scholars (MATS) \\
\small  $^{3}$Stevens Institute of Technology \quad
  $^{4}$Anthropic
}
\date{}
\begin{document}
\maketitle

\begin{abstract}

As large AI models become increasingly valuable assets, the risk of model weight exfiltration from inference servers grows accordingly.
An attacker controlling an inference server may exfiltrate model weights by hiding them within ordinary model responses, a strategy known as steganography.
This work investigates how to \textit{verify} LLM model inference to defend against such attacks and, more broadly, to detect anomalous or buggy behavior during inference.
We formalize model weight exfiltration as a security game, propose a verification framework that can provably mitigate steganographic exfiltration, and specify the trust assumptions associated with our scheme.
To enable verification, we characterize valid sources of non-determinism in large language model inference and introduce two practical estimators for them.
We evaluate our detection framework on several open-weight models ranging from 3B to 30B parameters. 
On MOE-Qwen-30B, our detector reduces exfiltratable information to $<0.5\%$ with false-positive rate of $<10^{-2}$, corresponding to a $>200\times$ slowdown for adversaries.
Overall, this work further establishes a foundation for defending against model weight exfiltration and demonstrates that strong protection can be achieved with minimal additional cost to inference providers.
Our code is made available at: \href{https://github.com/RoyRin/inference_verification_for_model_weight_exfiltration}{github.com/RoyRin/inference\_verification\_for\_model\_weight\_exfiltration}.
\end{abstract}

\section{Introduction}

As Large Language Models (LLMs) become increasingly valuable assets, the need for robust defenses to protect them grows in parallel. Inference servers are of particular concern as they host the model weights themselves, and a single compromised server can cause severe damage: either through the \emph{exfiltration} of proprietary weights or by returning \emph{undesirable} or manipulated outputs during inference.

Malicious adversaries can exploit many potential channels, from infiltrating development environments and stealing backups to abusing side channels. While conventional security mechanisms can mitigate some of these risks, defending the inference pipeline poses a unique challenge. To support normal operation, inference providers must allow vast volumes of data to flow in and out of their data centers, often hundreds of gigabytes per day \cite{nevo2024securing}. This open communication channel creates an unavoidable tension between accessibility and security.


For attackers, as the amount of user data increases, so does the attractiveness of exfiltrating information through user data through \emph{steganography}, the practice of concealing information within seemingly innocuous data, such as embedding hidden payloads in generated text. In this work, we propose a scheme to \emph{monitor inference traffic} to detect potential exfiltration through steganography. 
Figure~\ref{fig:desired-vs-undesired-commmunication} depicts two contrasting scenarios: normal inference and compromised inference with exfiltration. 


\begin{figure}[h]
\centering
\subfloat[\textit{Normal inference flow}]{
    \includegraphics[width=0.48\linewidth]{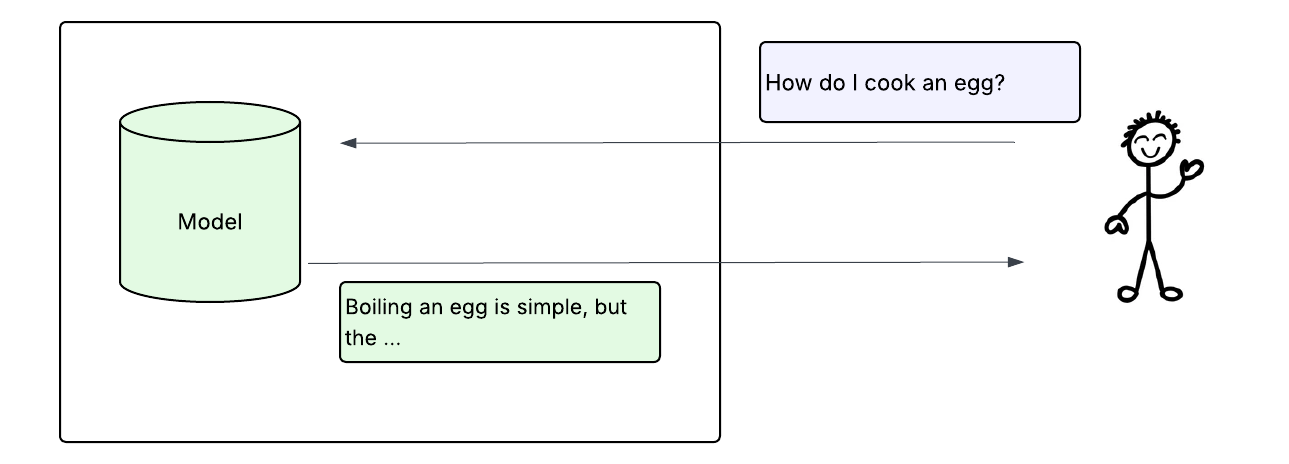}
}
\hfill
\subfloat[\textit{Compromised inference with exfiltration}]{
    \includegraphics[width=0.48\linewidth]{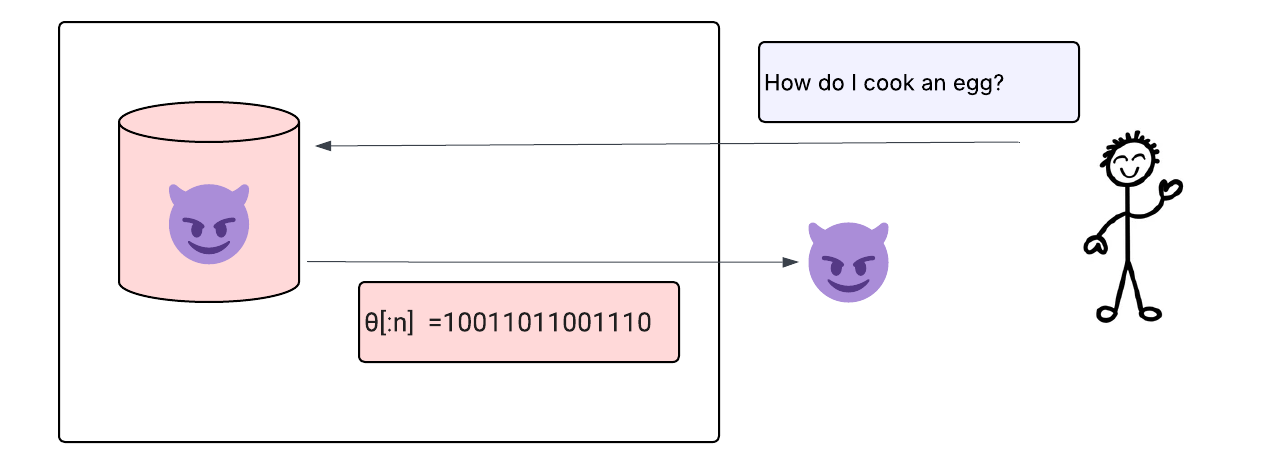}
}
\caption{
Comparison of normal vs.\ compromised inference flows. 
If the inference server is compromised model weights can be exfiltrated via inference responses. 
}
\label{fig:desired-vs-undesired-commmunication}
\end{figure}

Our work rests on a simple but powerful observation: LLM inference traffic follows a well-defined generative process, and any steganographic attempt to encode hidden information must, by necessity, \emph{deviate from expected model behavior}. 
The primary challenge is that LLMs naturally exhibit non-determinism; even identical runs of the same model and code can yield different outputs due to stochastic token prediction and floating-point variance. In this work, we provide methods for quantifying the likelihood of different tokens under this \textit{valid} non-determinism, as a way to prevent model exfiltration.\footnote{It's worth noting that our scheme protects against an adversary with access to a malicious inference stack, but does not provide defense against attacks where a model reveals its own weights as \textit{intended behavior} of the model.}



In this work, we introduce two methods for estimating the likelihood that each generated token was honestly produced by the claimed model (\GLSestimator~and \CGSestimator), which are concurrently developed by \cite{tokendifr}.
This enables us to measure deviations from expected model behavior and derive \emph{information-theoretic bounds} on the amount of information that even a capable adversary could exfiltrate through normal inference traffic. 
We propose a detection scheme that can be integrated into larger model security systems. 

\paragraph{Framing this Work in the Space of Preventing Model Weight Exfiltration}

Beyond technical design, we offer a framework for reasoning about model-weight exfiltration from both security-engineering and economic perspectives. 
Intuitively, exfiltration attempts that move large amounts of data quickly are easier to detect and respond to, whereas extremely slow leaks may be economically irrelevant to either party. 
We highlight three regimes:

\begin{enumerate}[noitemsep]
    \item \textbf{Setting 1: Fast exfiltration:} An extremely rapid leak that occur before human intervention is possible. For example, if an adversary compromises many endpoints and exfiltrates data from all of them simultaneously (e.g., 10{,}000 servers each leaking a 1/10{,}000th fraction of the weights).
    \item \textbf{Setting 2: Intermediate exfiltration:} A leak occurs over days or weeks, allowing incident-response teams to detect and potentially intervene.
    \item \textbf{Setting 3: Slow exfiltration:} Gradual leakage over months or years, which may render the stolen model no longer valuable by the time it is fully reconstructed.
\end{enumerate}

Fundamentally, by creating a scheme (any scheme) for detecting malicious behavior, this forces the adversary to make a choice: they can either get caught by the detection scheme, and deal with the consequences of being detected; or they must operate a covert or ``undetectable'' level of exfiltration.

This work focuses on this ``undetectable'' regime and designs detection tools
that provably limit its effectiveness. Based on our estimates, our proposed techniques
can extend effective response windows from days to months or years,
providing at least a $100\times$ reduction in feasible exfiltration rates.

\paragraph{Contributions.}
\textit{Inference verification} is the task of confirming that a given text output was produced by a specific model running correct inference, rather than by a tampered, substituted, or otherwise divergent process. In this work, we show that inference verification can be applied to the specific threat of steganographic weight exfiltration. We introduce a specific method of inference verification and show that it can be used to bound the capacity available to an adversarial provider trying to encode information in model outputs. This work makes the following contributions:

\begin{enumerate}[noitemsep]
      \item \textbf{Formalizes model-weight exfiltration as a security game} and derives strong bounds on exfiltration capacity when using the systems we propose.
      We formalize the exfiltration-detection interaction in Experiment~\ref{exp:exfaudit} and prove three bounds under varying adversarial models (Theorems~\ref{thm:perplexity-bound}, \ref{thm:no-seed-interactive-bound}, and~\ref{thm:no-seed-bound}). Along with these adversarial models, we propose architecture which can realize and limit potential adversaries in practice.

      \item \textbf{Introduces a verification framework} with multiple scoring methods that reconstruct fixed-seed sampling to detect deviations from expected model behavior.

      We propose the Fixed-Seed Sampling Likelihood (FSSL) as a method for estimating likelihood of being sampled, and we introduce two estimates for this, \GLSFull~(\GLS), and \CGSFull~(\CGS), which can be used for LLMs that sample with the Gumbel-Max trick and Inverse Probability Transforms, respectively.
      

      \item \textbf{Proposes a decoupled verification server architecture} that separates security from performance, enabling high-assurance computation on isolated hardware/software stacks without constraining inference throughput.
      The architecture requires minimal trust assumptions (detailed in Appendix~\ref{sec:trust-assumptions}) and has dual-use value for both security monitoring and reliability/bug detection.

      \item \textbf{Validates empirically on production systems} We evaluate exfiltration rates on Llama-3.1-8B, Llama-3.2-3B, Qwen1.5-MoE-7B, Qwen3-30B-A3B. For Qwen3-30B-A3B we demonstrate a ${<}0.5\%$ exfiltrable information rate with $10^{-2}\%$ FPR,  a ${\ge}200\times$ slowdown in attack. For a 1\,TB model with $1000$Gb of traffic per day, this extends exfiltration time from ${\sim}1$ day to ${>}200$ days.
  \end{enumerate}

Importantly, as our scheme detects deviations from expected behavior, it will flag not only malicious exfiltration attempts but also inference-time anomalies or bugs. As such, a further benefit of this detection scheme is that it serves a double-purpose and can provide a powerful tool for debugging and reliability monitoring.
From an infrastructure standpoint, the systems required for detecting malicious activity and those for detecting model failures are nearly identical.

\section{Related Works}
There is a lot of work in the space of detecting data exfiltration; in this work we specifically focus on approaches that hide information within a channel containing user data, as this is a novel attack surface area within LLM inference. This ties directly into existing literature on steganography and watermarking, as well as basic forms of content filtering.

\subsection{Steganography and Watermarking}

This section provides a brief overview of ideas and techniques in steganography and watermarking, which are related to the capabilities of the exfiltration adversary in our threat model.

\paragraph{Steganography}
Steganography is the problem of communicating covertly over a channel by embedding hidden information into otherwise innocuous messages. A steganographic scheme consists of an encoding algorithm that maps a secret message and (optionally) a key to a \emph{cover object}, and a decoding algorithm that recovers the hidden message from the resulting output. 

As a concrete example, one may embed a secret message into the least significant bits of an image's pixels. The modified image appears visually unchanged, yet a recipient who knows the embedding rule can recover the hidden message. More generally, modern steganographic schemes are analyzed with respect to an adversary who attempts to distinguish genuine channel outputs from stego outputs.

In this work, our goal is to detect steganographic schemes used to hide model-weight exfiltration attempts. We leverage standard definitions and results from the steganography literature when understanding the limitations of our proposal. For brevity, we defer formal syntax and foundational results to Appendix~\ref{sec:stego}. For other cryptographic primitives, such as pseudorandom functions and hash functions, we refer the reader to Katz and Lindell~\cite{KatLin14}.


\paragraph{Watermarking.}

Watermarking is a related technique often used to encode a single bit of information within a covertext: $1$ if the watermark is present and $0$ if the watermark is not. The embedded information can serve many purposes, such as asserting ownership, tracing provenance, or verifying authenticity.

Conceptually, steganography and watermarking lie on a spectrum, and it may be useful to interpret steganography as repeatedly running a watermarking scheme to encode more than a single bit.
In this light, while our work focuses on the task of steganography (and its detection), the connection to watermark allows us to pull from theoretical bounds in the watermarking literature.

There is a recent but long line of work in language model watermarking~\cite{Aar22,ICML:KGWKMG23,COLT:ChrGunZam24,CIC:FGJMMW23,ARXIV:KTHL23,pseudo_random_error_correcting,golowich2025editdistancerobustwatermarks,SP:CHS25}.  However, most works operate in a model that require the watermarker and detector to share only a small key with the goal of making the watermark as robust as possible. This means that even the highest throughput schemes only embed about $1$ bit per ``block,'' where a block is text that is being watermarked\footnote{The length of a ``block'' depends on the scheme and security parameter. In practice, we can roughly think of this as requiring 128-bits of empirical entropy.}~\cite{SP:CHS25}. 

In contrast, language model steganography~\cite{CCS:KJGR21} has a higher throughput rate, since it assumes text is unchanged. These schemes can almost reach capacity, embedding as many bits as the entropy of the response.




\subsection{Model Weight Exfiltration}

In the context of discussing model-weight exfiltration, a recent framework \cite{nevo2024securing} introduces the notions of security levels from 1 to 5, which increase in defensive capacity. \textbf{SL1} is a system that can likely thwart amateur attempts. \textbf{SL2} is a system that can likely thwart most professional opportunistic efforts by attackers that execute moderate-effort or nontargeted attacks.
\textbf{SL3} is a system that can likely thwart cybercrime syndicates or insider threats. \textbf{SL4} is system that can likely thwart most standard operations by leading cyber-capable institutions. \textbf{SL5} is a system that could plausibly be claimed to thwart most top-priority operations by the top cyber-capable institutions.

To set the grounding for this work, we will focus primarily on creating a scheme for moving from SL3 to SL4 and building the basis for SL5 security, by increasing the time required to to covertly exfiltrate from minutes and hours to days and years.

\subsection{Inference Verification}

As introduced above, inference verification is the task of confirming that text outputs were produced by a specific model running correct inference. Existing approaches span a wide range of trust assumptions and computational costs, from cryptographic guarantees to statistical tests.

\textbf{Cryptographic methods.} Zero-knowledge proofs (ZKPs) offer the strongest guarantees, allowing a verifier to confirm that a computation was performed correctly without re-executing it. Recent work has applied ZKPs to neural network inference \cite{chen2025zktorch, sun2024zkllm}, but the computational overhead remains prohibitive: proof generation for LLaMA-2-scale models requires on the order of hundreds to thousands of seconds per token, making these approaches impractical for production deployments.

\textbf{Internal state fingerprinting.} A more tractable class of methods verifies inference by comparing intermediate activations between a provider's run and a verifier's replay. LOGIC \cite{singh2025logic} collects top-$k$ log probabilities at sampled positions, while TOPLOC \cite{ong2025toploclocalitysensitivehashing} captures top-$k$ activation indices and values from the final hidden layer, encoding them as a polynomial to reduce communication costs. Both methods work by replay: the verifier re-executes inference on the provider's output sequence and checks that the resulting activations match. However, internal state fingerprinting only confirms that a sequence is \textit{consistent} with a given forward pass, not that it was actually produced by the model's sampling procedure. A malicious provider could generate arbitrary text, post-hoc compute matching fingerprints, and pass verification, making these methods proofs of reconstructability rather than generation authenticity.

\textbf{Distributional tests.} \cite{gao2025modelequalitytestingmodel} frame inference verification as a two-sample test between a trusted distribution and an API distribution, using Maximum Mean Discrepancy with string kernels. \cite{zhu2025auditingblackboxllmapis} propose a rank-based uniformity test that checks whether target API outputs are statistically consistent with a reference model's distribution over many samples. These approaches are fully black-box but verify only distributional similarity, not per-generation correctness.

\textbf{Shared seed verification.} Concurrently with this work, Karvonen et al.\ introduce \GLSFull~ \cite{tokendifr}. \GLS~compares output tokens directly against a trusted reference implementation conditioned on the same seed and is able to detect 4-bit quantization with AUC $> 0.999$ within 300 output tokens at zero communication overhead. Karvonen et al. also introduce Activation-DiFR which complements this by projecting intermediate activations through random orthogonal projections into compact fingerprints, achieving the same detection accuracy with as few as 2 tokens while reducing communication costs by 25--75\% relative to prior methods.

\subsection{Existing Model Weight Exfiltration Detection methods}

One naive attack to exfiltrate data through user traffic would be to exfiltrate weights directly through output messages, by sending the model weights as \textit{plaintext}. A first line of defense against such a naive weight exfiltration technique is to detect messages that deviate from ordinary conversational patterns. Two simple, independent approaches can be implemented here.

\textbf{Syntactic filtering.} A basic scheme applies regular expressions and lightweight heuristics to flag messages with suspicious surface patterns, such as long sequences of numbers or other structured encodings. While this approach is trivial to deploy it is also easily evaded by re-encoding information (e.g., mapping numbers to words or distributing bits across multiple tokens).

\textbf{Perplexity filtering.} A slightly more sophisticated defense uses a reference LLM trained on benign message traffic to compute sequence-level \emph{perplexity}, a standard measure of how well a model predicts a sequence. Perplexity is a function of the log-likelihood of each token, where lower values indicate that the model finds the text highly predictable, while higher values suggest the text is surprising or out-of-distribution. Messages with unusually high perplexity are flagged as potential exfiltration attempts \cite{perplexityfilter}. This can detect outputs that are syntactically normal but statistically implausible under the benign distribution, though adaptive adversaries can still craft messages that maintain natural perplexity while encoding information.

Both schemes are simple to run and provide a low-cost first defense, but they leave a substantial gap between benign and adversarial outputs. A powerful adversary can easily create a steganographic technique that only modifies words and tokens that leave the perplexity of the whole sequence low.

\section{LLM Inference Preliminaries}

Because our exfiltration attacks operate by modifying the inference procedure, we begin by reviewing how modern LLMs generate text. In standard implementations, inference is designed to be stochastic: given a fixed model and a fixed random seed, sampling-based decoding should be reproducible.

At the algorithmic level, this suggests a clean separation between deterministic computation (forward passes, logit computation) and controlled randomness (token sampling that takes a seed). However, in practice, hardware and software effects introduce additional sources of nondeterminism; and as a result, even with a fixed seed, repeated runs of inference often do not produce identical outputs.

In this section, we formalize which components of inference are intended to be deterministic, which are intentionally randomized, and which remain nondeterministic despite the use of a seed.

\subsection{Basic Setting (LLM inference)}
\paragraph{Decode versus prefill.}

Transformer inference consists of two phases: \emph{prefill} processes the input prompt in a single parallel forward pass, while \emph{decode} autoregressively generates output tokens one at a time. Though both phases have similar asymptotic complexity, prefill achieves significantly higher throughput in practice (typically 3–10×) because it is compute-bound and highly parallelizable, whereas decode is memory-bandwidth bound due to repeatedly loading the key–value cache~\cite{erdil2025inferenceeconomicslanguagemodels}. This makes prefill particularly advantageous for verification or evaluation tasks that do not require autoregressive generation.

\subsection{Sampling From Large Language Models}
Sampling from a Large Language Model can be thought of in two steps: (1) generating a probability distribution, and (2) sampling a token from it. For step (1), given logits $z \in \mathbb{R}^{|\Toks|}$ over a vocabulary $\Toks,$ and a temperature parameter $T$, the model induces a categorical distribution
\[
    p_i^{(T)}  = \mathrm{softmax}\!\left(\frac{z}{T}\right)_i
    = \frac{\exp(z_i / T)}{\sum_{j=1}^{|\Toks|} \exp(z_j / T)}.
\]
The temperature $T$ controls the sharpness of the distribution: as $T \to 0$, the distribution concentrates on the maximum-logit token (approaching greedy decoding), while larger $T$ produces a flatter distribution and increases randomness.
After optional filtering steps such as top-$k$, top-$p$, or min-$p$, the model samples from the resulting categorical distribution. To sample a token from a probability distribution, many modern LLMs use either the \emph{Inverse Probability Transform} (e.g., Ollama~\cite{ollama_docs}, HuggingFace~\cite{huggingface}) or the \emph{Gumbel-Max Trick} (e.g., vLLM~\cite{vllm}).
This sampling process relies on a pseudorandom generator, which takes as an input a sampling seed $\sigma$ and counter $i$ as an input, and returns a value $x$.

\paragraph{Inverse Probability Transform.}
From a fixed random seed $\sigma$, a (pseudo)random number $u \sim \mathrm{Uniform}(0,1)$ is drawn for each token. The algorithm then computes the cumulative sum of the token probabilities $\vec{p}$ and selects the first token whose cumulative mass exceeds $u$. Conceptually, this corresponds to inverting the cumulative distribution function (CDF) of $\vec{p}$. The method is exact, efficient, and deterministic when both $\vec{p}$ and $\sigma$ are fixed; repeated sampling under identical conditions will always yield the same token. Here we use the notation $F(\cdot ; r)$ to denote running a randomized
function $F$ with randomness $r$, making it (ideally) deterministic. We also
use a deterministic hash function $\RO$ to expand $\seed$ into reproducible pseudorandomness.\footnote{In practice,
we may only need a pseudorandom function, since our security theorems
only requires that the output is pseudorandom to an adversary.}

\begin{algorithm}[H]
\caption{Inverse-Probability-Transform (IPT) Sampler}
\label{alg:ipt}
\begin{algorithmic}[1]
\REQUIRE Seed $\seed$, context position $i$,
probability vector $\mathbf{p}\in[0,1]^{|\Toks|}$ with $\sum_{t=1}^{|\Toks|} p_t=1$
\STATE $\mathbf{C} \gets \textsc{CumulativeSum}(\mathbf{p})$ \COMMENT{Convert PDF $\mathbf{p}$ to CDF $\mathbf{C}$}
\STATE $u \gets \textsc{Uniform}([0;1]; \RO(\seed\| i))$
\COMMENT{Use token index $i$}
\STATE \RETURN $\min\{t\in\{1,\dots,|\Toks|\}: C_t>u\}$\COMMENT{binary search: find smallest $t$ s.t. $C_t>u$}
\end{algorithmic}
\end{algorithm}

\paragraph{Gumbel-Max Trick.}
The Gumbel-Max Trick provides an equivalent but reparameterized way to sample from $\vec{p}$. From a fixed random seed $\sigma$, for the $t$th token, an independent Gumbel noise term $g_t \sim \mathrm{Gumbel}(0,1)$, determined by $\seed,$ is added to the (log) probability:
\(
    z_t = \log p_t + g_t.
\)
The selected token is the index of the maximum perturbed score,
$\arg\max_t z_t$.
Intuitively, this converts sampling from a categorical distribution into a deterministic \textit{argmax} over noisy logits. Like the inverse transform, it is reproducible under a fixed seed $\seed$, since the Gumbel variables $g_t$ are deterministically derived from $\seed$. This trick is particularly convenient for differentiable relaxations (e.g., Gumbel-Softmax) and for implementations optimized for vectorized or GPU-based sampling.

\begin{algorithm}[H]
\caption{Gumbel-Max Sampler}
\label{alg:gumbelmax-simple}
\begin{algorithmic}[1]
\REQUIRE Seed $\seed$, context position $i$, probability vector $\vec{p}\in[0,1]^{|\Toks|}$ (over tokens $\Toks$, $\sum_t p_t=1$)
\STATE $\vec{z} \gets \textsc{Gumbel}([0,1]^{|\Toks|}; \RO(\seed\| i))$
\COMMENT{each $z_t \sim \textsc{Gumbel}(0,1)$}
\STATE $\hat{t} \gets \arg\max_{t \in \Toks} \big(p_t + z_t\big)$
\COMMENT{$\arg\max$ returns the \emph{token} $\hat{t}$ achieving the maximum sum}
\STATE \RETURN $\hat{t}$
\end{algorithmic}
\end{algorithm}

\paragraph{Sampling Multiple Tokens.}
We use the notation $\dist^{\RO}_{\theta,\context}(\seed)$ to denote the
honest sampling of a full response for a specific model $\theta,$ context
$\context,$ seed $\seed,$ and hash function (or PRF) $\RO.$ This algorithm
first samples the distribution $\vec{p}$ using $\theta$ and $\context.$
Next, $\dist$ will use one of the two above approaches for selecting a
specific token $t$ from the distribution using the randomness from
$\RO(\seed\|i),$ where $i$ is the current response index.\footnote{For one
of our bounds, we
require that randomness is sampled from $\RO(\seed\|i\|\context)$, which
we call \emph{sequential} randomness expansion.}
Then, $t$ is added to the context $\context$ and
the process repeats until a termination symbol or response length limit is
reached.

\subsection{Non-determinism in Machine Learning} \label{sec:nondeterminism}

Despite the amount of reproducibility of the methods above,
fixing the random seed (even with temperature $T{=}0$) does not guarantee identical outputs. Non-determinism  arises because floating-point arithmetic is non-associative ($(a{+}b){+}c \neq a{+}(b{+}c)$): different GPU kernels or shapes can accumulate tensors in different orders, yielding different values. Batch variance, where kernel computation strategies depend on the batch size, which depends on the concurrent load, is the most common source of numerical noise. Inference optimizations further shift computation order; batch size, attention chunking/prefix caching, and cache reuse, and MoE models all add routing variability that can depend on load and tiny numerical perturbations \cite{he2025nondeterminism}.

Identical open-weight models have shown cross-provider performance discrepancies, consistent with differences in templates, quantization/serving stacks, and the variability sources above \cite{willison_nondeterminism}.


We empirically characterize this non-determinism by querying the OpenAI API with fixed seeds across thousands of completions; the results confirm that divergences are sparse but present, concentrating at a small number of token positions (see Appendix~\ref{sec:base-rates-of-nondeterminism}).

\subsection{Valid Non-determinism with a Fixed Seed: The Fixed-Seed Posterior Distribution} \label{sec:fixed-seed-posterior}

While Inverse-Probability Transform and Gumbel-Max Trick are mathematically deterministic, in practice the logit distribution $\vec{p}_\theta(x)$ is produced by a \emph{non-deterministic} computation. Even with a fixed seed $\sigma$, floating-point imprecision, asynchronous GPU execution, and nondeterministic kernel scheduling can cause small but measurable differences in $\vec{p}_\theta(x)$ across runs. This effect arises even in benign settings, independent of any adversarial manipulation, and has been well-documented in prior work~\cite{yuan2025fp32deathchallengessolutions}. We refer to this inherent variability as \textit{valid non-determinism}.

To model this behavior, we view the model's computed probability vector not as a single deterministic mapping $\vec{p}_\theta(x)$, but as a random draw from a higher-order \emph{distribution over distributions}:
\[
\vec{p}_\theta(x) \sim \mathcal{P}_\theta(x),
\]
where $\mathcal{P}_\theta(x)$ represents the stochastic process generating the token-level probabilities.
This perspective allows us to define heuristic distance measures between two runs of the same model (with identical seed $\sigma$) by comparing samples from $\mathcal{P}_\theta(x)$, illustrated conceptually in Figure~\ref{fig:posterior-distribution-comparison}.

\paragraph{Fixed-Seed Posterior Distribution.}
We begin with a general abstraction that captures the non-determinism induced by stochastic sampling under a fixed seed.
Let $\mathrm{Sample}(\cdot, \sigma)$ denote a generic sampling procedure that maps a probability distribution $p$ over tokens $\mathcal{V}$ and a seed $\sigma$ to a token $t \in \mathcal{V}$:
\[
t = \mathrm{Sample}(p, \sigma).
\]
Given a distribution over probability vectors $p \sim \mathcal{P}_\theta(x)$, the \emph{fixed-seed posterior distribution} is defined as:
\[
P_{\mathrm{seed}}(t \mid \sigma)
    = \Pr_{p \sim \mathcal{P}_\theta(x)} \big[\, \mathrm{Sample}(p, \sigma) = t \,\big].
\]
This represents the probability, over draws of the underlying probability distribution $p$, that a fixed random seed $\sigma$ yields token $t$.
Intuitively, it characterizes how much stochastic variability remains in the sampling outcome even when the seed is fixed, due to non-determinism in the generation of $p$.

\vspace{0.5em}
\noindent
\textbf{Example: Inverse Probability Transform.}
For concreteness, consider the case where $\mathrm{Sample}(\cdot, \sigma)$ implements the inverse probability transform.
Let $\mathcal{P} = \{ P_i \}_{i=1}^n$ denote a family of token-level probability distributions,
where each $P_i$ defines a conditional distribution over tokens $t \in \mathcal{V}$ given model context $x_i$:
\[
P_i(t) = \Pr[T = t \mid X = x_i].
\]
Let $R : \Sigma \to [0,1]$ denote a deterministic randomization function (e.g., a pseudorandom generator) mapping a seed $\sigma \in \Sigma$ to a uniform draw $R(\sigma) \sim \mathrm{Uniform}(0,1)$.
For each $P_i$, define the seed-conditioned token selection function
\[
S_i(\sigma) = \min \{\, t \in \mathcal{V} : F_i(t) \ge R(\sigma) \,\},
\]
where $F_i$ is the cumulative distribution function (CDF) corresponding to $P_i$.

Then, the fixed-seed posterior distribution for the inverse probability transform is:
\[
P_{\mathrm{seed}}(t \mid R(\sigma) = r)
= \Pr_{P_i \sim \mathcal{P}} \big[\, S_i(\sigma) = t \,\big],
\]
representing the probability (over draws of $P_i$) that a fixed seed realization $\sigma$ produces token $t$ under this specific sampling mechanism. This abstraction makes clear that while the implementation details of sampling (e.g., inverse transform vs.\ Gumbel-Max) differ, each induces its own fixed-seed posterior distribution characterizing the residual randomness observable under valid non-determinism.

Intuitively, this distribution captures how much apparent randomness remains even when the seed is fixed. Figures~\ref{fig:probability-given-random-sample} and~\ref{fig:gumble-argmax-flipping-score-diagram} visualize the resulting fixed-seed posterior distributions for the inverse probability transform and the Gumbel-Max mechanism, respectively. These distributions form the basis for reasoning about an honest divergence during verification (as opposed to potentially malicious nondeterminism).

\begin{figure}[ht!]
    \centering
    \begin{subfigure}[t]{0.38\linewidth}
        \centering
        \includegraphics[width=\linewidth]{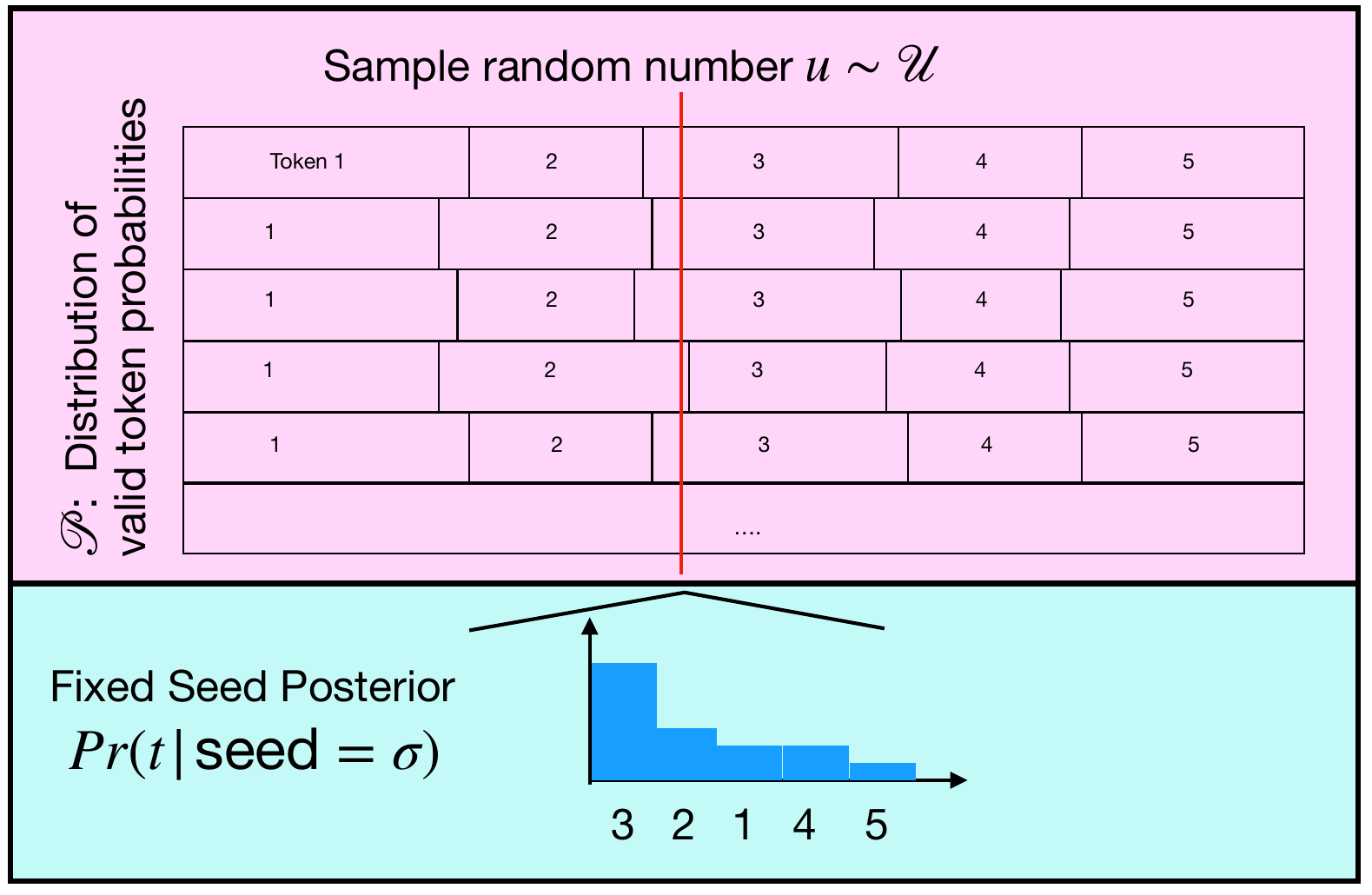}
        \caption{Fixed-seed posterior distribution for the inverse probability transform, incorporating noise in token probabilities.}
        \label{fig:probability-given-random-sample}
    \end{subfigure}
    \hfill
    \begin{subfigure}[t]{0.6\linewidth}
        \centering
        \includegraphics[width=\linewidth]{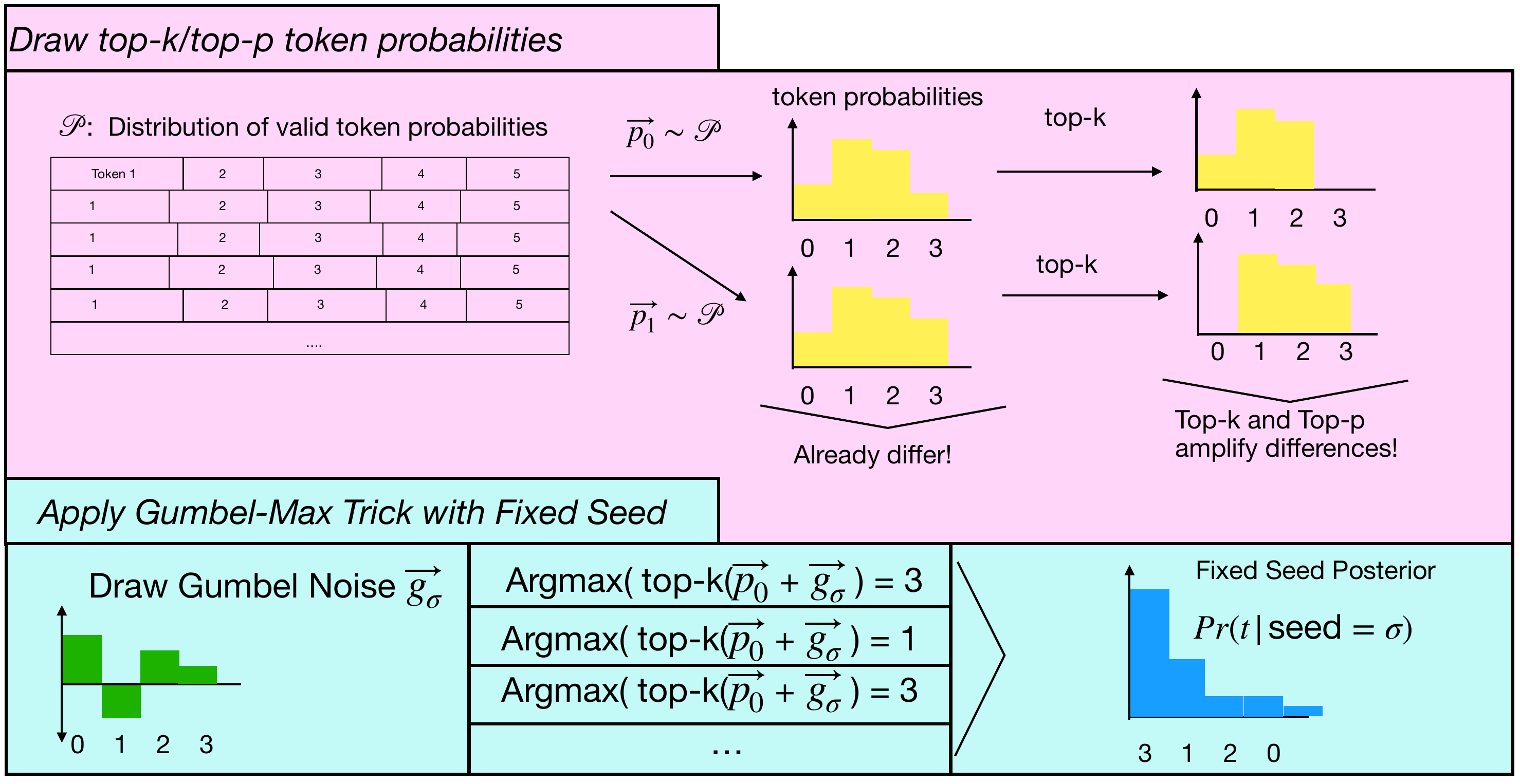}
        \caption{Fixed-seed posterior distribution for the Gumbel-Max mechanism under the same noisy conditions.}
        \label{fig:gumble-argmax-flipping-score-diagram}
    \end{subfigure}
    \caption{Comparison of fixed-seed posterior distributions under two sampling mechanisms, illustrating variability arising from valid non-determinism in $\vec{p}_\theta(x)$. }
    \label{fig:posterior-distribution-comparison}
\end{figure}

Understanding the limits of valid non-determinism is central to limiting the exfiltration rate of adversaries. Under very minimal trust assumptions (Appendix~\ref{sec:trust-assumptions}), the only communication channel open to the adversary is by hiding communication within this honest variability.

\section{Formulating the Model Weight Exfiltration Security Game}\label{sec:problem}

In this section, we provide a concrete specification for the exfiltration problem that we solve in this paper. At a high-level, we consider the setting of LLM inference where there is an inference server and a chat server; we assume the inference server is fully corrupted by an adversary whose goal is to exfiltrate the inference model weights to an outside party, with whom they can coordinate. We assume that the adversary does not have access to any side-channels, eliminating exfiltration techniques such as  using timing channel attacks, uploading the weights to a server through a developer's laptop, or \textit{physically stealing the device holding the model}. In many ways, preventing model theft is not different from a traditional security problem attempting to prevent any kind of exfiltration; there are several main differentiating factors associated with model-weight exfiltration.

\paragraph{\textbf{Advantages to the Defender:}} There are two main advantages to the defender: (i) Model weights are large; (ii)
The type of operations and data that a model handles and outputs is largely predictable, namely: matrix multiplications and text data.

\paragraph{\textbf{Advantages to the Attacker:}} The main advantage to the attacker is that generally users expect to be able to make an extremely large number of calls that access the model (weights) per day.


\subsection{Security Game}

We define a simple extraction game, which we later be used to bound the success probability of an adversary
extracting information, in Section~\ref{sec:info-theory-bounds}.
The formal description is given in Experiment~\ref{exp:exfaudit}, which
is parameterized by:

\begin{itemize}[noitemsep]
    \item $\Toks$, the token set (i.e., vocabulary or alphabet) of
    the adversary's message.
    \item $\RO$, is a hash or pseudorandom function, modeled as a
    random oracle for simplicity.
    \item $\mathcal{A}_0$, $\mathcal{A}_1$, $\mathcal{A}_2$, the adversary's
    preprocessing, encoding, and decoding algorithms respectively.
    \item $\dist_{\theta, \context}(\seed)$, the honest response distribution
    , determined by the model weights $\theta$, context (or prompt)
    $\context$, and seed  $\seed$. In our context, even after all these
    are given, the output is a \emph{randomized} because of ``invalid
    nondeterminism'' (Section~\ref{sec:nondeterminism}). In our security games,
    we suppress the $\theta$ and $\context$ parameters and treat them as
    fixed for the given adversary.
    \item $\warden$ (warden), the policy algorithm to determine whether to flag
    a message.
    \item $n$, the number of bits which the adversary is challenged
    to exfiltrate (in our setting, this is the length of the model weights).
    \item $\ell$, the length of the adversary's message.
    \item $\secpar$, the seed length of the model's randomness and the security parameter of the system.
\end{itemize}
Success in this security game is defined by the adversary successfully
reconstructing the challenge bits from the message (and any other
precomputed material) itself. We define security for computationally bounded
adversaries and policies as well as unbounded algorithms.

In our games, we suppress the notation of the model weights $\theta$ and
the prompt $P$. We abstract the prompt out of the security game because
a practical adversary will have already picked some prompt $P$ on which they
will try and perform the exfiltration. And we care about limiting the
exfiltration rate on this particular prompt. We also choose to model an
adversary trying to exfiltrate some
information about the incompressible model weights $\theta$ as the adversary
sending and recovering a random bit-string. This allows us to further abstract
away the model weights from the security game.

Note this security has been explored in the context of steganography,
which we draw on directly. However, we are not interested in proving
security against all possible policies/wardens. One can view our
results for specific \emph{lower bounds} for steganography in a particular
realistic context.

\begin{figure}[htbp]
    \centering
    \includegraphics[width=.6\linewidth]{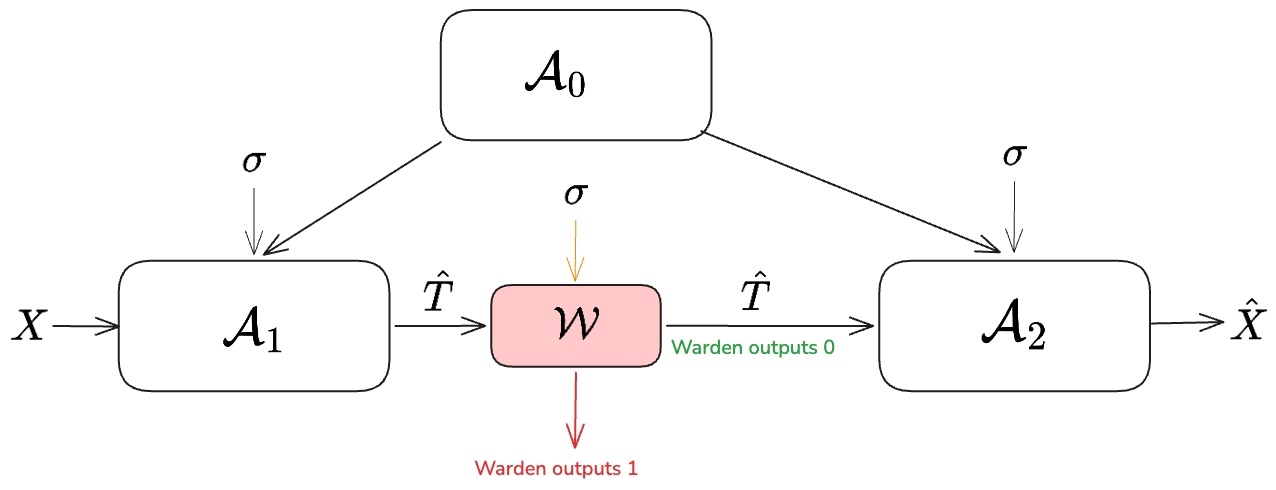}
    \caption{Outline of the security game described in Experiment \ref{exp:exfaudit}.}
    \label{fig:security-game-diagram}
\end{figure}

\begin{experiment}[Exfiltration Experiment]
\label{exp:exfaudit}
\textbf{Parameters.}
Let positive integers $n$ be the challenge bit length, $\ell$ be the
maximum response length, and $\secpar$ be the security parameter.
Let $\Toks$ be the token set (vocabulary) of messages and
$\dist^{\RO}(\sigma)$ be a randomized function from $\bits^{\secpar}$ to
$\Toks^{\le \ell}$. Then, for a policy $\warden$ and adversary
$\A = (\A_0,\A_1,\A_2)$, we define the games:
\\
\noindent\underline{$\GExfil_{\warden,\A,\dist}(\secpar,n,\ell)$}
\quad \shade{\underline{$\GExfilR_{\warden,\A,\dist}(\secpar,n,\ell)$}}
  \begin{enumerate}[noitemsep]
  \item Initialize a random oracle $\RO : \bits^{*} \to \bits$
  \item $\st_1, \st_2 \gets \A_0^{\RO}(1^{\secpar},1^{n},1^{\ell})$
  \quad\quad\quad\texttt{// preprocessing}
  \item $X \gets \bits^n$~~;~~$\sigma \gets \bits^\secpar$
  \quad\quad\texttt{// sample challenge and seed}
  \item $\hT \gets \A_1^{\RO}(\st_1,X,\sigma)$
  \quad\quad\quad\quad\quad\texttt{// generate cover response}
  \item If $\warden^{\RO}(\sigma,\hT) = 1$, then output $\bot$
  \item $\hat{X} \gets \A_2^{\RO}(\st_2,\hT,\sigma)$
  ~~\shade{$\hat{X} \gets \A_2^{\RO}(\st_2,\hT)$}
  \quad\texttt{// recovery}
  \item If $\hat{X} = X$, then output $1$
  \item Else, output $0$.
\end{enumerate}
These definitions are presented in
the \emph{random oracle model}~\cite{CCS:BelRog93}, which enables some of
our later results. Note that all parties know and have access to
$\dist^{\RO}.$

The first game $\GExfil$ is the standard game, where both parties are aware of
the seed that is used to sample the distribution. However, in cases when
the seed $\sigma$ is sampled securely (and $\A_2$ does not get it directly),
we are able to obtain stronger bounds in the more restricted game $\GExfilR$.
\end{experiment}

\begin{definition}[$(\alpha,\varepsilon)$-exfiltration secure]
A policy $\warden$  is (computationally)
\emph{$(\alpha,\varepsilon)$-exfiltration secure} if:
\begin{enumerate}[noitemsep]
  \item  For all $\sigma$,
  \(\Pr_{T \sim \dist^{\RO}(\sigma)}\big[\warden^{\RO}(\seed,T)= 1 \big] \le \alpha,\)
  i.e., $\warden$ has false positive rate $\alpha$,

  \item and for all (PPT) adversaries $\A$,
  \(
      \Pr\big[\GExfil_{\warden,\A,\dist}(\secpar,n,\ell) = 1\big] - 2^{-n} \le \varepsilon.
  \)
\end{enumerate}
Note that $\alpha$ and $\varepsilon$ are, in general, functions of the
parameters $\secpar$, $n$, $\ell$, or even $\dist$. If the above holds only
for $\GExfilR$ rather than $\GExfil$, then we say it $\warden$ is secure
\emph{against seed agnostic decoders}.

Where relevant, we will also consider adversaries with restricted access to
the random oracle $\RO$.


\end{definition}

\subsection{Threat Model: Baseline Trust Assumptions}\label{sec:adversarial-model}

We consider an idealized setting of the model-weight exfiltration problem.
In particular, we do not attempt to defend against adversaries with unrestricted physical access to hardware (e.g., physically stealing servers or storage devices to extract model weights).
Our focus is on protecting model parameters in environments where security is enforced through software, networking, and operational controls.
A more detailed discussion of these assumptions is provided in Appendix~\ref{sec:trust-assumptions}.
\vspace{-3mm}
\paragraph{No Side Channels.} We assume that adversaries cannot exploit any side channels (e.g., timing, cache access, or power analysis) to extract information. The only interface between the inference server and the external environment is the API server, which mediates all communication.
\vspace{-3mm}
\paragraph{External Attacker Control over User Queries.} We assume that external adversaries may fully control the contents of user queries, including model inputs, random seeds, sampling parameters (e.g., temperature, top-$k$, and top-$p$), and query rates. This represents a worst-case scenario for query-based extraction attempts.
\vspace{-3mm}
\paragraph{Internal Attacker Control over Outbound Messages.} We assume that an internal adversary (e.g., one with access to the inference server) may control all outbound inference messages. That is, they can arbitrarily modify or encode information within model outputs before they are transmitted externally.
\vspace{-3mm}
\paragraph{Trusted Verification Code.} We assume that the verification server is honest and uncompromised. While the attacker may control the client side of the API or manipulate request patterns, the verification process itself is assumed to execute faithfully and does not introduce additional exfiltration risk.
\vspace{-3mm}
\paragraph{Trusted Logging.} We assume access to a secure, append-only logging mechanism that records every input, output, and associated seed for all messages transmitted from the inference server to the API gateway. These logs are assumed to be tamper-evident and verifiable.
\vspace{-3mm}
\paragraph{Random Sampling for Verification is Private and Non-manipulable.} We assume the existence of a private, non-manipulable process for randomly sampling input–output pairs for verification. If this process were predictable or observable, an adversary could infer which pairs will be re-verified and exfiltrate information through the remaining unchecked traffic. The base assumption is that an adversary can neither predict the next message to be sampled, nor modify the probability that a message is sampled next.
\vspace{-3mm}
\paragraph{Paired Input–Output Messages.} We assume that every output message produced by the inference server is associated with a corresponding input message. Messages that cannot be paired with a valid input are not permitted to leave the datacenter.

\section{Theoretical Framework: Verification System}\label{sec:verification-system}

This section introduces, at an abstract level, our proposed \emph{verification system} to mitigate against potential exfiltration.
We defer implementation details and additional system-level choices to Section~\ref{sec:implement-verif-server}.
The goal of the verification system is to detect suspicious deviations from a trusted baseline model during inference. While our primary motivation is to identify potential model-weight exfiltration, the same framework applies to general distributional or behavioral anomalies in model outputs.

At a high level, the system operates as follows (illustrated in Figure~\ref{fig:headline-scheme}):

\begin{figure}[h]
    \centering
    \includegraphics[width=0.8\textwidth]{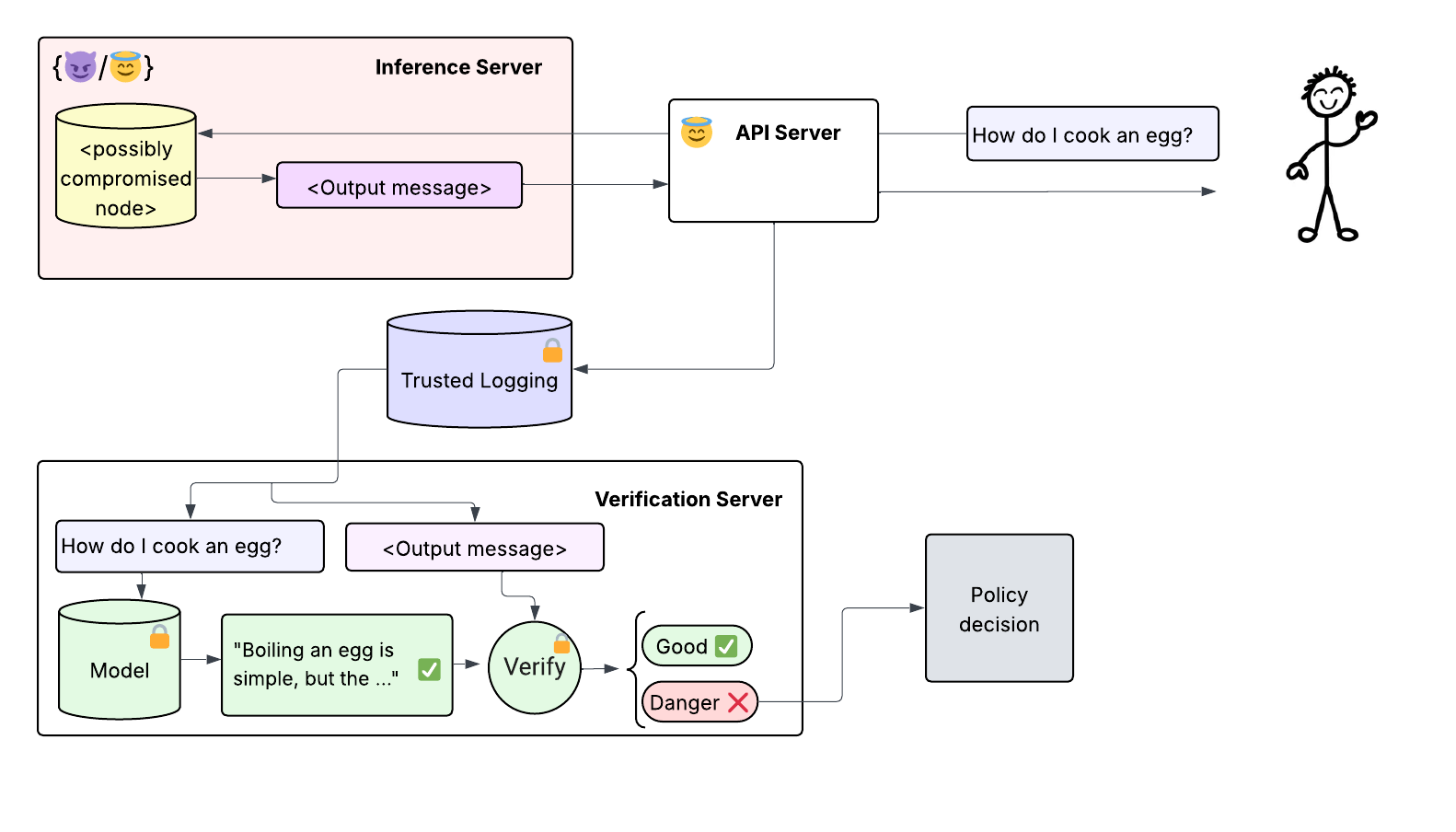}
    \vspace{-4mm}
    \caption{Overview of the verification system for runtime output validation. A lock symbol denotes parts of the inference verification scheme that require high integrity, further explored in Section \ref{sec:trust-assumptions}}
    \label{fig:headline-scheme}
\end{figure}

\vspace{-4mm}

\begin{enumerate}[noitemsep]
    \item \textbf{Inference Logging.}
    The API Server logs all inference inputs and outputs, together with references to the exact model release, code version, and the seed used for that inference.\footnote{In the event that not every (input, output, seed) tuple is verified, randomness is critical in the process of selecting which messages to verify is essential. If the messages-to-be-verified is predictable, an adversary could selectively encode sensitive data in unverified outputs.}

    \item \textbf{Verification.}
    An isolated \emph{Verification Server} re-executes a random subset of logged inferences using a trusted version of the model and codebase.
    \item \textbf{Analysis and Escalation.}
    The security team reviews any significant discrepancies between verified and observed outputs, triggering deeper audits when divergences exceed predefined statistical thresholds.
\end{enumerate}
The degree of isolation of the Verification Server is a tunable security parameter. In minimal deployments, this can be enforced through access controls and containerization; in the extreme, the verification server may be fully air-gapped.
All trust assumptions are detailed in Appendix~\ref{sec:trust-assumptions}.

\subsection{Sampled-Token Verification via Exact Matching}

A core weakness of the naive approach is the mismatch between the \emph{verification} program and the \emph{generation} program. To eliminate this, we use the \textbf{same model and configuration} for both inference and verification. The procedure is simple: re-run generation under the recorded seed and sampling parameters, compare the regenerated token at each position to the claimed token, and count mismatches. This yields a transparent, token-level exact-match score for the sequence. However, tokens that fail verification allow an adversary to potentially encode up to $\log_2(\text{vocab\_size})$ bits per mismatched token.

For efficiency, the same idea can be executed without iterative decoding. Using \emph{prefill}, a single forward pass over the input{+}output yields the next-token distributions at every position; sampling (with the recorded seed) and exact matching can then be done position-wise in parallel. This preserves teacher-forcing semantics while running significantly faster. Full pseudocode appears in Appendix~\ref{sec:naive-sampled-token-verif}.

\subsection{Classification using the Fixed-Seed Sampling Likelihood Policy}
\label{sec:fixed-seed-sampling-likelihood}

In Section~\ref{sec:fixed-seed-posterior}, we introduced the \emph{fixed-seed posterior distribution}, which captures the stochastic variability of model outputs that can arise under honest non-determinism.
Building on this, we define a practical decision rule for distinguishing between \emph{valid} and \emph{suspicious} generations.  
Our approach centers on estimating the likelihood that a given output token $t$ was honestly generated under a fixed seed $\sigma$ and the model’s nominal configuration.  
We refer to this value as the \emph{Fixed-Seed Sampling Likelihood (FSSL)}. This is simply the fixed-seed posterior distribution evaluated at a specific sampled token. Intuitively, this measures how consistent a token is with the model’s expected non-deterministic behavior under fixed-seed execution.

\paragraph{Classification via Thresholding.}
Given an estimate of $\mathrm{FSSL}(t\mid\sigma)$, we use a simple threshold rule:
\[
\text{flag } t \ \text{if}\ \mathrm{FSSL}(t\mid\sigma) < \tau,
\]
with $\tau$ calibrated on benign traffic to meet a target false–positive rate. 
Per-token decisions (or summed log-likelihoods) can be aggregated over sequences, users, or time windows for higher-level alerts.
We instantiate estimates of the \FSSL for two samplers: we introduce \emph{\GLSFull} to estimate \FSSL quantity for samplers that use the Gumbel-Max Trick, ~\GLS, and we introduce \emph{\CGSFull~(\CGS)} for samplers that use Inverse Probability Transforms for sampling, which estimates \CGS. Subsequent sections derive their estimators and evaluate performance.


\subsection{Theoretical Exfiltration Limitations}\label{sec:info-theory-bounds}
Our proposed architecture allows us to provably bound the ability of
any covert adversary to exfiltrate data out of the server under
mild trust and cryptographic assumptions (see Appendix~\ref{sec:trust-assumptions}).
Recall that our dichotomy is to \emph{either} have the adversary
get caught or else limit the exfiltration rate of the adversary. This
section is concerned with proving limits in the latter case. In other words,
we will bound how \emph{quickly} an adversary can exfiltrate without getting flagged. This is the best we can hope to prove in our context, without assumptions on the adversaries exfiltration pattern or frequency. We discuss later in this section how our theorems are related to the Exfiltration game outlined in early sections.

We prove three different bounds for the FSSL policy (Section~\ref{sec:fixed-seed-sampling-likelihood}), which apply under different architectures and assumptions.
The first is the most general and applies to any adversary under the
minimal assumptions explicated in Appendix~\ref{sec:trust-assumptions} and using the
simple FSSL filter we propose in Section~\ref{sec:fixed-seed-sampling-likelihood}.
The remaining two are tighter bounds which require additional
assumptions.
We go on to discuss the plausibility and tightness of the bounds. 

\paragraph{Rate limitation for the FSSL policy.}
First, we recall Theorem~\ref{thm:stego-limits}, which shows that any
steganography scheme can only transmit as much information as is in the entropy
of its output distribution. We combine this with two other simple observations
to give rate limitation for our FSSL policy. Because we bound with respect to such a simple policy, this bound will also apply
to any the optimizations and augmented policies discussed in Section~\ref{sec:implement-verif-server}.

Theorem~\ref{thm:perplexity-bound} is a most general rate-limiting bound.
It applies to exfiltration adversaries where both the encoder and decoder
have access to the seed that is used for generation. In the worst case, this
bound shows that an adversary
can only exfiltrate as many bits as the FSSL policy is set to. However, as we
show in Section~\ref{sec:empirical}, for many prompts and seeds both the
false positive rate and exfiltration limitations are much tighter than
the stated in this worst case, since
$\mathsf{Supp}(\dist_{\le \tau}) \ll 2^{r}$.
Here, we use $\dist_{\le \tau}$ to indicate the honest distribution, restricted to
the responses which have log-likelihood at most $r$. Here, we're using
$\mathsf{Supp}$ to indicate the set of responses which support the
input distribution.

\begin{theorem}\label{thm:perplexity-bound}
    For any distribution $\dist$, $\wFSSL$ at threshold $\tau$ is
    $(\alpha,\varepsilon)$-exfiltration secure for
    $\alpha = |\Toks|^{\ell}\cdot 2^{-\ell\cdot \tau}$ and
    $\varepsilon = (|\mathsf{Supp}(\dist_{\le \tau})| - 1)/2^n
    \le (2^{\ell\cdot \tau} - 1) / 2^{n}$.
\end{theorem}
\begin{proof}
The $\FSSL$ policy flags any response which is too surprising. Formally,
it computes
the log-likelihood (a.k.a.~surprisal or empirical entropy) for the
given response and seed. Then, it tests the likelihood $p$ and the length
of the response $\ell$ to check if $p / \ell \ge \tau$. Among the honest
distribution, the only honest responses which
are below this threshold are exactly
$\calS := \mathsf{Supp}(\dist_{\le r})$.

First, notice that by definition, for any $\hT\not\in \calS$,
$\Pr_{T\sim\dist}[T = \hT] < 2^{-\ell\cdot \tau}$ and therefore in total,
we have that $\alpha \le |\Toks|^{\ell}\cdot 2^{-\ell\cdot \tau}$
which establishes our claim about $\alpha$.

Next, observe that whenever the adversary
has any response outside $\calS$, then the $\GExfil$ game outputs $\bot$.
Therefore, when $\A$ is not flagged by the policy, it is because their
message $\hT$ is in the set $\calS$. Notice that
the maximum entropy a distribution can achieve on such a distribution is
$\log |\calS|$,
and therefore the capacity of a channel with such a support size is also
bounded by $|\calS|$ (Theorem~\ref{thm:stego-limits}). The final inequality
comes from the observation that there are at most $2^{\ell \cdot \tau}$
responses what appear with probability at least $2^{-\ell\cdot \tau}$, which
establishes the $\varepsilon$ security claim.
\end{proof}

\paragraph{Receivers without the seed.}
Under additional assumptions or infrastructure, one may expect
that the decoding adversary cannot obtain the seed that used by the encoder.
In this case, Theorem~\ref{thm:perplexity-bound} is too loose, and we should
aim to achieve bounds in the restricted game
$\GExfilR$ (Experiment~\ref{exp:exfaudit}).
In this section, we show two bounds in this restricted game below.
See Appendix~\ref{sec:trust-assumptions} for more discussion about these
assumptions.

The bounds in this section attempt to generalize beyond the $\FSSL$ policy
and will apply to \emph{any} policy with a particular structure. Motivated
by our experimental results, we see that each token of a generation is
\emph{nearly} deterministic given the prompt and seed.
In particular only about $2\%$ of tokens in our experiments are
``branching points,'' where floating-point noise makes the next token
undetermined.

\paragraph{Permitted tokens.}
The above observation motivates our characterization of
policies $\warden$ where at each index $i$,
there is only one token $t$ which is permitted, i.e., if $\hT[i] \neq t$ then
$\warden(\hT) = 1$. In practice, we don't quite obtain full determinism, so
we instead define $(p,q)$-permitted token policies. Where, at each index
there is one permitted token with probability $(1-p)$, or else there are $q>1$
permitted tokens with probability $p$ (and these probabilities are independent).

Based on our experiments, we can approximate the exfiltration bounds of the
$\FSSL$ policy with $p=0.02$ and $q=2$. However, we note that this is a
very simplified abstraction for the purposes of obtaining a possible stronger
bound under additional assumptions. In the real world, it is unclear how
closely policies satisfy the $(p,q)$-definition. It is possible an adversary
could correlate where permitted tokens occur (breaking independence) or
choose an adversarial prompt which has additional ``branching points,'' which
may be interesting for future work to explore.

First, using the capacity limitations covered in
Theorem~\ref{thm:capacity-limits}, we obtain the bound below. However, this
requires some other careful assumptions about
the security and implementation of the seed expansion. In particular, we need
that the adversary can't expand the seed on their own and that they must
commit to  a token in order to get the randomness used for the next logit.
This is the  most interactive infrastructure we propose, but we can obtain a 
very strong bound.

\begin{theorem}\label{thm:no-seed-interactive-bound}
Let $\warden$ be a $(p,q)$-restricted policy with false positive rate
$\alpha$ on $\dist$. Define $r' := \log q - h(1-p) - (1-p)\log (q-1)$ where
$h$ is the binary entropy function.
Then, $\warden$ is $(\alpha,\varepsilon)$-secure against
admissible adversaries for $\varepsilon = (2^{\ell \cdot r'} - 1) / 2^n$,
under the following additional conditions:
\begin{enumerate}[noitemsep]
    \item the seed is longer than the response, i.e.,
    $\secpar > \ell \log |\Toks|$,
    \item and the adversary $\A_1$ computes $\dist^{\RO}(\sigma)$
    once and never queries $\RO$ otherwise.
\end{enumerate}
\end{theorem}
\begin{proof}
Since the adversary $\A_1$ is only given the honest $\RO$ evaluation while
streaming an honest response, each logit is uniformly random
and unpredictable by $\A_1$. And, by assuming the seed  is longer than
the response ($\secpar > \ell \log |\Toks|$),
each of these choices remains uniformly random conditioned on the previous
ones, since there isn't enough response length to learn the seed.

Therefore, we can model each token as an independent
channel and apply Theorem~\ref{thm:capacity-limits} directly.
Adversary $\A_1$ must transmit $n$ bits to the receiver with no error.
And, we consider
each token sent as a channel, which either has $q > 1$ permitted tokens (with probability $p$)
or has one permitted token, determined by the seed and context
(i.e., if the adversary were to deviate they would get caught).
Theorem~\ref{thm:capacity-limits} gives a direct bound on the number of
messages $M$ in this case. Specifically, we have
\(
    M \ge n / (\log q - h(1-p) - (1-p)\log (q-1)),
\)
where we model the probability that there is only one permitted token
$(1-p)$ as uniformly
random symbol, since the sender has no control over it (recall that
it is unpredictable).
As we're interested in the bits per token, we can therefore bound
$n / M \le \log q - h(1-p) - (1-p)\log (q-1)$ as claimed.
\end{proof}

If the interactivity of the hash function evaluation is too expensive, we
prove one final, slightly weaker, bound. In this case we need only assume that 
the adversary is computationally bounded. In practice, this assumption can be 
enforced by requiring the server output tokens at some minimum rate. This way 
the adversary has a limited amount of time to search for ``good'' responses.

\begin{theorem}\label{thm:no-seed-bound}
Let $\warden$ be a $(p,q)$-restricted policy with false positive rate
$\alpha$ on $\dist$. For any constant $\delta > 0$ define
$r':= (p+\delta)\log q$.
Then, $\warden$ is $(\alpha,\varepsilon)$-secure against
admissible adversaries for $\varepsilon = (2^{\ell \cdot r'} - 1) / 2^n
+ e^{(-O(\ell))}$,
under the following additional conditions:
\begin{enumerate}[noitemsep]
    \item the seed is longer than the response, i.e.,
    $\secpar > \ell \log |\Toks|$,
    \item The inference distribution $\dist^{\RO}(\sigma)$ samples
    randomness sequentially,
    \item and the adversary $\A_1$ queries $\RO$ at most
    $O_{\delta}(\ell)$ times.
\end{enumerate}
\end{theorem}
\begin{proof}
The main goal of this result amounts to showing that at most
$p+\delta$ fraction of the logits in $\A_1$'s output $\hT$ have more than
one permitted token. This is equivalent to bounding the probability of the
adversary finding a response sequence where
$>(p+\delta)\cdot \ell$ tokens have more than 1 permitted token. We can
bound this using Lemma~\ref{lem:random-oracle} below. Intuitively, each
response has a probability $p$ of having more than $1$ permitted token
(which we will model with $\RO$ outputting $0$). Therefore, given
$O(\ell)$ queries, the adversary can only find a sequence like this
with probability at most $e^{-O(\ell)}$ which is negligible in the
response length. So, with high probability, the adversary has only
$(p+\delta)\ell$ locations which are not deterministic, each with an
alphabet of size $q$. So, the maximum number of bits that can be exfiltrated
is bounded by $\ell (p+\delta)\log q$.
\end{proof}

\begin{lemma}\label{lem:random-oracle}
Let $\A^\RO$ by an algorithm issuing at most $N=O(k)$ queries to a
random function
$\RO : \bits^{*} \to \bits$ that is $0$ with probability $p$, which outputs a
string $x \in \bits^{k}$. Then, the probability that $p' k$
of the prefixes of $x$ evaluate to $0$ is at most $e^{-\Theta(k)}$.
\end{lemma}
\begin{proof}
    Let $X \subseteq \bits^{k} $ be the set of $k$-bit strings which $\A$ queried.
    Note that for any \emph{fixed string} $x\in \bits^k$, the number of $0$ prefixes is a
    random variable $S_x \sim \mathrm{Bin}(k, p)$. Then, we also know that $|X| \le N$, and 
    therefore, by a union bound and Chernoff's bound, we get that
    \[
        \Pr[\exists_{x\in X} S_x \ge p' k] \le Ne^{-k \KL(p' \| p)}.
    \]
    Notably this is $e^{-\Theta(k)}$ when $N = O(k)$ and $p$ and
    $p'$ are constant.
\end{proof}

\paragraph{Tightness of the bounds}
Here we provide some estimates on how \emph{tight} we expect the above bounds to be against
a motivated
adversary. Theorem~\ref{thm:perplexity-bound}'s exfiltration bound $\varepsilon$
is likely to be tight in theory. As long as
basic cryptography exists, steganography exists as a consequence of Theorem~\ref{thm:stego-exists}~\cite{C:HLvA02}.
Even in practice, an adversary that carefully designs a steganographic scheme against the FSSL policy
should likely be able to get close to this bound. They could likely draw heavily on techniques from steganographic literature to design schemes.

Theorems~\ref{thm:no-seed-interactive-bound} and~\ref{thm:no-seed-bound} are
likely considerably looser under the stated assumptions.
The bound allows an adversary to choose from any of the permitted tokens.
So an adversary would likely need to consistently choose rare tokens
to try to meet the stated limit. But even for a simple policy, like
the FSSL policy, an adversary would need to be much more
careful about their token choice across the response to avoid getting
flagged. Moreover, it's also not clear how to design a steganographic
scheme for this setting. When the seed is known, both parties have
access to the distribution, the typical setting for steganography. But,
when the seed is secret, one may need to design a model and policy specific
encoding scheme to try and achieve a meaningful rate.

We also note that it is still an open empirical question how big the support
size of the truncated distribution is for adversarial prompts and how
frequent multiple permitted tokens are. Our estimates show that all of
these are relatively low; however, a motivated adversary may be able to
fine-tune model settings and find the best prompts possible to
increase the rate of their steganography scheme.

In Section~\ref{sec:empirical}, we give empirical estimates for the parameters in the bounds
above. Combining this with the transformation below, we can give concrete exfiltration
limitations under the stated assumptions for the policies that we propose.

\section{Implementing a Verification Server} \label{sec:implement-verif-server}

In Section \ref{sec:verification-system}, we introduce an abstract verification system and its relationship to our security game. In this section, we make this abstraction concrete with specific implementation details required to bringing this system into practice, and practical tradeoffs associated with building it.
A central design advantage of the verification server architecture is its ability to \textit{decouple security from performance}. 


\subsection{The Cost of Verification versus Generation}
 
\begin{wrapfigure}[14]{r}{0.45\linewidth}
\vspace{-0.4\baselineskip}
\centering
\includegraphics[width=0.45\linewidth]{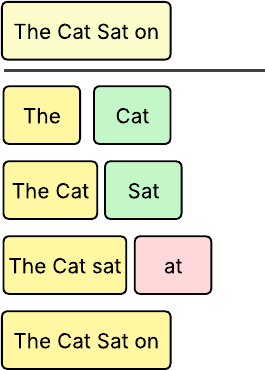}
\caption{\small Verification via prefill: the model conditions on the ground-truth prefix at each token.}
\label{fig:teacher-forcing}
\end{wrapfigure}

During verification, a single token mismatch between two generations makes their subsequent probability distributions divergent, as they are conditioned on different contexts (e.g., all continuations of ``the cat sat on'' vs.\ ``the cat sat beside'' will diverge). Instead of continuing to generate text with fixed seeds, which would diverge immediately, we exploit the fact that LLMs expose logits for all next-token predictions during the \emph{prefill} stage. A single prefill over the full input–output sequence yields conditional probabilities for every position, enabling token-level suspiciousness scores in a single forward pass. Verification therefore involves only prefill, whereas normal inference requires both prefill and autoregressive decode.
In short-context regimes, decoding is substantially less efficient than prefill: hardware utilization during decoding is typically 3–5$\times$ lower, implying a comparable 3–5$\times$ throughput gap. This disparity grows in long-context regimes, though the exact magnitude depends on attention optimizations and implementation details \cite{erdil2025inferenceeconomicslanguagemodels}.

\subsection{Inverse Probability Transform Scoring Function \CGSFull}
In Section \ref{sec:fixed-seed-sampling-likelihood} we introduce the notion of a classification using Fixed-Seed Sampling Likelihood. 
In this section we introduce a technique for approximating the fixed-seed sampling likelihood (FSSL) for the inverse probability transform sampling technique (IPT).
We investigate a method \emph{\CGSFull}, which is concurrently introduced in parallel work \cite{tokendifr}.
The core assumption is that across repeated model evaluations, the per-token probability vectors $\vec{p}_\theta(x)$ may vary due to small numerical or stochastic perturbations, but such variation is relatively small.

Intuitively, if the true sampling process (with fixed seed $\sigma$) selects a random value $u \in [0,1]$ that falls within the probability interval of one token, nearby tokens, whose cumulative probability mass lies close to $u$, remain plausible under valid non-determinism.  
Hence, instead of measuring the full empirical fixed-seed posterior (which would require many re-samplings per context), we approximate the likelihood of plausible tokens with a Gaussian centered at the seed sample $u$.

Formally, we compute a verification score for the observed token $t^\star$ by integrating a Gaussian density (mean $u$, width $\sigma$) over the token’s probability interval $[a,b]$ in the cumulative distribution $\bar{F}$ of $\vec{p}_\theta(x)$.  
This integral estimates the probability that a perturbed distribution, drawn from $\mathcal{P}_\theta(x)$, would produce $t^\star$ under the same seed.  
The log of this mass gives a continuous, differentiable proxy for the fixed-seed posterior probability.  
The \CGSFull~is defined in Algorithm \ref{alg:convolved-gaussian-score}, and illustrated in Figure~\ref{fig:CGS-diagram}.

\subsection{Gumbel-Max Scoring Function \GLSFull}

We also introduce the \GLSFull to estimate \GLS, which is defined for a given noise distribution over token probabilities, which by default we assume to be Gaussian $\mathcal{N}$, also introduced in concurrent work \cite{tokendifr}.

Given a sampled token $t^*$, and knowledge of the inference server’s sampling seed $\sigma$, the verifier regenerates the same Gumbel noise the inference server generated $\vec{g}_\sigma \sim \mathrm{Gumbel}(0,1)^V$.
The verification server generates the probability distribution $\hat{\ell}$. Then, we compute the probability that token $t^*$ surpassed that token as the sampled token; that is, computing the probability $\Pr[\vec{g}_\sigma[i] + \hat{\ell}[i] + \mathcal{N} >\vec{g}_\sigma[t^*] + \hat{\ell}[t^*] + \mathcal{N}]$ for token $i$. The full method and derivations are provided in Appendix~\ref{sec:GLS-method}, and Algorithm \ref{sec:GLS}.

    \begin{figure}[htbp]
        \centering
        \includegraphics[width=0.75\linewidth]{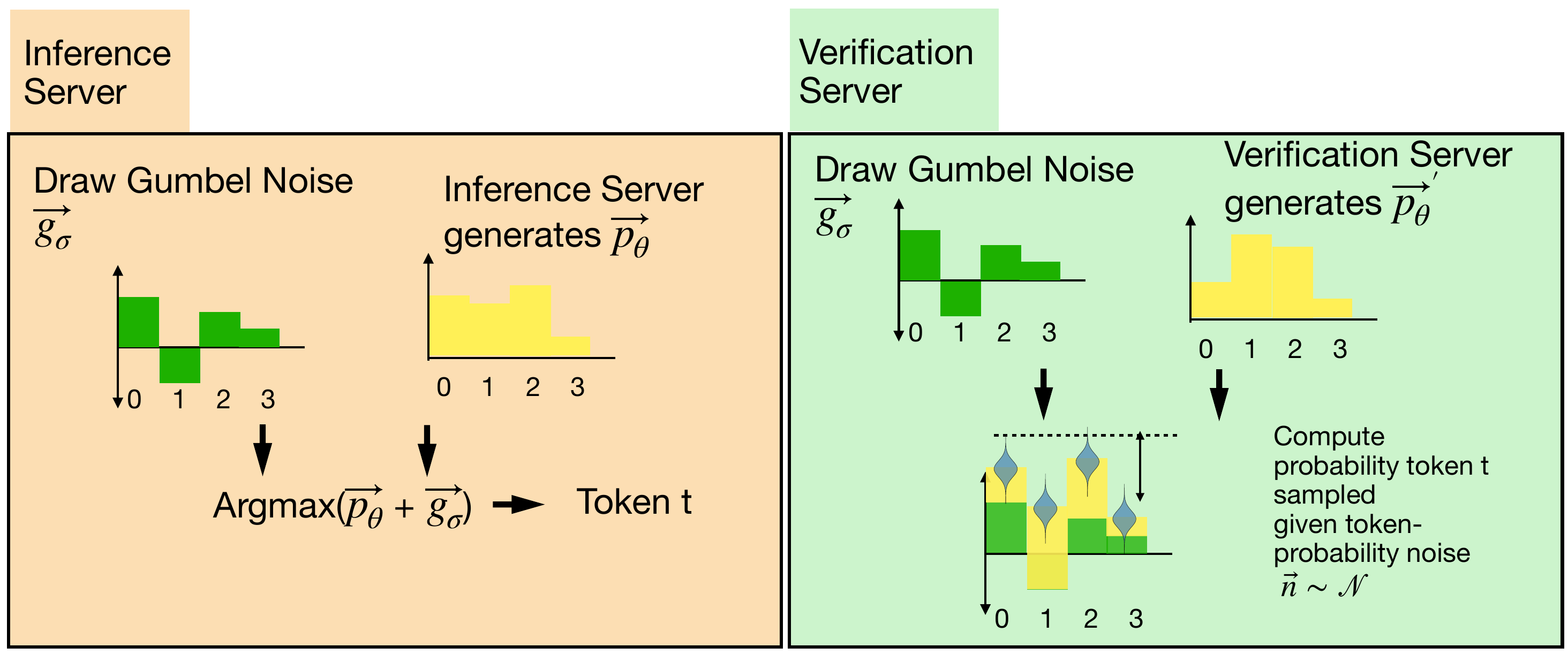}
        \caption{Diagram of the \GLSFull~function. The \GLSFull~estimates the likelihood of a particular token being sampled given a specific noise distribution across token probabilities.}
        \label{fig:gumbel-max-trick-method}
    \end{figure}

\subsection{How to Pool Across Multiple Tokens}

In practice, computing a per-token \textit{suspiciousness} score is only the first step; the harder problem is determining how to pool these scores across larger units of activity.
Apriori, as a defender we do not know how an adversary will distribute an exfiltration attempt (e.g., they can attack across all tokens from a single user, across the first token of many users, or across all outputs generated on a particular GPU); thus there is no universally correct aggregation rule. 
Simple pooling strategies such as taking the 90th percentile or the mean correspond to different implicit threat models. 
 The security team must define the granularity at which suspicious tokens are acted upon (e.g. user-level, request-level, or hardware-level) and the time horizon over which evidence accumulates. While our method provides calibrated token-level scores, the pooling and escalation policy is necessarily system-dependent and must be tailored to the deployment's trust boundaries and adversary model. In this work we are purposefully agnostic to the adversary and thus do not comment on the right method to pool scores across tokens. 



\subsection{Message Sampling Rates}

To understand the efficiency of our verification scheme, we analyze what percentage of traffic must be sampled to detect malicious messages with high
confidence. Figure~\ref{fig:sampling-confidence} illustrates the key statistical property of our approach: the \emph{sampling percentage} required depends primarily on
the \emph{proportion} of malicious messages in the traffic. In Figure~\ref{fig:sampling-confidence} we show the percentage of messages that need to be sampled in order to sample at least 1000 malicious messages, assuming the malicious messages are IID. Notably, even for very slow exfiltration attempts (e.g., 1\% malicious traffic), the sampling percentage remains practical, even for high confidence.  See Appendix~\ref{sec:sampling-rates} for the full derivation.


\begin{figure}[htbp]
    \centering
        \includegraphics[width=\linewidth]{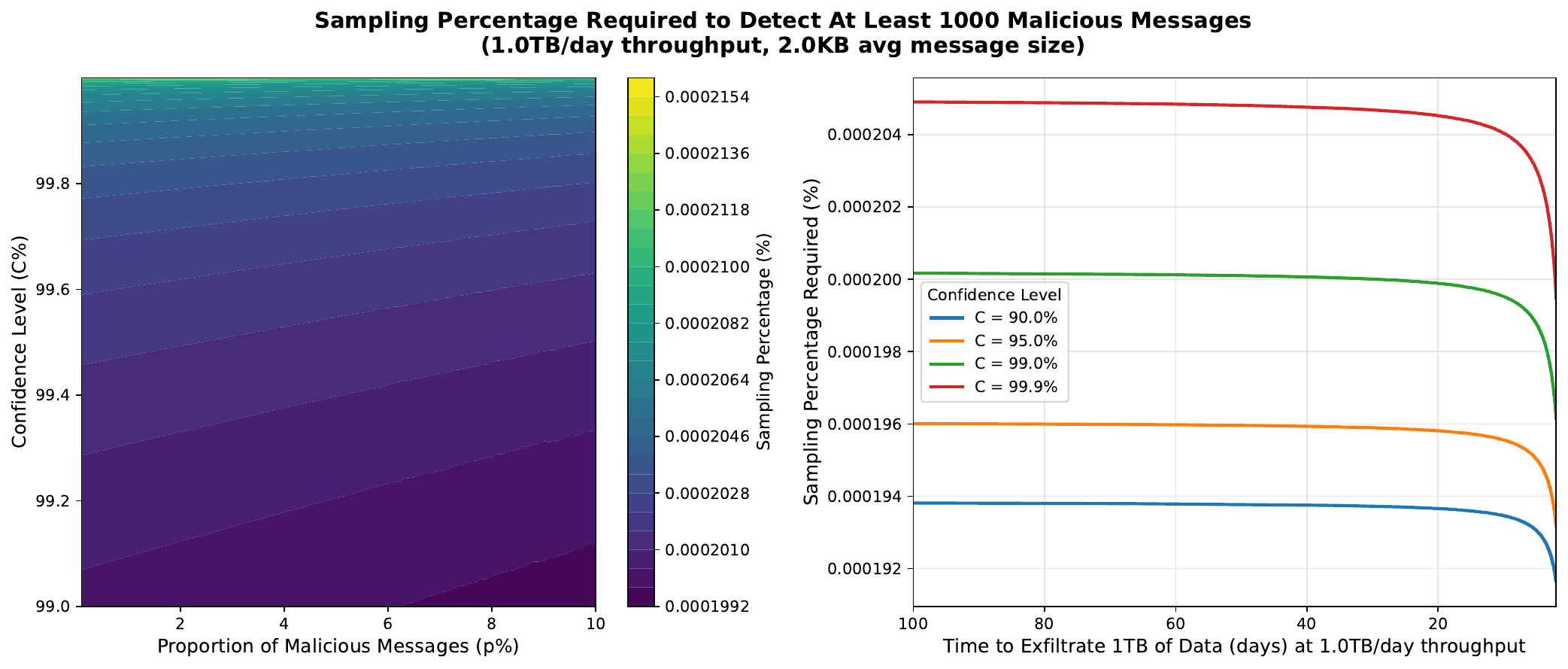}
    \caption{Sampling requirements for detecting at least 1,000 malicious messages with varying confidence levels, shown as percentage of data sampled.
    \textbf{Left:} Heatmap showing the percentage of traffic that must be sampled as a function of the proportion of malicious messages and target confidence level.
    \textbf{Right:} Sampling percentage versus exfiltration time, assuming 1\,TB/day text throughput and 2KB average message size. Critically, verification remains extremely practical even for slow
attacks.}
        \label{fig:sampling-confidence}
\end{figure}

\section{Empirical Results} \label{sec:empirical}
    
Anticipating a ``reasonable'' attacker is inherently brittle; capable adversaries can be more creative and better resourced than any fixed threat model. Instead of speculating about strategies, we take a distribution-free, information–theoretic view and bound \emph{worst-case} leakage under our verifier. At each position $t$, the verifier induces an \emph{admissible set} of tokens
\[
    A_t(\tau)\;=\;\{\,i:\mathrm{FSSL}_t(i)\ge\tau\,\},
    \]
where $\mathrm{FSSL}_t$ is our Fixed Seed Sampling Likelihood (or more accurately, our estimate of the FSSL for that token) and $\tau$ is a threshold. An adversary sending an \textit{accepted} token cannot convey more than $\log_2|A_t(\tau)|$ bits at step $t$, hence the sequence-level capacity is $\sum_t \log_2|A_t(\tau)|$. This converts Theorem~\ref{thm:perplexity-bound} into an operational metric: for any attack, leakage is upper-bounded by the bits available from tokens the verifier would accept. We note that this is an extremely strong bound, and in practice we expect true exfiltration likely to be much lower than this bound.

\subsection{Problems with FSSL Estimators and Logit-Rank as a Solution}

Our FSSL estimates can be thought of as a ``suspiciousness'' score. In designing a meaningful score for suspiciousness one prefers that the suspiciousness of a token is a continuous score, from ``not-suspicious'' to ``very suspicious''. However, empirically we find that our proposed scoring function (\GLS) is not perfectly calibrated. In Qwen-30B experiments, roughly 0.1\% of correctly sampled tokens are labeled \emph{maximally suspicious}, as seen by looking at the middle plot of Figure \ref{fig:gls_vs_logit_ranks}, where we see that the "rank" of the \GLS~scores is high (in practice, we see that the sampled tokens are tied for last place with a large number of unlikely tokens).
        
A simple and effective auxiliary signal is the \emph{logit rank} of the sampled token (before temperature and top-k/top-p filtering). Empirically, the majority of sampled tokens fall within the top R ($\approx 16$) by raw logits, which we see in the right subplot of Figure \ref{fig:gls_vs_logit_ranks}. Intuitively, this constrains the attacker's effective alphabet to at most $R$ choices per token position.

        \begin{figure}[htbp]
              \centering
              \includegraphics[width=1.0\linewidth]{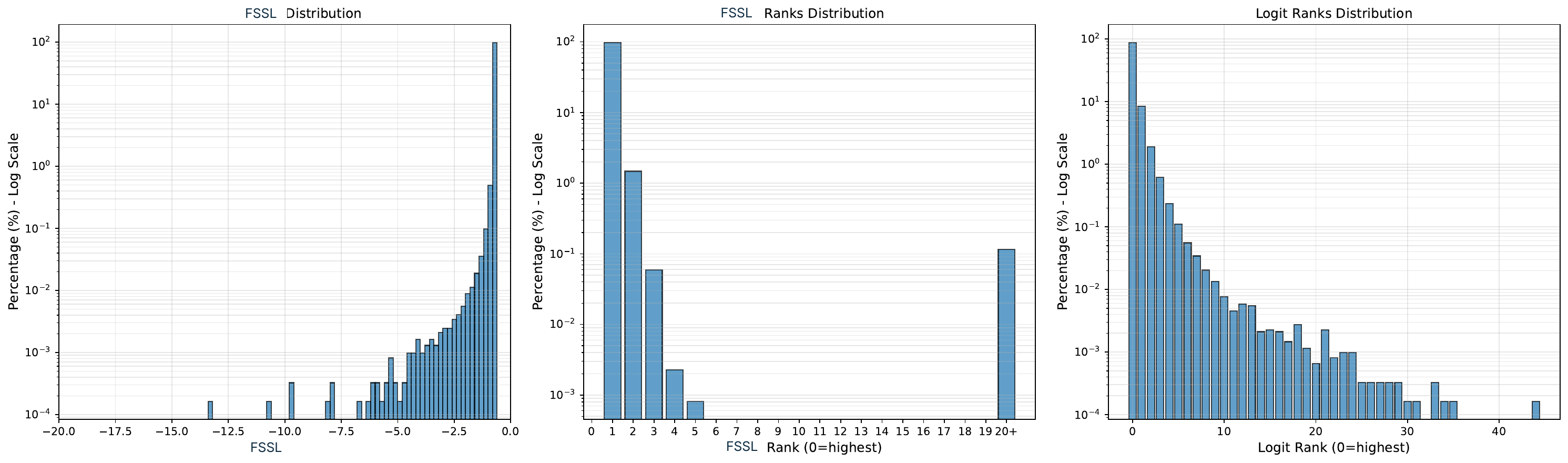}
              \caption{Comparison of \GLSFull~(\GLS) verification and raw logit ranks for sampled tokens. \textbf{Left:} Distribution of \GLS~scores
          (log-probabilities) for sampled tokens with $\sigma=1.0$. \textbf{Middle:} Distribution of \GLS~ranks, where rank 0 indicates the sampled token has the
          highest \GLS~score. Ranks are grouped as 0--19 individually and $\geq$20 combined; tokens in the $\geq$20 bin are typically tied for last place with many other low-probability tokens (i.e., effective rank $\approx |V|$). \textbf{Right:} Distribution of raw logit ranks (before temperature and
          top-k/top-p filtering), where rank 0 indicates the token with the highest raw logit. All y-axes use log scale to show the full distribution of percentages.}
              \label{fig:gls_vs_logit_ranks}
        \end{figure}


Augmenting \GLS~with a \emph{logit rank} check yields a \textbf{strict Pareto improvement} over simply thresholding according to the FSSL: for any FSSL threshold $\tau$ and rank cutoff $R$, we introduce a three-class classification
        \begin{align*}
        &\textsc{safe}:\ \mathrm{FSSL}_t(i)\!\ge\!\tau\\
        &\textsc{suspicious}:\ \mathrm{FSSL}_t(i)\!<\!\tau\ \wedge\ \mathrm{rank}_t(i)\!\le\!R\\
        &\textsc{dangerous}:\ \mathrm{FSSL}_t(i)\!<\!\tau\ \wedge\ \mathrm{rank}_t(i)\!>\!R
        \end{align*}
        achieves \emph{lower or equal} FPR at \emph{lower} information leakage than \GLS~alone. We note that this logit-rank and FSSL score can be combined into a single classifier, e.g. a logistic regression that takes two features: FSSL score and logit-rank. An important advantage to a tiered classification scheme allows to have simpler bounds of extractable information, as there is a trivial upper bound on information expressed by a suspicious (not dangerous) token $I \leq log_2(R) \leq log_2(|\textrm{vocab size}|)$. As such, given the token-level classifier $C$, the per-token exfiltratable information is:
        \[
\mathrm{ExfiltratableInformation}(t)
= \begin{cases}
\log_2 |A_t|, & \text{if } C(t) = \textsc{safe}, \\[6pt]
\log_2 R, & \text{if } C(t) = \textsc{suspicious}, \\[6pt]
\log_2 |\Toks|, & \text{if } C(t) = \textsc{dangerous}.
\end{cases}
\]

    \paragraph{Experimental Set up}
    
    We sample 500 prompts from \textsc{LMSYS-Chat-1M} \cite{zheng2024lmsyschat1mlargescalerealworldllm}, and for each prompt, we generate
continuations of max length 500 tokens.  We evaluate \emph{Llama-3.2-3B-Instruct},
\emph{Llama-3.1-8B-Instruct} \cite{llama3modelcard}, \emph{Qwen3-1.5-MoE-A.2-7B}, and \emph{Qwen3-30B-A3B} \cite{yang2025qwen3technicalreport}, served via
\texttt{vLLM}.\footnote{We evaluate on \texttt{vLLM} as this is the most popular open-source
inference provider, and thus aim to estimate \GLS~scores, rather than \CGS, as vLLM uses the Gumbel-Max trick for sampling.} For all our experiments we use \GLSestimator. We use temperature $T=1.$, top-$k=50$, top-$p=0.95$, and a fixed RNG seed for generation. The verifier shares the tokenizer and chat template with the generator. We sweep over multiple $\sigma$ values (from 0.001 to 1.) for \GLSestimator~hyperparameters. 
    
    We focus on verifying inference conducted via the standard auto-regressive decoding process. While speculative decoding is increasingly common in practice, our verification pipeline is designed around the traditional decoding setup; we discuss extensions to speculative decoding in Appendix~\ref{sec:speculative-decoding-verification}.

        


        
          \begin{figure}[htbp]
                  \centering
                  \includegraphics[width=.75\linewidth]{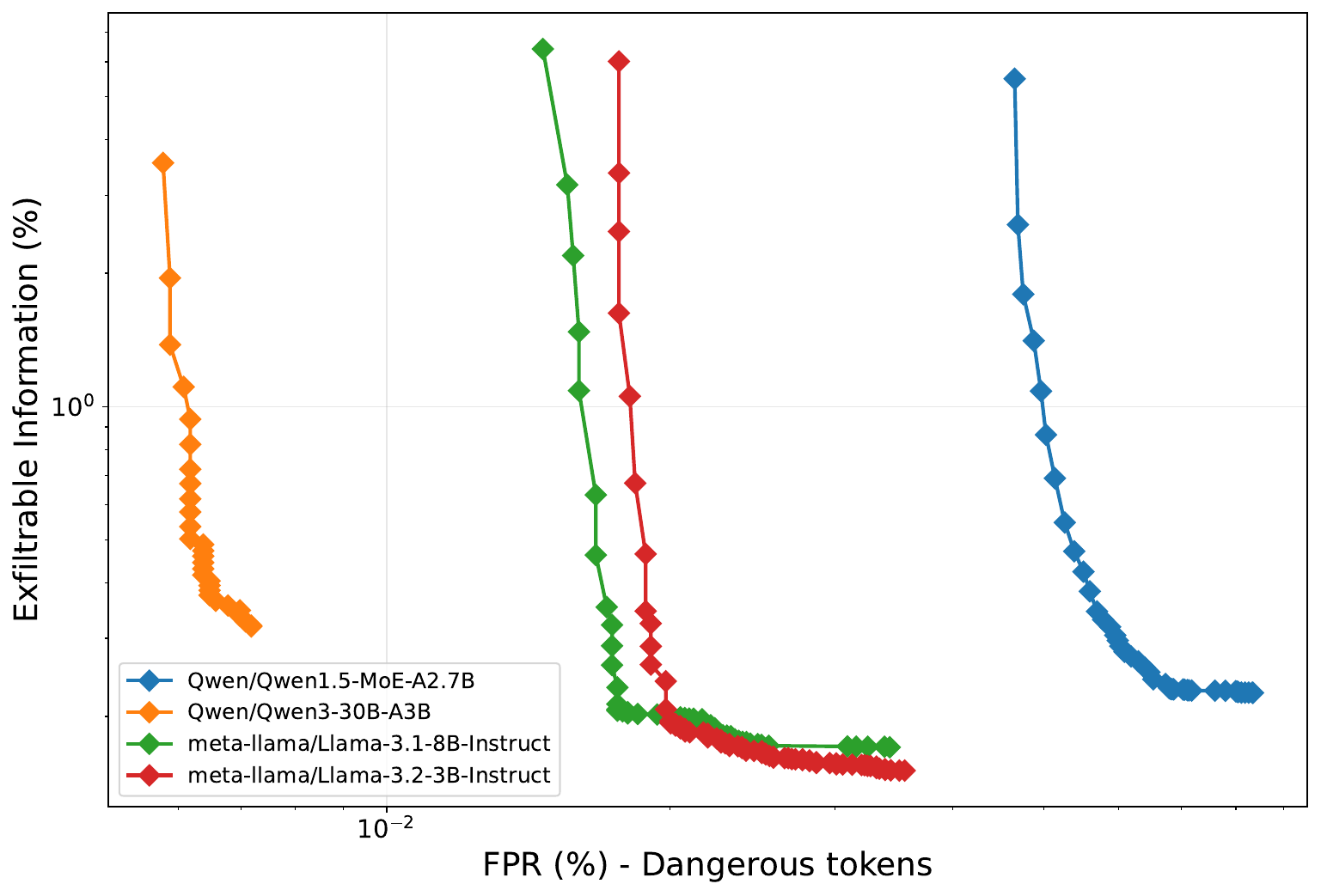}
                  \caption{\textbf{Headline figure: Pareto frontiers for rank-aware classification thresholds.} For logit-Rank=8 as a filtering function to differentiate between suspicious and dangerous tokens. The plot is filtered such that less that 0.5\% of tokens are deemed ``suspicious'', and only points that are not pareto-dominated by both FPR and exfiltratable information are plotted.}
              \label{fig:headline-pareto-multi-model-rank}
          \end{figure}

In Section \ref{sec:rank-ablations}, we demonstrate an ablation of these findings and show a robust improvement when introducing logit rank, where if rank =$\textsc{None}$ then all tokens below an FSSL threshold $\tau$ are classified as dangerous. More information on the \GLS~scores for different tokens are presented in Appendix~\ref{sec:gumbel-score-statistics}.

\section{Discussion}

In this work, we formalize the security game of model weight exfiltration and propose a scheme for detecting steganographic attempts to exfiltrate weights. We provide introduce information theoretic bounds on exfiltratable information, and extend them to the physical world through empirical results with Llama and Qwen models. We find that using our scheme we are able to limit exfiltration to $\leq 0.5\%$ with a minimal FPR. To the authors' knowledge, no comparable detection and exfiltration bound exists, suggesting a slow down of 200$\times$ in time-to-exfiltrate. If exfiltration were to take 2 days, it now takes over a year.

One critical strength of our proposal is that it requires minimal interaction between the inference stack and the verification server, allowing model providers to largely decouple performance from security. However, one potential way to improve the accuracy of the verification process is to have the inference server sample tokens aware of the verification-process. That is, at inference time, if the inference server computes that there are multiple potential tokens (with high FSSL), choose one according to some deterministic process (e.g. alphabetically). This would further limiting the fixed seed posterior distribution, but would involve some minimal additional computation on the inference stack.

One of core challenges of this proposal is that if an adversary is able to corrupt an inference provider, then meaningful steps must be taken to protect the verification server from the same threat. There are multiple possible solutions to this, depending on the specific threat model an inference provider may have. In the most extreme setting, an inference provider may create a secondary inference stack for verifying inference that is kept in parallel with the main inference stack, it may even run in a Trusted Execution Environment to verify integrity of the computation. In the event that this is too costly for an inference provider to maintain, an earlier code checkpoint from a point prior to when an adversary was known to be in the system, can be used as a reference implementation. 
Further, the defending the verification server benefits from not having the same performance requirements of the inference stack ; depending on availability constraints, the verification server could even exist in a largely air-gapped format with minimal connection to the inference stack, except at limited times across a secure channel. 
The fact that verification is much more efficient than generation, and that randomly sampling messages to verify can produce high confidence probabilistic guarantees (Section \ref{sec:sampling-rates}) suggests that the amount of systems-level overhead required to implement a verification server may not be a blocker, if steganography is deemed a meaningful threat.

This system cannot standalone as a method for defending against against model weight exfiltration; in Appendix \ref{sec:high-level-defenses}, we describe how this scheme fits into a larger outline of defenses to protect model weights. It's important to note that this scheme does not \textit{prevent} model weight exfiltration and rather only \textit{detects} steganographic messages. There are a set of possible security responses in response to suspicious tokens; for example, tokens deemed suspicious can be logged and aggregated to look for potentially malicious patterns by an adversary. Alternatively, outbound traffic can be gated by this verification server, and when failing verification, inference is regenerated and tried again. It is possible to rate limit users (or GPUs) proportional to the number of tokens that they produce that are deemed suspicious. We give a more complete discussion of policy recommendations of how this scheme fits into a larger scheme of in Appendix~\ref{sec:policy-response-to-exfil}.

Overall, this work establishes a concrete foundation for reasoning about and detecting model weight exfiltration through steganographic channels. By combining formal security analysis with an efficient, empirically validated verification scheme, we demonstrate that meaningful protection against inference-time exfiltration is both achievable and practical for large-scale model deployments.

\newpage 
\section{Acknowledgements}
We thank Miranda Christ for early conversations that helped frame the problem and for her foundational explanations of watermarking and steganography in generative AI. We are grateful to Nicholas Carlini for his thoughtful comments during this work. We also thank Jacob Lagerros for his valuable discussions, feedback on the paper, and practical insights drawn from implementing the proposed scheme in real-world settings.

Some of this work was conducted while AK, DR, and RR were participating in the ML Alignment and Theory Scholars (MATS) program, whose support we gratefully acknowledge. RR was supported in part by the NSF BCS-2218803 grant, as well as additionally supported by a grant from Coefficient Giving. 

\bibliographystyle{alpha}
\bibliography{citations,alex_cites}

\appendix

\section{Appendix}

\section{Steganography Syntax and Relevant Results}\label{sec:stego}
In this section, we review some relevant notation, concepts, and terms
that are used to understand steganographic protocols. We also reprove some
well known results that support our own theorems in the main body of the paper.

\begin{definition}[Steganographic Scheme]
Let $\msglen,\coverlen\in\N$ be integers and $\Toks$ be a set of tokens (a.k.a.~vocabulary or alphabet).
Let $\dist$ be the passive channel distribution with support
$\supp \subseteq \Toks^{\le \coverlen}$.
A \emph{steganography scheme for $\dist$} is
a pair of stateful (possibly randomized)
algorithms $\Steg = (\Enc,\Dec)$ with the following syntax:
\begin{itemize}[noitemsep]
    \item $\Enc(K,\st, m) \to c$, takes a key $K\in \bits^{\secpar}$, a state $\st$, a message $m\in \bits^{\le\msglen}$ and outputs a covertext $c\in \Toks^{\le \coverlen}$.
    \item $\Dec(K, \st,  c) \to m$, takes a key $K\in \bits^{\secpar}$, a state $\st$, a covertext
    $c\in \Toks^{\le \coverlen}$ and outputs a message
    $m\in \bits^{\le\msglen}$.
\end{itemize}

For a steganography scheme to be useful, it must be \emph{correct},
meaning that for every $\st$ and $m$,
$\Pr_K[\Dec(K,\st,\Enc(K,\st,m)) = m] = 1 - \negl(\secpar).$ When,
this probability is $1$, we say the scheme is \emph{perfectly correct}.
It must also be \emph{secure}, meaning that for any $\st$ and $m$,
\(
    \KL(\dist_{\Steg} \| \dist) = \negl(\secpar),
\)
where $\dist_{\Steg}$ is the covertext distribution generated
as $K \getsr \bits^{\secpar}$ and $c \getsr \Enc(K,\st,m)$.
\end{definition}

The principal measure of efficiency that we are interested in
is the \emph{rate} of a scheme $\Steg$. This is defined
as the message length $\msglen$ divided by the average covertext length,
which we calculate as $\Ex_{c\sim \dist}[|c|]$, since a secure scheme
cannot differ significantly from this expectation.\footnote{This is
assuming that one requires correctness for a message distribution
which is chosen uniformly at random and is therefore incompressible.}
Therefore, we write $R = \msglen / \Ex_{c\sim \dist}[|c|]$.

Note that there are many other variants of steganography, including
stateless, public-key, robust, etc.~\cite{JOC:AP02,C:HLvA02,IDK:vAH04}. However, in
our context, we wish to give a potential exfiltrator as much
theoretical power as possible. So we assume that both endpoints
are stateful and can have an arbitrarily long secret key in common.

\subsubsection{Relevant Results in Steganography}
The \emph{capacity} of a steganography distribution $\dist$ is
defined as the maximum
rate $R$ achievable by a steganography scheme $\Steg$ that is
both correct and secure, asymptotically over many runs of the
scheme. One well-known result is that this
capacity is at most $H(\dist) / \Ex_{c\sim \dist}[|c|]$
for perfectly secret and correct schemes.

More generally, when a scheme $\Steg$ outputs a distribution
$\dist_{\Steg}$ with $\KL(\dist_{\Steg}||\dist) < o(1)$
which may not perfectly match the perfect cover
distribution, then the capacity is bounded by
$\big(H(\dist_{\Steg}) - \KL(\dist_{\Steg}||\dist)\big)
/\Ex_{c\sim \dist}[|c|]$ against information-theoretic adversaries.

\begin{theorem}[Steganographic limitations]\label{thm:stego-limits}
Let $\Steg$ be a steganography scheme for $\dist$
and $\mathbf{M}$ be a random variable with $H(\mathbf{M}) = \ell$.
Let $\dist_{\Steg}$ be the covertext distribution generated
as $K \getsr \bits^{\secpar}$ and $c \getsr \Enc(K,\st,\mathbf{M})$.
Then, $\Steg$ cannot be perfectly correct if
\(
    \ell > \lceil H(\dist_{\Steg})\rceil.
\)
\end{theorem}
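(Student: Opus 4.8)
The plan is to run a standard information-theoretic counting argument: perfect correctness forces the covertext to encode the message losslessly once the key is known, so the entropy of the covertext distribution must dominate $H(\mathbf{M})$, and the stated bound is just the contrapositive.

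First I would unpack what perfect correctness buys us. By definition, for every state $\st$ and message $m$ we have $\Pr_K[\Dec(K,\st,\Enc(K,\st,m)) = m] = 1$, where the probability is over the key $K \getsr \bits^{\secpar}$ and any internal randomness of $\Enc$ and $\Dec$. I would argue this implies that, conditioned on a fixed key $K$ (and the fixed state $\st$), the message is a deterministic function of the covertext: if distinct messages $m \neq m'$ could both produce some covertext $c$ with positive probability under the same $(K,\st)$, then $\Dec(K,\st,c)$ would have to equal both $m$ and $m'$ with probability one, a contradiction; likewise any randomness internal to $\Dec$ must collapse so that $\Dec(K,\st,\cdot)$ returns the correct message with certainty on the support of the encoder. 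Hence there is a map $f$ with $\mathbf{M} = f(K,\st,c)$ almost surely.

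Next I would convert this into an entropy inequality. Since the message $\mathbf{M}$ is drawn independently of the key $K$, we have $\ell = H(\mathbf{M}) = H(\mathbf{M}\mid K)$. Using $H(\mathbf{M}\mid K) \le H(\mathbf{M}, c \mid K) = H(c \mid K) + H(\mathbf{M}\mid c, K)$ together with $H(\mathbf{M}\mid c,K) = 0$ (because $\mathbf{M} = f(K,\st,c)$), we obtain $\ell \le H(c\mid K) \le H(c) = H(\dist_{\Steg})$, where the last step uses that conditioning cannot increase entropy and that $\dist_{\Steg}$ is by definition the marginal law of $c$ over $K$, $\mathbf{M}$, and the coins of $\Enc$. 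In particular $\ell \le H(\dist_{\Steg}) \le \lceil H(\dist_{\Steg}) \rceil$. Taking the contrapositive yields exactly the claim: if $\ell > \lceil H(\dist_{\Steg}) \rceil$, then $\Steg$ cannot be perfectly correct over the choice of $\mathbf{M}$.

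The only delicate points are bookkeeping rather than genuine obstacles: making precise that "probability exactly $1$" lets us treat $\Dec(K,\st,\cdot)$ as an essentially deterministic function on the encoder's support and fold the fixed state $\st$ into $f$, and being careful that $H(c)$ and $H(\dist_{\Steg})$ refer to the same object (the covertext marginal). Once these are nailed down the inequality chain is immediate. I would also note that the argument in fact gives the slightly stronger bound $\ell \le H(\dist_{\Steg})$; the ceiling in the statement is merely the integer-bit phrasing.
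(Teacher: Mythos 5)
Your proof is correct, but it takes a different route from the paper's. The paper argues by reduction to Shannon's source coding theorem: it uses the achievability half to compress the covertext $\mathbf{C}\sim\dist_{\Steg}$ into an expected $\lceil H(\dist_{\Steg})\rceil$ bits, observes that perfect correctness lets one recover $\mathbf{M}$ from $\mathbf{C}$ (given the shared key and state), and derives a contradiction with the converse half, which forbids encoding $\mathbf{M}$ in fewer than $H(\mathbf{M})$ expected bits. You instead run a direct chain-rule computation: perfect correctness forces $H(\mathbf{M}\mid \mathbf{C},K)=0$, independence of $\mathbf{M}$ and $K$ gives $H(\mathbf{M})=H(\mathbf{M}\mid K)$, and the chain rule plus ``conditioning reduces entropy'' yields $\ell\le H(\mathbf{C}\mid K)\le H(\dist_{\Steg})$. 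Your version is self-contained, handles the role of the key more explicitly than the paper's sketch (which silently assumes the compression argument survives the decoder's dependence on $K$), and in fact proves the strictly sharper bound $\ell\le H(\dist_{\Steg})$ without the ceiling --- the ceiling in the paper is an artifact of invoking the achievability direction of source coding, which your argument never needs. The only point worth being careful about, which you already flag, is that a randomized decoder correct with probability exactly $1$ must act deterministically on the encoder's support for almost every $(K,\st)$; that observation is what licenses $H(\mathbf{M}\mid\mathbf{C},K)=0$.
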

\begin{proof}
This theorem is an immediate consequence of Shannon's source coding
theorem~\cite{shannon1948mathematical,shannon1959coding}.
This theorem states: (i) that no random variable $X$
can be encoded with fewer bits than $H(X)$ in expectation and (ii) that there exists
an encoding of $X$ with expected length at most $\lceil H(X)\rceil$. Therefore, if
$\lceil H(\dist_{\Steg}) \rceil < H(\mathbf{M})$ and $\Steg$ is correct with probability $1$, then
one could encode $\mathbf{M}$ as the covertext, using a random key $K$ and random state $\st$,
$\mathbf{C} \gets \Enc(K,\st,\mathbf{M})$ and decoded with $\Dec(K,\st,\mathbf{C})$.
Then, based on the source coding theorem, $\mathbf{C}\sim \dist_{\Steg}$
(and therefore $\mathbf{M}$) can be encoded with $\lceil H(\dist_{\Steg}) \rceil < H(\mathbf{M})$
bits, which is a direct contradiction.
\end{proof}

In contrast to the above theorem and to illustrate the difficulty of preventing covert
exfiltration, we recall a well-known theorem from~\cite{C:HLvA02} stated in
Theorem~\ref{thm:stego-exists}. This shows that assuming classical cryptography exists
at all, i.e., one-way functions exist, then one can construct a steganographic scheme
that is near-optimal. The following theorem shows that there is some steganographic
scheme which fools any efficient \emph{policy} under minimal assumptions. Here
and the rest of the paper, a policy is a function $\warden : \supp(\dist) \to \bits$,
which aims to output $1$ on covertexts generated steganographically but not on
samples from the honest distribution.
We say that $\warden$ has a false positive rate $\alpha$ over $\dist$ if
$\Pr_{x\sim \dist}[\warden(x) = 1] = \alpha$.

\begin{theorem}[Due to~\cite{C:HLvA02}]\label{thm:stego-exists}
Let $\warden$ be an efficiently-computable policy with a false positive rate $\alpha$
over $\dist$. Then, assuming a $\delta$-secure pseudorandom function $\mathsf{PRF}$ exists,
there is an inefficient steganography scheme $\Steg$ for $\dist$ which,
on every message that is $\minH(\dist) - O(\log(1/\varepsilon))$ bits long,
is correct with probability at least $1-\varepsilon$ and is caught with probability at most
$\alpha + \delta$ by $\warden$.
\end{theorem}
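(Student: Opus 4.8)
The plan is to re-instantiate the classical rejection-sampling stegosystem of~\cite{C:HLvA02} and to analyze it through two hybrids — first replacing $\mathsf{PRF}$ by a truly random function, then undoing that replacement. The two endpoints share a key $K \getsr \bits^{\secpar}$ and read $\mathsf{PRF}_K$ as a hash $F_K:\Toks^{\le\coverlen}\to\bits^{\msglen}$ with $\msglen=\minH(\dist)-c\log(1/\varepsilon)$ for a constant $c$ to be fixed. On input $m$, $\Enc$ draws $c\sim\dist$ independently until $F_K(c)=m$ and outputs that $c$; $\Dec$ returns $F_K(c)$. Decoding is correct whenever encoding terminates having found a matching $c$, so correctness fails only on the event that $F_K(c)\neq m$ for every $c\in\supp$. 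The scheme is ``inefficient'' precisely because the expected number of rejection-sampling attempts is $\approx 2^{\msglen}\approx 2^{\minH(\dist)}$.

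First, replace $F_K$ by a uniformly random $R:\Toks^{\le\coverlen}\to\bits^{\msglen}$. For correctness, note that $\minH(\dist)\ge\msglen$ forces $|\supp|\ge 2^{\minH(\dist)}=2^{\msglen}/\varepsilon^{c}$, so a union bound gives $\Pr_R[\forall c\in\supp:\ R(c)\neq m]=(1-2^{-\msglen})^{|\supp|}\le e^{-1/\varepsilon^{c}}\le\varepsilon$ once $c\ge 1$ and $\varepsilon$ is small. For security, observe that on the success event the covertext is distributed exactly as $\dist$ conditioned on $R(c)=m$; averaging over $R$, a short concentration (equivalently, coupling) argument shows this distribution lies within $O(\varepsilon^{\Omega(1)})$ of $\dist$ in total variation — the quantitative input being that, because $\msglen$ is $\Omega(\log(1/\varepsilon))$ below $\minH(\dist)$, a typical run touches only an $O(\varepsilon^{\Omega(1)})$-fraction of the total channel mass before a hit, so conditioning on that hit barely perturbs $\dist$. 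Consequently any policy, in particular $\calP$, fires on the idealized covertext distribution with probability at most $\alpha+O(\varepsilon^{\Omega(1)})$.

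To undo the idealization, consider the distinguisher that, given an oracle $f$, runs $\Enc^{f}(m)$ and outputs $\calP$ applied to the result; with overwhelming probability it issues only $\approx 2^{\msglen}$ oracle queries and is otherwise as cheap as $\calP$, so it belongs to the class against which $\mathsf{PRF}$ is $\delta$-secure (this is exactly where ``a $\delta$-secure $\mathsf{PRF}$ exists'' is invoked — secure against adversaries of roughly this query budget). Hence both the correctness probability and the caught probability change by at most $\delta$ when $R$ is replaced by $\mathsf{PRF}_K$, giving correctness at least $1-\varepsilon-\delta$ and caught probability at most $\alpha+\delta+O(\varepsilon)$; the residual $O(\varepsilon)$ is lower-order and is absorbed into the accounting convention of the cited theorem (one may also simply shrink $\varepsilon$ relative to $\delta$). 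This establishes the claim.

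The only genuinely delicate step is the total-variation bound inside the random-function hybrid: one must show that $\dist$ conditioned on $R(c)=m$, \emph{averaged over the random $R$}, is close to $\dist$ — not merely that each coordinate matches in expectation — which amounts to controlling the fluctuation of the random mass $\dist(\{c:R(c)=m\})$, and this is precisely where the min-entropy slack $\minH(\dist)-\msglen=\Omega(\log(1/\varepsilon))$ enters, since it pins down the relative granularity $2^{-\minH(\dist)}/2^{-\msglen}=\varepsilon^{\Omega(1)}$ of that sum. A secondary subtlety, already flagged, is that the PRF reduction is only as strong as the PRF's security against adversaries making on the order of $2^{\msglen}$ queries.
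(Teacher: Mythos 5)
The paper does not actually prove this statement: it is quoted as a known result of~\cite{C:HLvA02}, so there is no in-paper argument to compare against. Your reconstruction is the standard rejection-sampling stegosystem and its analysis skeleton is sound: the union bound for correctness in the random-function hybrid is right (with $\msglen=\minH(\dist)-c\log(1/\varepsilon)$ one gets $|\supp|\ge 2^{\msglen}\varepsilon^{-c}$ and failure probability $(1-2^{-\msglen})^{|\supp|}\le e^{-\varepsilon^{-c}}$), and you correctly isolate the one genuinely delicate step, namely that the output law is $\dist$ conditioned on the random set $\{c: R(c)=m\}$, averaged over $R$, and that its total-variation distance from $\dist$ is governed by the fluctuation of the random mass $\dist(\{c:R(c)=m\})$ around $2^{-\msglen}$; the second-moment computation gives relative deviation $\sqrt{2^{\msglen-\minH(\dist)}}=\varepsilon^{c/2}$, which you gesture at but do not carry out. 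Three honest mismatches with the literal statement remain, all of which you flag: (i) the detection probability comes out as $\alpha+\delta+O(\varepsilon^{\Omega(1)})$ rather than $\alpha+\delta$; (ii) correctness comes out as $1-\varepsilon-\delta$ rather than $1-\varepsilon$; and (iii) the PRF-to-random-function hybrid requires $\delta$-security against distinguishers making on the order of $2^{\msglen}$ queries, which is a nontrivial strengthening of the assumption for this deliberately inefficient encoder. These reflect the informality of the cited statement more than a flaw in your argument, but if you were writing this up for real you should either state the theorem with the extra $O(\varepsilon^{\Omega(1)})$ and $\delta$ slack or explicitly fold them into $\varepsilon$ and $\delta$ by renaming; and you should complete the Chebyshev/chi-squared step rather than assert it, since it is the only place the min-entropy gap is actually consumed.
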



Finally, we state a known result from information theory in Theorem~\ref{thm:capacity-limits}.
This result gives a lower bound on the number of messages that must be sent through
independent channels, each of capacity $C$. This is a useful result to recall for the
bounds we derive later in Section~\ref{sec:info-theory-bounds}.
\begin{theorem}[Capacity limitations]\label{thm:capacity-limits}
Let $X$ be a uniform random variable with support of size $q$ (so that
$H(X) = \log q$)
and $Y^n$ be $n$ messages sent through a channel of capacity $C$. Then, in order
to transmit $X$ with error at most $\varepsilon$, requires that
\(
    n \ge \big(\log q - h(\varepsilon) - \varepsilon\log(q-1)\big) / C.
\)
Notably, when $Y$ has alphabet size $q'$ and is uniformly
random with probability $p$,
then we obtain
\[
    n \ge \frac{\log q - h(\varepsilon) - \varepsilon\log(q-1)}{\log q' - h(p) - p\log (q'-1)}.
\]
\end{theorem}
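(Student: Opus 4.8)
The plan is to treat this as the standard information-theoretic converse: Fano's inequality, the data-processing inequality, and the definition of channel capacity. Model the situation as suggested by the statement: $X$ is (without loss of generality) uniform on a set of size $q$, so $H(X)=\log q$; an encoder maps $X$ to a length-$n$ channel input block $Y^n$; this block is passed through the channel to produce an output block $Z^n$; and a decoder outputs an estimate $\hat X=\hat X(Z^n)$ with $\Pr[\hat X\neq X]\le\varepsilon$. These three maps yield the Markov chain $X\to Y^n\to Z^n\to\hat X$.

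The first step is Fano's inequality applied to the pair $(X,\hat X)$. Since $X$ takes $q$ values, $H(X\mid\hat X)\le h(P_e)+P_e\log(q-1)$ with $P_e=\Pr[\hat X\neq X]$; because $t\mapsto h(t)+t\log(q-1)$ is non-decreasing on $[0,1-1/q]$ and $P_e\le\varepsilon$, this gives $H(X\mid\hat X)\le h(\varepsilon)+\varepsilon\log(q-1)$, hence $I(X;\hat X)=H(X)-H(X\mid\hat X)\ge\log q-h(\varepsilon)-\varepsilon\log(q-1)$. The second step pushes this quantity through the channel: by the data-processing inequality along the Markov chain, $I(X;\hat X)\le I(Y^n;Z^n)$, and for a channel of capacity $C$ used $n$ times one has $I(Y^n;Z^n)\le nC$ (for a memoryless channel this is the one-line bound $I(Y^n;Z^n)\le\sum_i H(Z_i)-\sum_i H(Z_i\mid Y_i)\le nC$; in general it is essentially the definition of $C$). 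Chaining these inequalities gives $nC\ge\log q-h(\varepsilon)-\varepsilon\log(q-1)$, which is the first claimed bound after dividing by $C$.

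For the ``notably'' specialization I would substitute the capacity of the stated channel, namely the one that transmits each symbol faithfully with probability $1-p$ and otherwise replaces it by a uniformly chosen one of the other $q'-1$ symbols — a $q'$-ary symmetric channel. Its capacity is a routine computation: $C=\max_{p_Y}\big(H(Z)-H(Z\mid Y)\big)$, where $H(Z\mid Y)=h(p)+p\log(q'-1)$ regardless of the input symbol, while $H(Z)\le\log q'$ with equality attained by a uniform input (feasible thanks to the channel's symmetry), so $C=\log q'-h(p)-p\log(q'-1)$. Plugging this $C$ into the general bound yields the second displayed inequality.

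I do not anticipate a genuine obstacle here; the two points that deserve care are (i) replacing the exact error probability $P_e$ by the hypothesis bound $\varepsilon$ inside Fano, which relies on monotonicity of $h(t)+t\log(q-1)$, and (ii) making precise what ``a channel of capacity $C$, used $n$ times'' means — it gives $I(Y^n;Z^n)\le nC$ immediately for memoryless channels (and this remains true even with feedback, by Shannon's feedback theorem), and is the operational definition in the general case. The capacity evaluation for the $q'$-ary symmetric channel is standard once one observes the channel is symmetric, so the uniform input is optimal.
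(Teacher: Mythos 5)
Your proposal is correct and follows essentially the same route as the paper's proof: Fano's inequality to lower-bound $I(X;\cdot)$ by $\log q - h(\varepsilon) - \varepsilon\log(q-1)$, the bound $I \le nC$ for $n$ uses of a memoryless channel, and substitution of the symmetric channel's capacity for the second display. You are somewhat more careful than the paper (explicit Markov chain with data processing, monotonicity to pass from $P_e \le \varepsilon$ to $\varepsilon$, and an actual derivation of $C = \log q' - h(p) - p\log(q'-1)$, which the paper merely asserts), but these are refinements of the same argument rather than a different one.
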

\begin{proof}
For two jointly distributed variables $(X,Y^n)\sim\dist$ and any function
$f$, let $E$ be the event that
$X \neq \tilde{X}$, where $(X,Y^n) \sim \dist$ and $\tilde{X} = f(Y^n)$.
Then,
\[
    H(X\mid Y^n) \le h(\Pr[E]) + \Pr[E]\cdot \log(q - 1).
\]
Using the mutual information between $X$ and $Y^n$,
\[
    I(X ; Y^n) = H(X) - H(X\mid Y^n) \ge \log q - (h(\Pr[E]) + \Pr[E]\cdot \log(q - 1)).
\]
And, any encoding scheme that uses a (memoryless) channel $n$ times, $I(X;Y^n) \le nC$.
Combining these and using target error rate $\Pr[E] = \varepsilon$,
we establish the claimed bound
\[
    \log q - (h(\varepsilon) + \varepsilon\cdot \log(q - 1)) \le I(X;Y^n) \le nC.
\]
Finally, the further claim comes from the fact that the maximum
capacity for a channel with alphabet size $q'$ and which is
uniformly random with probability $p$ is
$C = \log q' - h(p) - p\log (q'-1).$
\end{proof}

\section{Algorithms}\label{sec:algorithms}
\subsection{Inverse Probability Transform (IPT) Fixed-Seed Sampling Likelihood (\FSSL) Estimators}

\begin{algorithm}[H]
\caption{\CGSFull~(from model $\theta$ and context $\context$)}
\label{alg:convolved-gaussian-score}
    \begin{algorithmic}[1]
    \Require model parameters $\theta$;token probabilities $p \in \Delta^{|\Toks|}$; observed token $t^\star \in \{1,\dots,|\Toks|\}$; seed $\seed$; context position $i$; width $\sigma>0$; small $\varepsilon>0$
    \State $\mu \gets \mathrm{Uniform}(0,1; \RO(\seed\| i))$ \Comment{(1) sample random value in $[0,1]$}
    \State $p \gets \mathrm{PredictTokenProbs}(\theta, x_{1:t-1})$ \Comment{$p \in \Delta^{|\Toks|}$; include same temperature/top-$k$/nucleus policy as generation}
    \State $\bar F \gets \mathrm{concat}(0,\ \mathrm{cumsum}(p))$ \Comment{token CDF with $\bar F_0=0$, $\bar F_j=\sum_{i\le j}p_i$}
    \State $a \gets \bar F_{\,t^\star-1}$, \quad $b \gets \bar F_{\,t^\star}$ \Comment{start/stop of $t^\star$'s interval} 
    \State $\Phi(u;\mu,\sigma) \gets \tfrac{1}{2}\!\left[1+\mathrm{erf}\!\big(\tfrac{u-\mu}{\sqrt{2}\sigma}\big)\right]$ \Comment{Gaussian CDF}
    \State $\text{mass} \gets \Phi(b; x,\sigma) - \Phi(a; x,\sigma)$ \Comment{(3) integral of Gaussian over $[a,b]$}
    \State $S \gets \log(\text{mass} + \varepsilon)$ \Comment{(4) return log-score of observed token}
    \State \Return $S$
\end{algorithmic}
\end{algorithm}

\begin{figure}[htbp]
    \centering
    \includegraphics[width=0.75\linewidth]{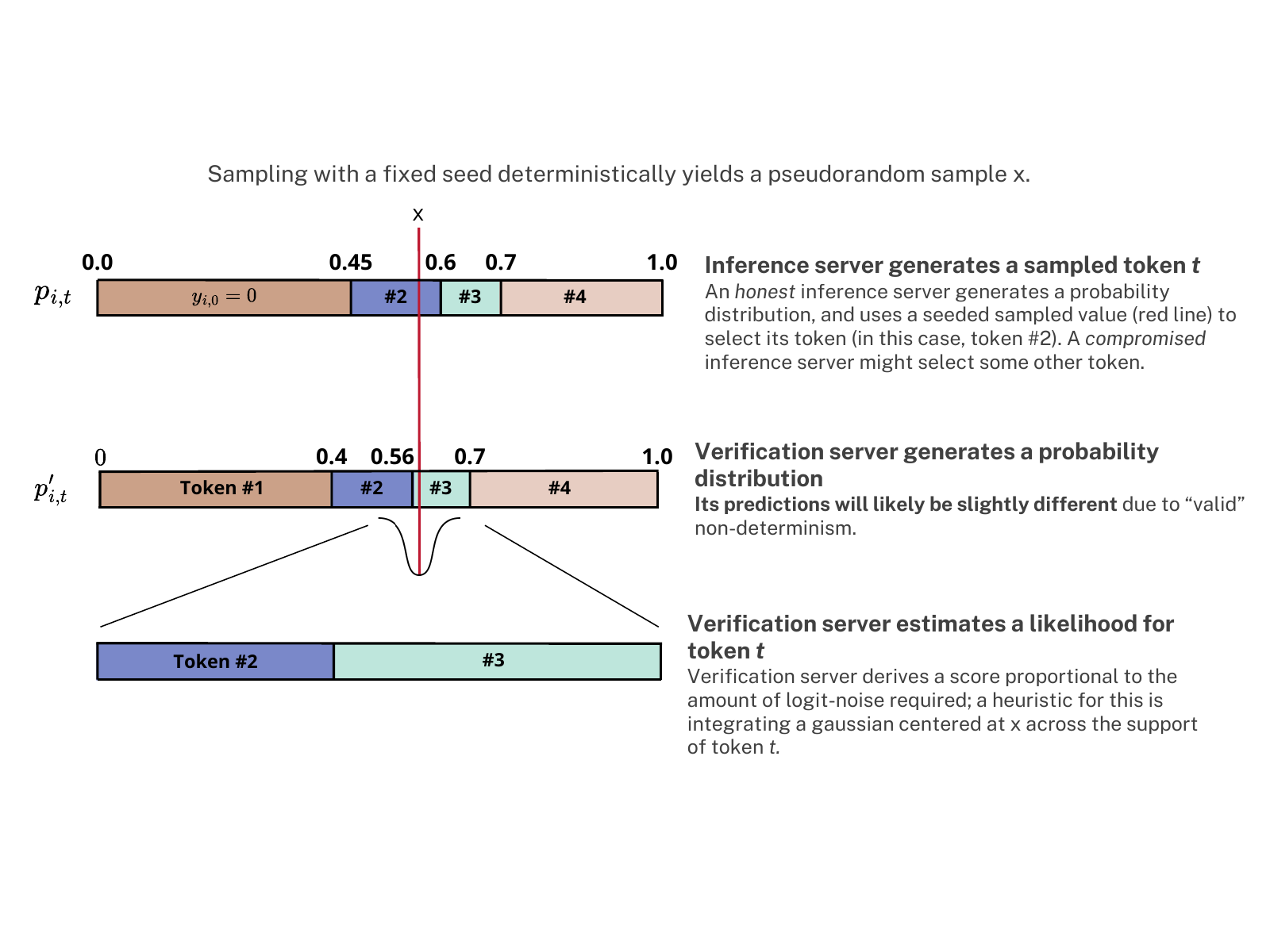}
    \caption{Diagram of the \CGSFull~method. The \CGSFull~estimates the likelihood of a token under the inverse probability transform by integrating a Gaussian over the token's probability interval.}
    \label{fig:CGS-diagram}
\end{figure}

\subsection{Gumbel-Max \FSSL Estimators (\GLS) } \label{sec:GLS-method}

\begin{algorithm}[H]
\caption{\GLSestimator (used to estimate \GLS)}
\label{sec:GLS}
\begin{algorithmic}[1]
\STATE \textbf{Input:} Claimed token $t^* \in \Toks$, verifier logits $\hat{\ell} \in \mathbb{R}^{|\Toks|}$, Seed $\seed$, temperature $T$, filtering parameters $(k, p)$, active subset size $s$, noise CDFs/PDF $\{F_{\varepsilon_i}, f_{\varepsilon_{t^*}}\}$, number of samples $M$
\STATE \textbf{Output:} Likelihood estimate $\hat{\pi}_{t^*} \in [0, 1]$
\STATE
\STATE Sample Gumbel noise $G_{\seed,i} \sim \text{Gumbel}(0,1; \RO(\seed\| i))$ for $i=1,\ldots,|\Toks|$
\STATE $S \gets \{i_1, \ldots, i_s\}$ where $i_j$ are top-$s$ indices as ranked by $\hat{\ell}_i + T \cdot G_{\seed,i}$
\STATE
\STATE $\tau \gets \min_{i \in \{1,\ldots,|\Toks|\}} \hat{\ell}_i$ such that $i$ does not satisfy top-$k$ or top-$p$ filter
\STATE
\FOR{each token $i \in S$}
    \STATE $A_{\seed,i} \gets \hat{\ell}_{t^*} - \hat{\ell}_i + T(G_{\seed,t^*} - G_{\seed,i})$
    \STATE $B_i \gets \tau - \hat{\ell}_i$
\ENDFOR
\STATE
\STATE $\hat{\pi}_{t^*} \gets 0$
\FOR{$j = 1$ to $M$}
    \STATE Sample $x_j \sim f_{\varepsilon_{t^*}}$
    \IF{$\hat{\ell}_{t^*} + x_j > \tau$}
        \STATE $\pi_{t^*,j} \gets \prod_{i \in S} F_{\varepsilon_i}\left(\max(x_j + A_{\seed,i}, B_i)\right)$
        \STATE $\hat{\pi}_{t^*} \gets \hat{\pi}_{t^*} + \pi_{t^*,j}$
    \ENDIF
\ENDFOR
\STATE
\RETURN $\hat{\pi}_{t^*} / M$
\end{algorithmic}
\end{algorithm}

\subsection{Naive Methods for Sampled Token Verification}\label{sec:naive-sampled-token-verif}

\begin{algorithm}[H]
  \caption{Sampled token verification through teacher forcing}
  \label{alg:sampled-token-verif}
  \begin{algorithmic}[1]
    \Require input sequence $\context$, target output sequence $y$, seed $\seed$, model $\theta$
    \State $\text{diff} \gets 0$
    \For{$i \gets 1$ to $|y|$}
      \State $\;\;\; \text{next} \gets \text{SampleNextToken}(\theta,\context; \RO(\seed\| i))$
      \If{$\text{next} = y_i$}
        \State $\;\;\;\;\text{diff} \gets \text{diff} + 1$
      \EndIf
      \State $\;\;\; \context \gets \context \| y_i$ \Comment{teacher forcing: append ground-truth token}
    \EndFor
    \State $\text{score} \gets \dfrac{\text{diff}}{|y|}$
    \State \Return $\text{diff},\ \text{score}$
  \end{algorithmic}
\end{algorithm}

\begin{algorithm}
  \caption{Sampled token verification via prefill}
  \label{alg:single-prefill-verif}
  \begin{algorithmic}[1]
    \Require input sequence $x$, target output $y = (y_1,\dots,y_T)$, seed $\seed$, model $\theta$
    \State $z \gets x \circ y$ \Comment{concatenate input and full target for caching}
    \State $\ell_{0:|z|-1} \gets \text{Prefill}(\theta, z)$ \Comment{single forward pass; next-token logits at each position}
    \State $\text{diff} \gets 0$
    \For{$i \gets 1$ to $T$} \Comment{predict $y_i$ from context $x \circ y_{<i}$}
      \State $p_i \gets \text{Softmax}\!\left(\ell_{|x|+i-1}\right)$
      \State $\hat{y}_i \gets \text{SampleFrom}(p_i; \RO(\seed\| i))$
      \If{$\hat{y}_i = y_i$}
        \State $\text{diff} \gets \text{diff} + 1$
      \EndIf
    \EndFor
    \State $\text{score} \gets \dfrac{\text{diff}}{T}$
    \State \Return $\text{diff},\ \text{score}$
  \end{algorithmic}
\end{algorithm}

\subsection{Rank-Aware \GLS~Scoring} \label{sec:rank-aware-GLS}

\begin{algorithm}
\caption{Per-Token Risk Classification via \GLS~and Rank}
\label{alg:rank-aware-GLS}
\begin{algorithmic}[1]
\Require token $i$ at position $t$, \GLS~threshold $\tau$, rank cutoff $R$
\State Compute $\mathrm{\GLS}_t(i)$ and $\mathrm{rank}\!\big(\mathrm{logits}_t(i)\big)$
\If{$\mathrm{\GLS}_t(i) \ge \tau$} 
  \State \Return \textsc{safe} (0)
\Else \Comment{$i \notin A_t(\tau)=\{j:\mathrm{\GLS}_t(j)\ge\tau\}$}
  \If{$\mathrm{rank}\!\big(\mathrm{logits}_t(i)\big) > R$}
    \State \Return \textsc{danger} (2)
  \Else
    \State \Return \textsc{suspicious} (1)
  \EndIf
\EndIf
\end{algorithmic}
\end{algorithm}

\section{Mechanism Design security solutions: Random Sampling and the Panopticon} \label{sec:sampling-rates}

\subsection{Mechanism Design}

A valuable observation in the design on a verification system is that we can exploit a classic "panopticon" mechanism design structure, proposed in the 18th century by Willey Reveley to monitor prisoners with a limited number of guards. In the original setting, although it is physically impossible for the single guard to observe all the inmates' cells at once, one can design a building to allow a guard to see the prisoners without the prisoners seeing the guard. The fact that the inmates cannot know when they are being watched motivates them to act as though they are all being watched at all times.

In the same vein, in environments with persistent identity (such as API access tied to verified accounts or Know-Your-Customer laws), one does not need to catch all possible exfiltrations if the expected cost of being caught exfiltrating model weights is sufficiently high. Though critically, such a system is predicated on a credible source of randomness, where the exfiltrators cannot predict what the auditor will audit, and also the exfiltrators cannot exfiltrate all of the data in a single action (analogous to all the prisoners escaping at once). Critically, we do note that required named-entities is not a necessary requirement for our scheme to work.

\subsection{Derivation of Required Sampling Rate} \label{sec:sampling-rate-derived}

The ``panopticon'' framing gives an institutional rationale for why random verification can deter exfiltration: even partial, unpredictable monitoring can induce compliance when the expected cost of detection is high. 

The key design question then becomes: \emph{given a limited verification budget, what sampling rate is sufficient to detect exfiltration attempts with high confidence?} 

This question connects mechanism design to classical statistical estimation. By modeling each message as a Bernoulli trial where a fraction $p$ of messages contain covertly encoded content, we can derive the required sampling rate to achieve a desired confidence level in detecting at least $k$ instances of suspicious activity, classically referred to as the \emph{inverse binomial problem}.

In this setting, we think of all the messages as a collection of samples $S$, and there is some fraction $p$ of messages that encode a steganographic message. Presuming that the verifier requires $k$ steganographically-encoded samples to detect an exfiltration attempt, the question is how many samples need to be drawn.

For some confidence level $c$ (e.g., 99\% confidence), we seek to solve
\[
\Pr(\text{at least } k \text{ encoded-samples in } n \text{ draws}) \geq c
\]

In other words,
\[
\Pr(X \geq k) \geq c, \quad \text{where } X \sim \text{Binomial}(n, p).
\]

\begin{itemize}[noitemsep]
    \item For $p = 1\%$, $c = 99\%$, $k = 1$, this is $n = 460$ samples.
    \item For $p = 1\%$, $c = 99\%$, $k = 100$, this is $n = 12{,}460$ samples.
\end{itemize}

\section{Trust assumptions for Sampled Output Verification}\label{sec:trust-assumptions}

\subsection{Assumption 1: The Verification Server is trusted to correctly execute its code}
In our proposed solution, it is important that the verification server correctly executes its code. When the verification code itself is untrusted or run in an untrusted environment, an adversary could potentially compromise and circumvent the policy enforcement.

In order to achieve this level trust, we list a few possible properties of the Verification Server that might support this assumption:
\begin{itemize}[noitemsep]
    \item \textbf{Trusted compute:} It can use GPU TEEs or even CPUs to verify inference—even if doing so is inefficient.
    \item \textbf{Verifiable, simple code}: Its code does not need to be highly optimized, and any changes to it can be closely monitored.
    \item \textbf{Independent root of trust:} Attackers will now need to compromise the Verification Server in addition to the Inference Server.
    One could imagine spinning up \textbf{several independent Verification Servers} operated by independent teams of engineers/agents and running in mutually isolated environments.
    
\end{itemize}

\subsection{Assumption 2: All messages passed between users and the Inference Server are correctly logged and securely stored}

If an attacker could ensure that some inference outputs transmitted out of the data center are \textbf{not logged}, then it could evade the detection system by smuggling model parameters out via these unmonitored channels. Similarly, the attacker could ensure that inference outputs are \textbf{logged incorrectly}—e.g. that the logs contain outputs that the model would have actually generated, even though what was actually transmitted out of the data center was something else entirely.

Similarly, if an attacker can delete or alter the logged data, then the Verification Server's code may not evaluate the correct logs. In this world, even if the Verification Server is itself trusted to execute code correctly (Assumption \#1), it would not be able to detect model exfiltration.

However, if the logs are correctly written in the first place, then this attack is only possible if the attacker can modify the logs after they've been written.

One idea would be to secure inference records and model artifacts against tampering using \textbf{standard cryptographic commitment schemes}. For instance, the developer can maintain a Merkle tree with leaves for each inference record (an \textbf{``inference ledger''}), and another with leaves for each model checkpoint (a \textbf{``model ledger''}). Whenever the Inference Server loads inference inputs, outputs, and models into memory, it hashes them and checks whether they are included in the relevant ledgers.

Some notes on how this might work in practice:
\begin{itemize}[noitemsep]
    \item The hashes can be signed using a \textbf{private key only available on the trusted logging device}
    \item Any systems that reference entries in the model ledger or inference ledger should store local records of any hashes they've encountered—periodically they could then check whether these historical hashes still exist in the relevant ledgers. This means that \textbf{attacking the ledgers would require compromising many independent systems simultaneously}
\end{itemize}

\subsection{Assumption 3: The attacker can neither predict nor alter which logs the Verification Server will sample to verify}

It is critical that an adversary attempting to exfiltrate weights does not know which outputs are going to be audited. Otherwise, an adversary may simply exfiltrate weights through the messages not-audited. This assumption is critical for this scheme to work; however, many schemes for random sampling exist. A simple proposal like random sampling within a trusted-execution environment is almost certainly sufficient; but in the event that it isn't this has been studied in \href{https://a16zcrypto.com/posts/article/leader-election-from-randomness-beacons-and-other-strategies/}{great depth} in blockchain literature. We discuss how to do this in detail in Section \ref{sec:how-to-randomly-sample}.

\subsection{Assumption 4: The Verification Server itself does not pose a risk of model exfiltration}

We assume that the Verification Server is not itself an attack vector. One approach would be to limit the outbound bandwidth such that rapid model exfiltration is infeasible. Note that the required outbound bandwidth from the Verification Server is likely qualitatively lower than that required by the Inference Server, rendering egress controls a more natural defense in this case. For example, all traffic beyond the policy decision could suppressed, which would heavily rate-limit any potential exfiltration even when an adversary had control of the Verification Server.

\subsection{Assumption 5: The randomness used for inference is trusted and non-malleable}
\label{assump:trusted-rng}

If the adversary can manipulate randomness, they can transmit information by selecting seeds/samples while leaving per-step likelihoods within normal ranges. In this regime the verifier’s guarantees degrade sharply: many exfiltration strategies become indistinguishable from benign stochasticity, and a malicious provider could even watermark or backdoor outputs via RNG tampering while remaining white-box undetectable \cite{goldwasser2024plantingundetectablebackdoorsmachine,pseudo_random_error_correcting}.

\paragraph{Statement.}
The random coins used by the sampler during inference—and the identical coins used by the verifier—are generated by a trusted cryptographic RNG, bound to the request, and cannot be chosen, biased, or adaptively influenced by the adversary.

\paragraph{Rationale.}
If an adversary can steer the randomness (e.g., select seeds, bias PRNG state, or grind over retries), the sampler itself becomes a covert channel: the adversary can encode arbitrary payloads by “choosing the random samples.” Closely related phenomena are known in training-time settings, where tampering with algorithmic randomness enables \emph{undetectable backdoors} that are indistinguishable to any efficient observer \cite{goldwasser2024plantingundetectablebackdoorsmachine}. Analogously, inference-time coin-fixing enables steganographic signaling without changing logits in a way a verifier could reliably flag (see also pseudo-random watermarking constructions \cite{pseudo_random_error_correcting}).

\paragraph{Threat model scope.}
Under this assumption, the adversary:
(i) cannot specify or influence the inference seed;
(ii) cannot bias randomness indirectly via user-exposed knobs (e.g., temperature, top-$k$, top-$p$) beyond an attested, fixed configuration for the session; and
(iii) cannot “grind” seeds by adaptively restarting until a desired draw appears (retry budgets are bounded and auditable).

\subsection{Message-Sampling Secure Seed Expansion} \label{sec:how-to-randomly-sample}

A critical part of deployment to prevent adversarial exfiltration is proper seed management. If the adversary is able to predict or precompute which tokens will be selected during generation,
they may be able to choose seeds and prompt combinations carefully to maximize the exfiltration
potential. To be sure that this isn't possible, it's important to at the very least:
\begin{enumerate}[noitemsep]
    \item Use a sufficiently long seed to prevent precomputing and exhaustive searches
    for good seeds.
    \item Use a random key and cryptographically secure hash function to expand the seed. More
    specifically, one can expand the seed $\sigma$ by computing $H(K,\sigma)$, where $H$ is
    a secure hash, e.g. SHA3, and $K$ is a random key.
\end{enumerate}

Beyond the above, we outline further measures that one can take in
Section~\ref{sec:info-theory-bounds}. These additional constraints can provably limit the
exfiltration of adversaries but may come at some inconvenient costs to the system architecture.
Specifically, one may choose to also:
\begin{enumerate}[noitemsep]
    \item Expand the seed adaptively, based on tokens chosen previously. So, computing the 
    randomness for the next token as $H(K,\sigma,x)$, where $x$ is the prompt and tokens computed 
    thus far (used in Theorems~\ref{thm:no-seed-interactive-bound} and~\ref{thm:no-seed-bound}).
    \item Expand initial seed into much longer seed in trusted system, to prevent seed
    exfiltration. So, an API server may compute a large $\sigma' \gets H(K,\sigma)$ that is given
    to the inference server as  and then continue
    further expansion by running $H(K,\sigma',x)$ (used in Theorem~\ref{thm:no-seed-bound}).
    \item Keep the expansion key in a trusted system and require interaction to get the
    randomness for the next token (used in Theorem~\ref{thm:no-seed-interactive-bound}).
\end{enumerate}
Although these measures do require careful implementation at the system level, they can provide
significant returns in protecting the theoretical exfiltration rate. And, although cryptographic
operations and interaction are often expensive in real-world deployments, we note that the
additional cost of these operations does not bottleneck performance compared to the inference
time of generative AI models.

\subsection{Plausibility of Necessary Assumptions}\label{sec:plausibility-of-bounds}
Theorem~\ref{thm:perplexity-bound}, and to a lesser extent
Theorems~\ref{thm:no-seed-interactive-bound} and~\ref{thm:no-seed-bound},
give the adversary exfiltrating data a significant amount of power.
For example, we assume both the sender and receiver have access to the model
distribution for the given prompt and sometimes even the distribution of honest
responses accounting for valid non-determinism.

Such assumptions may appear too strong at first glance. How likely is it that
the adversary is able to obtain all of this information? As with many areas of
cryptography, we aim to overestimate the power of the adversary in order to
obtain \emph{stronger} bounds on realistic adversarial capability. That being
said, we argue that these assumptions are achieved by some very plausible
adversaries:
\begin{itemize}
    \item An adversary may have already exfiltrated a small model was already 
    exfiltrated that is accessible through a system's API, and the adversarial 
    server is now exfiltrating a larger model through queries to the smaller 
    model.
    \item The adversary may have distilled a small model on the outputs of a 
    large model for specific prompts or domains. Then, by exfiltrating only on 
    that set of prompts, the receiver will have an extremely close probability 
    distribution to the large model.
\end{itemize}
In Theorem~\ref{thm:perplexity-bound} we also assume the outside adversary obtains the generation seed. This way we can prove a bound without
making as many trust and cryptographic assumptions. Also, in many real-world
deployments, one may want the flexibility to modify seed generation
to either speed up or simplify the infrastructure and avoid deployed extremely secure seed expansion (although this cost is much smaller than the cost of inference).

\section{How Specific Implementations may Affect Verification} 

The exact inference stack used dramatically affects the requirements on the verification server. In this work we formulate sampling from an LLM as a two step process: (i) generate a probability distribution, and (ii) sample a single token from that probability distribution. The majority of this work is showing how to quantify the non-determinism that may arise from slightly different probability distribution generating processes. However, for this to work, this is predicated on a precise and easy-to-reproduce method of sampling from the probability distribution. In this work we show how to do this for the Inverse Probability Transform method of sampling and the Gumbel-Max trick.

Modern inference stacks are generally more complicated than either of these two methods, employing tricks like speculative decoding, and this needs to be shared across the inference server and verification server, which we discuss in Section \ref{sec:speculative-decoding-verification}.

Another noteworthy consideration is how random samplers update their state, and synchronizing this across the inference and verification servers. For example, in some implementations (like Ollama \cite{ollama_docs}),
 if you filter the probability distribution with top-p and the token array ends up with only one entry, the inference server will not sample at all, and use a greedy decoding step. This means that the RNG state does not advance for the inference server, but may for the verification server. However, this issue does not arise in other common implementations like vLLM \cite{vllm}.

\subsection{Speculative Decoding}
\label{sec:speculative-decoding-verification}

Speculative decoding \cite{leviathan2023fastinferencetransformersspeculative, chen2023acceleratinglargelanguagemodel, zhang2025learningharmonizedrepresentationsspeculative, li2025eagle3scalinginferenceacceleration} has emerged as a widely-adopted technique to accelerate LLM inference by using a smaller ``draft'' model to propose candidate tokens, which are then verified in parallel by the larger ``target'' model. This approach can achieve substantial speedups (often $1.5{-}2.5\times$) while preserving the exact output distribution of standard autoregressive sampling from the target model.

In the main body of this work, we focus our verification methodology on standard autoregressive sampling (Section~\ref{sec:implement-verif-server}), where a single model generates one token at a time. This choice reflects two practical considerations: 
\begin{enumerate}[topsep=1pt,itemsep=1pt]
    \item Production systems exhibit heterogeneity in their acceleration strategies: while speculative decoding is common, implementations vary significantly across frameworks and deployments.
    \item At the time of writing, vLLM version 0.10 does not natively support speculative decoding.
\end{enumerate}

Nevertheless, we recognize that speculative decoding represents an important class of real-world inference strategies. In this appendix, we demonstrate that our verification framework naturally extends to handle speculative decoding with minimal modifications. 

\paragraph{Background: A Speculative Decoding Algorithm}

In speculative decoding introduced by  \cite{leviathan2023fastinferencetransformersspeculative, chen2023acceleratinglargelanguagemodel}, a draft model generates $K$ candidate tokens, which the target model then verifies in parallel. For each candidate, the target model either accepts it, rejects and resamples from an adjusted distribution, or (if all candidates are accepted) samples one additional bonus token. This process maintains exact distributional equivalence to standard autoregressive sampling from the target model alone.

Different speculative decoding algorithms use different methods for accepting tokens from the draft model. For example, in the original work proposing speculative decoding \cite{leviathan2023fastinferencetransformersspeculative} the verifier only accepts a token if it is the exact token that the verifier model would have generated in that position. However, in other methods a draft token is accepted not only if it is the token the verifier would have sampled, but alternatively if it is a sufficiently likely token as well  \cite{li2025eagle3scalinginferenceacceleration}.

Depending on the specific method used for speculative decoding the verification process may need to be modified. In some cases this can cause a larger rate of exfiltratable information, as more tokens \textit{could} have been sampled by the drafter; however, future work could also verify the draft model process in the speculative decoding process as well.

\section{Response to Detecting Exfiltration Attempts}\label{sec:policy-response-to-exfil}

This work focuses primarily on the \emph{detection} of potential model-weight exfiltration attempts rather than on post-detection remediation. In particular, we formalize the task of determining whether a batch of model outputs contains covertly embedded information. Following the literature on steganographic detection \cite{batch_stego_2007, batch_stego_real_world_2012}, we envision verification pipelines adhering to a general three-step structure:

\begin{enumerate}[noitemsep]
    \item \textbf{Per-token scoring:} Assign a suspiciousness score to each token using an appropriate statistical or model-based function (e.g., likelihood ratios, cross-model divergence, or reconstruction residuals).
    \item \textbf{Pooling:} Aggregate token-level scores into sequence-, user-, or deployment-level statistics via a chosen pooling function (e.g., mean, max, or robust thresholding).
    \item \textbf{Flagging:} Identify batches or entities whose pooled scores exceed a defined operational threshold, signaling possible exfiltration or anomalous behavior.
\end{enumerate}

This framework deliberately mirrors that of batch steganalysis, where the goal is not to prove intent but to highlight objects that merit investigation. Detection signals should thus be treated as probabilistic indicators rather than definitive evidence of compromise.

\vspace{0.5em}
\noindent\textbf{Operational response.} Once a token batch or deployment is flagged, the appropriate policy response depends on the severity, frequency, and operational sensitivity of the detection. A proportionate and auditable escalation pathway might include the following steps:

\begin{enumerate}[noitemsep]
    \item \textbf{Rate-limit or reduce per-user bitrate.} Temporarily lower allowable output volumes or token-generation rates for implicated users or systems, thereby constraining potential exfiltration bandwidth while maintaining partial service availability. This should be paired with \emph{Know Your Customer} (KYC) or identity-verification mechanisms to mitigate Sybil-style circumvention (\url{https://en.wikipedia.org/wiki/Sybil_attack}).
    \item \textbf{Quarantine and investigate machines or deployments.} Isolate hosts, containers, or code branches that repeatedly produce high-suspiciousness outputs. Conduct forensic review of environment variables, code diffs, and access logs to detect injected payloads or altered sampling procedures.
    \item \textbf{Human-in-the-loop review.} Contact or interview users responsible for flagged activity, allowing for legitimate use cases to be surfaced and appeals to be made. Human review is particularly valuable when benign but novel workloads (e.g., creative generation or low-frequency topics) may trigger false positives.
    \item \textbf{Incident escalation and attribution.} If multiple correlated detections arise across distinct systems, escalate to incident response and coordinate attribution analysis. Cross-reference event metadata (user, GPU ID, deployment hash) to localize possible compromise points.
\end{enumerate}

\subsection{Privacy-preserving data analysis.}\label{sec:privacy-preserving-accrual}

We note that one natural concern in all of this is that any security event will invite privacy-invading scrutiny by the API provider. We note that it is still possible to produce privacy-preserving techniques for triaging potential exfiltration. For example, providers could aggregate suspicious-token counts over a short rolling window using a hash over user values. For example, a pseudonym for a user or API key can be derived via a keyed hash that rotates each window; counts are kept per pseudonym and a heavier review is prioritized when a policy-defined bucket size $B$ is exceeded.

\section{\GLS~Score Statistics}\label{sec:gumbel-score-statistics}

This section presents the empirical distribution of \GLS~scores to provide intuition for how the scoring function behaves on benign traffic and how detection thresholds are calibrated.

\subsection{Score Distribution for a Single Token Position}

Figure~\ref{fig:gumbel-score-statistics} shows the distribution of \GLS~scores across
the entire vocabulary for a single token position during generation. For each token $i$
in the vocabulary, we compute $\mathrm{\GLS}_t(i)$ using the \GLSestimator~estimation procedure described in Algorithm~\ref{sec:GLS}.

\begin{figure}[ht]
    \centering
    \includegraphics[width=0.65\linewidth]{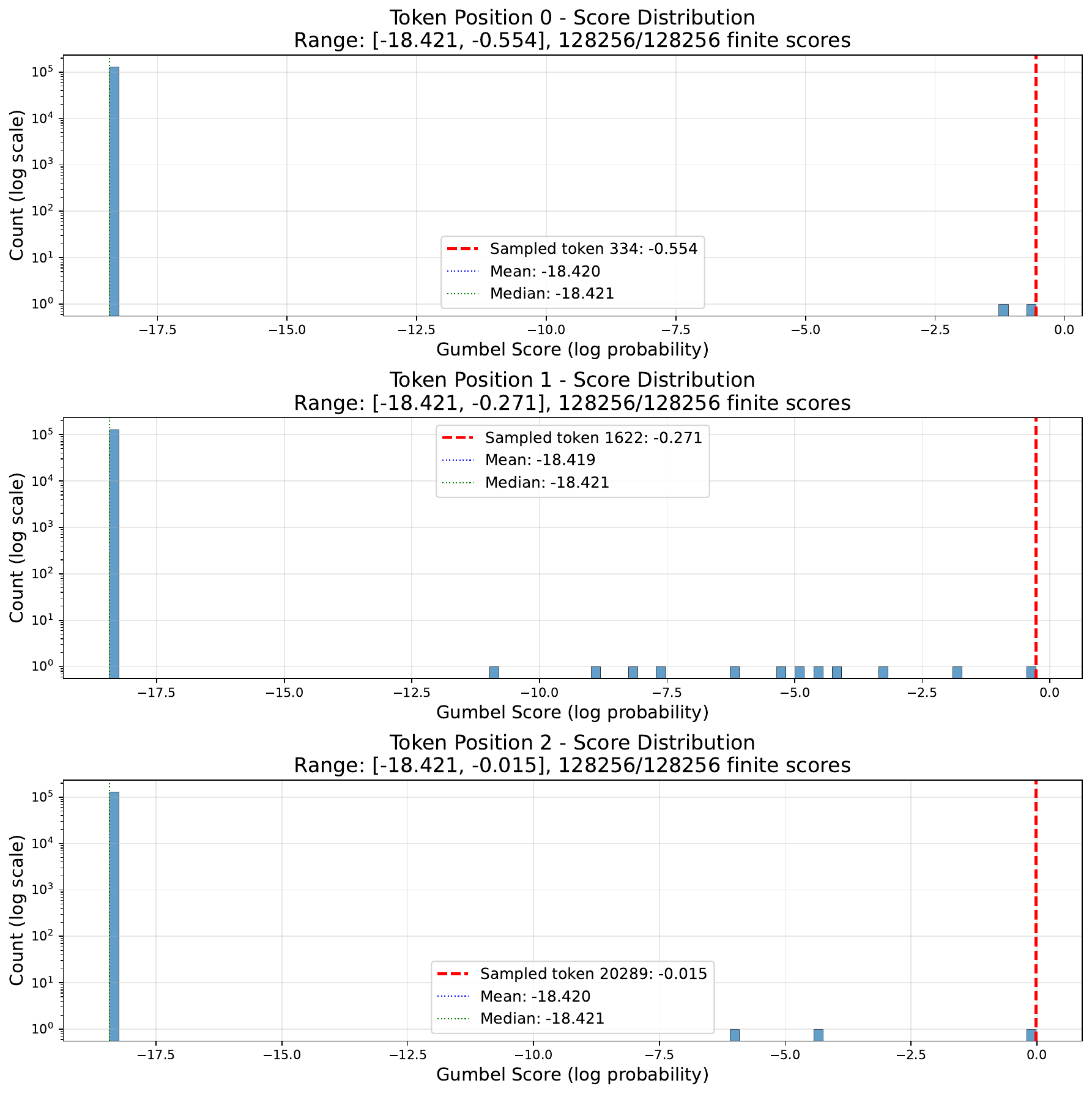}
    \caption{\textbf{Distribution of \GLS~scores across vocabulary for a single position.}
    This histogram shows how \GLS~scores are distributed across all possible tokens at
    a given generation step. The actually-sampled token typically has a score near the
    peak of this distribution (high likelihood region).}
    \label{fig:gumbel-score-statistics}
\end{figure}

\subsection{Distribution of \GLS~Scores for Actually-Generated Tokens}

While Figure~\ref{fig:gumbel-score-statistics} shows scores for all vocabulary tokens
at one position, Figure~\ref{fig:correct-token-GLS-scores} shows the distribution of
\GLS~scores for the tokens that were \emph{actually generated} across many positions
and prompts.

This distribution is crucial for threshold calibration: we generate 500 prompts with
up to 500 tokens each under \emph{benign} conditions (honest model, no adversary),
compute $\mathrm{\GLS}_t(y_t)$ for each actually-generated token $y_t$, and collect
the resulting scores. The histogram in Figure~\ref{fig:correct-token-GLS-scores}
represents the empirical distribution of \GLS~scores under legitimate operation.

\begin{figure}[ht]
    \centering
    \includegraphics[width=0.65\linewidth]{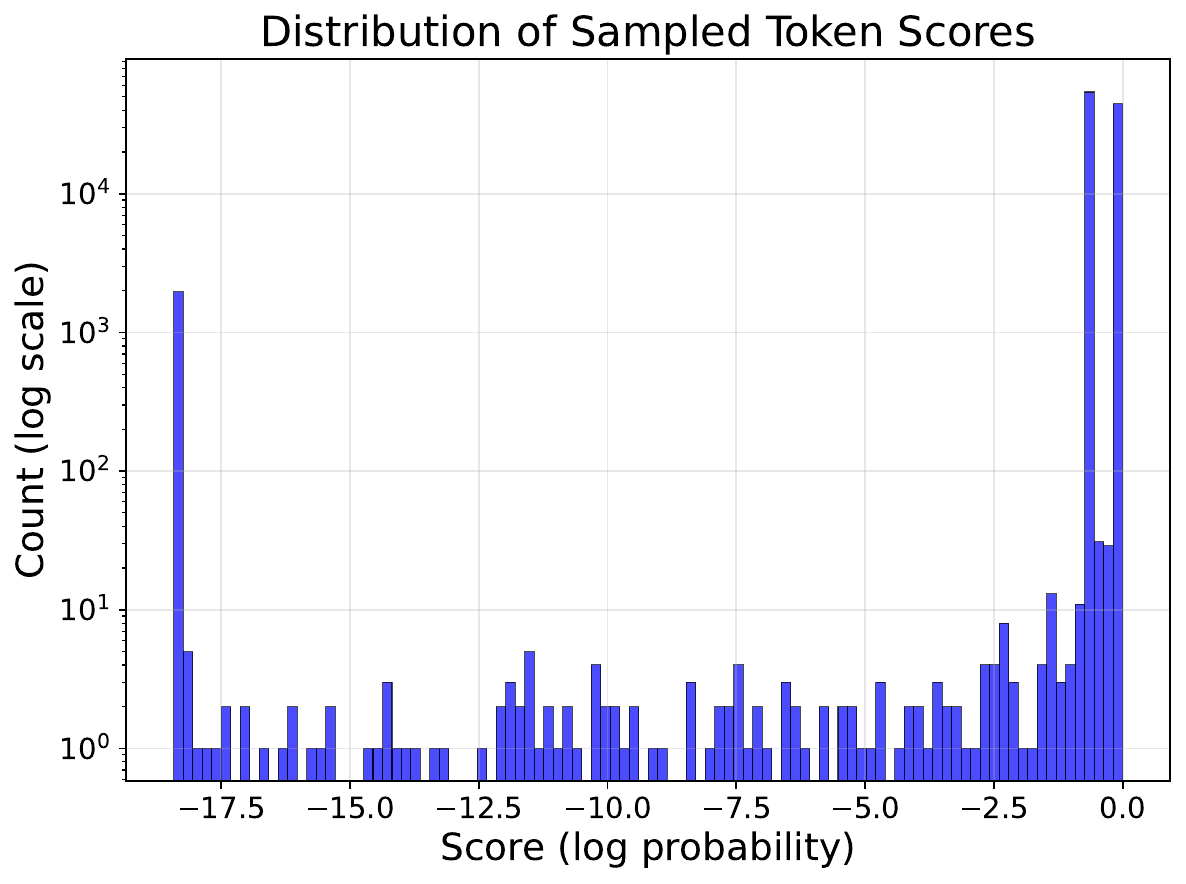}
    \caption{\textbf{Histogram of \GLS~scores for actually-generated tokens across
    500 prompts.} Each bin represents how many tokens (out of $\approx$250,000 total)
    received a given \GLS~score during benign generation. This distribution allows us
    to select thresholds $\tau$ that achieve target false-positive rates: e.g., setting
    $\tau$ at the 1st percentile of this distribution yields FPR $\approx 1\%$.}
    \label{fig:correct-token-GLS-scores}
\end{figure}

\subsection{Threshold Selection and FPR-Information Trade-off}

Given the empirical score distribution from benign traffic, we select thresholds to
achieve specific false-positive rates. Figure~\ref{fig:threshold-fpr-tradeoff}
illustrates this trade-off:

\begin{itemize}[noitemsep]
    \item \textbf{X-axis (threshold $\tau$):} More negative thresholds accept more
    tokens (lower FPR, higher usability) but allow adversaries to exfiltrate more
    information. More positive thresholds reject more tokens (higher FPR, lower
    usability) but tighten the information bound.

    \item \textbf{Left Y-axis (False Positive Rate):} The fraction of benign tokens
    with $\mathrm{\GLS}_t(y_t) < \tau$. Computed directly from the cumulative distribution
    of Figure~\ref{fig:correct-token-GLS-scores}.

    \item \textbf{Right Y-axis (Extractable Information):} The worst-case per-token
    information an adversary can exfiltrate, computed as
    $\frac{1}{T}\sum_t \log_2|A_t(\tau)|$ where $A_t(\tau)$ is the admissible set
    (tokens with $\mathrm{\GLS}_t(i) \ge \tau$).
\end{itemize}

Practitioners can use this curve to select operating points: for example, tolerating
0.1\% FPR might reduce extractable information to $<$1 bit/token (versus the baseline
$\log_2(|\text{vocab}|) \approx 16$ bits/token).

\begin{figure}[ht]
    \centering
    \includegraphics[width=0.65\linewidth]{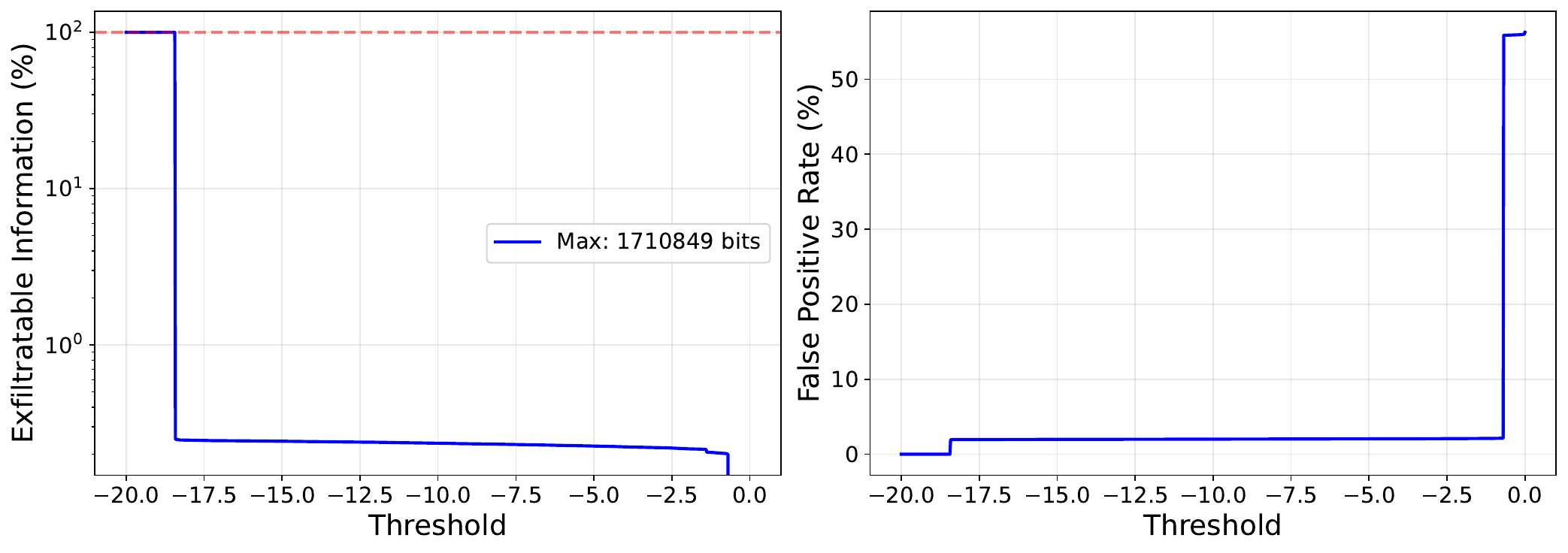}
    \caption{\textbf{Threshold selection trade-off.} As the \GLS~threshold $\tau$
    increases (moving right), the false positive rate increases (rejecting more benign
    tokens), while the extractable information decreases (tightening the adversarial
    capacity bound). This plot allows practitioners to select thresholds that balance
    security and usability for their deployment.}
    \label{fig:threshold-fpr-tradeoff}
\end{figure}

\section{Gumbel-Likelihood Score Exfiltration Results} \label{sec:exfil-results}

This section presents results from different experimental configurations of the two-tier and three-tier verification classifier systems across various models and rank thresholds.

\begin{figure}[H]
    \centering
    \includegraphics[width=0.65\linewidth]{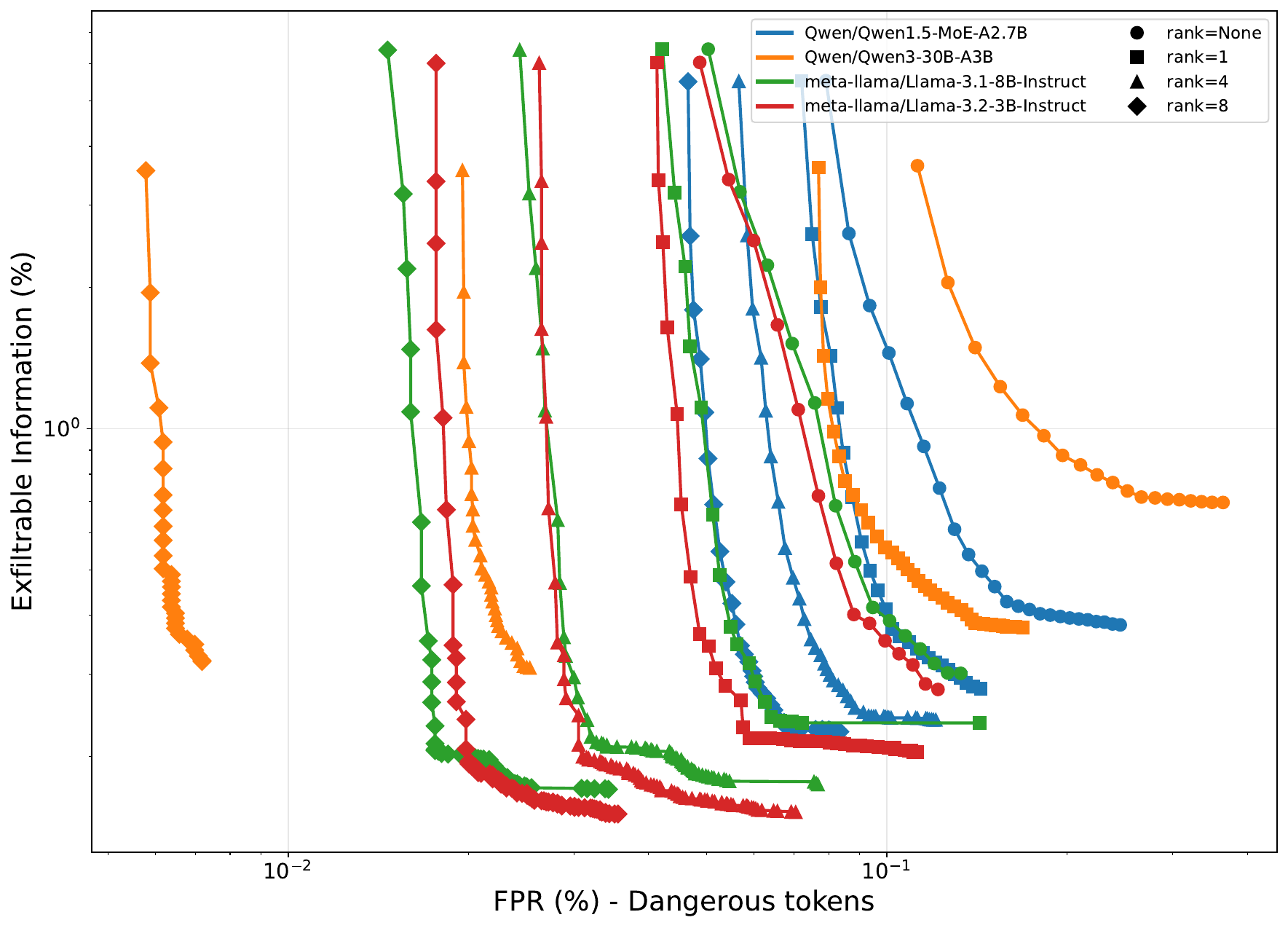}
    \caption{\textbf{Multi-model two-tier classifier Pareto frontier (envelope across all rank thresholds).} This plot displays the optimal security-vs-usability trade-off for detecting steganographic data exfiltration across multiple LLM architectures. Each curve represents a different model, showing the envelope across all tested rank thresholds, i.e., for each FPR level, it shows the minimum extractable information achievable by choosing the optimal rank threshold. The Pareto frontier shows the best achievable trade-off between (x-axis) False Positive Rate (\%) - fraction of legitimate tokens incorrectly flagged as ``dangerous'' and (y-axis) Extractable Information (\%) - how much data an attacker can still exfiltrate while evading detection. Models tested: Llama-3.1-8B, Llama-3.2-3B, Qwen3-30B-A3B, and Qwen1.5-MoE-A2.7B.}
    \label{fig:multi-model-multi-rank-pareto}
\end{figure}

\subsection{Logit-Rank Ablations}\label{sec:rank-ablations}

In this section we include experiments with the same experimental set up as Section \ref{sec:empirical}, where we sweep our classifier across multiple logit-rank thresholds to show the effect of logit-rank as a meaningful contribution to classifying on exfiltratable information.

\begin{figure}[H]
    \centering
    \includegraphics[width=0.65\linewidth]{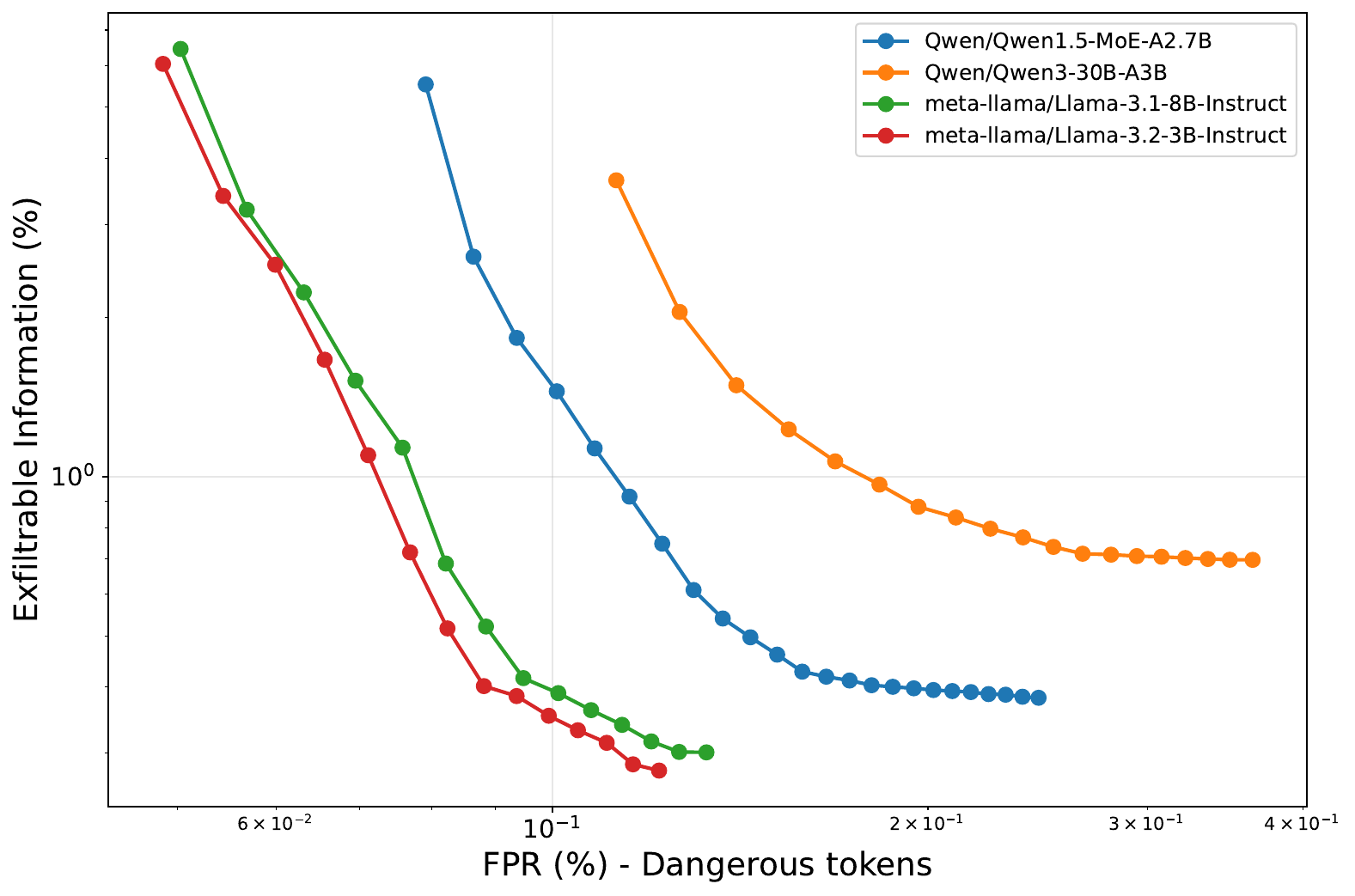}
    \caption{\textbf{Multi-model Pareto frontier with no rank threshold (two-tier baseline).} This plot shows the baseline two-tier classifier performance (safe vs. dangerous only, no suspicious category) across all tested models. Without the rank-based stratification, the system can only distinguish between tokens that pass or fail the \GLS~threshold, leading to higher extractable information at comparable FPR levels compared to the three-tier classifier.}
    \label{fig:multi-model-rank-none}
\end{figure}

\begin{figure}[H]
    \centering
    \includegraphics[width=0.65\linewidth]{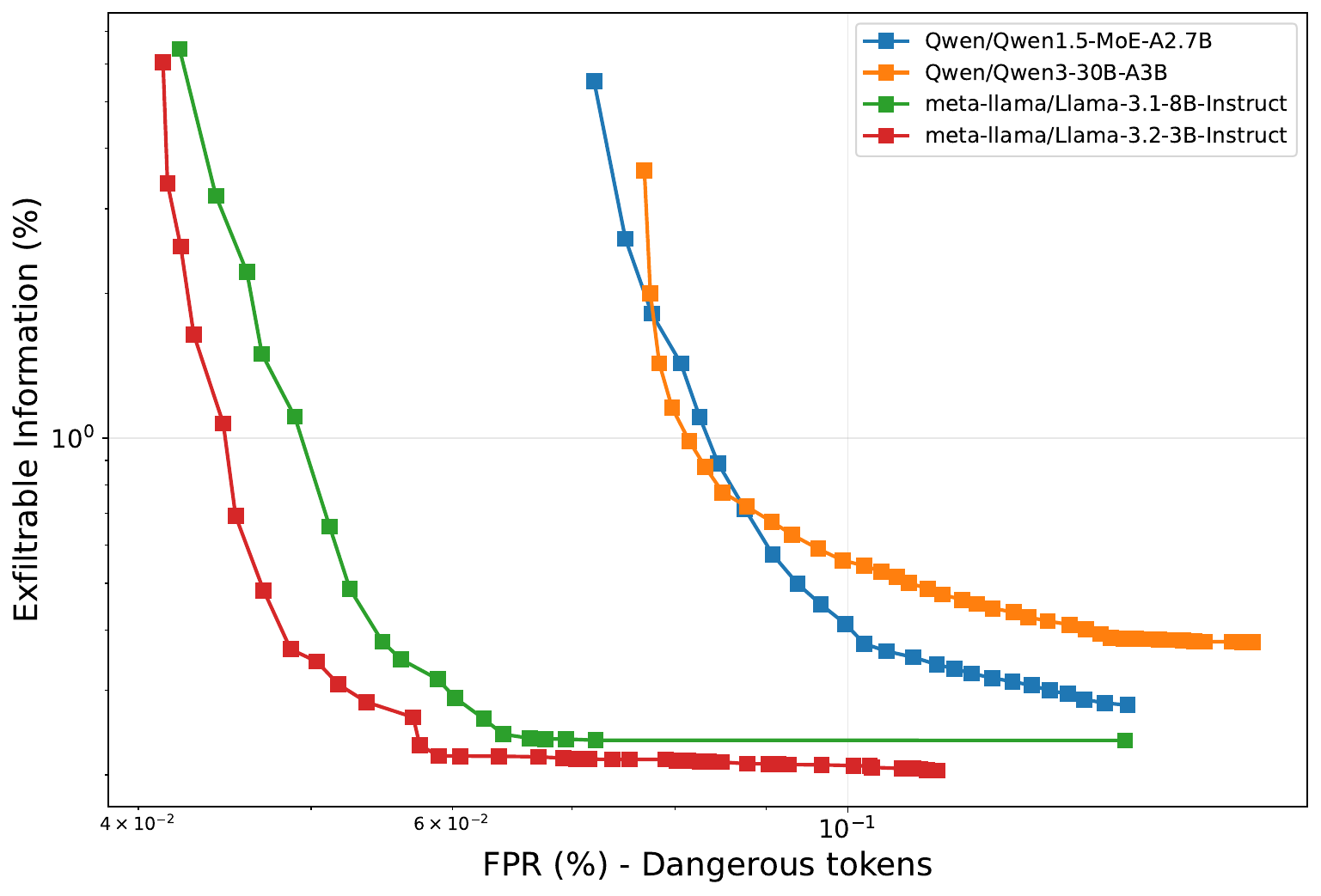}
    \caption{\textbf{Multi-model Pareto frontier with rank threshold = 1.} This configuration uses the most aggressive rank cutoff, where only the top-ranked token by raw logits is allowed in the suspicious category. All tokens that fail \GLS~and are not rank-1 are classified as dangerous. This provides the tightest security bounds but may increase the FPR for legitimate low-rank tokens.}
    \label{fig:multi-model-rank1}
\end{figure}

\begin{figure}[H]
    \centering
    \includegraphics[width=0.65\linewidth]{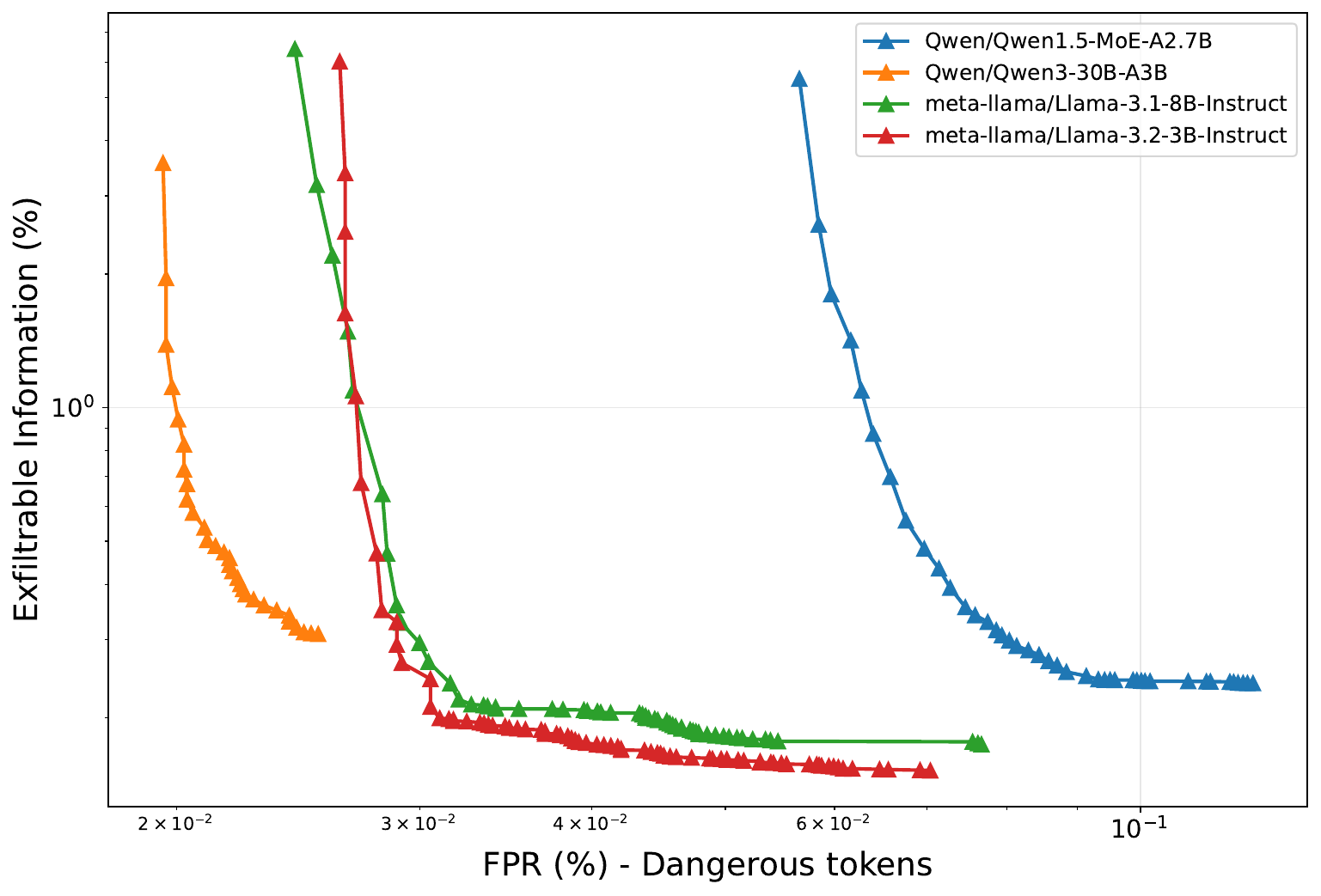}
    \caption{\textbf{Multi-model Pareto frontier with rank threshold = 4.} With a rank cutoff of 4, tokens that fail \GLS~but fall within the top 4 by raw logits are classified as suspicious (limited to $\log_2(4) = 2$ bits of information), while those beyond rank 4 are dangerous. This provides a balance between security and usability.}
    \label{fig:multi-model-rank4}
\end{figure}

\begin{figure}[H]
    \centering
    \includegraphics[width=0.65\linewidth]{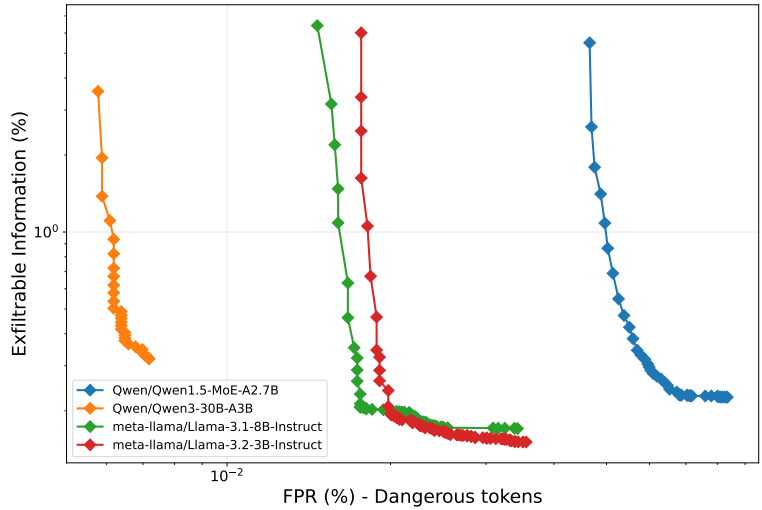}
    \caption{\textbf{Multi-model Pareto frontier with rank threshold = 8.} With rank = 8, suspicious tokens are capped at $\log_2(8) = 3$ bits of extractable information. This intermediate threshold allows more flexibility while still constraining the attacker's channel capacity significantly below the full vocabulary size.}
    \label{fig:multi-model-rank8}
\end{figure}

\begin{figure}[H]
    \centering
    \includegraphics[width=0.65\linewidth]{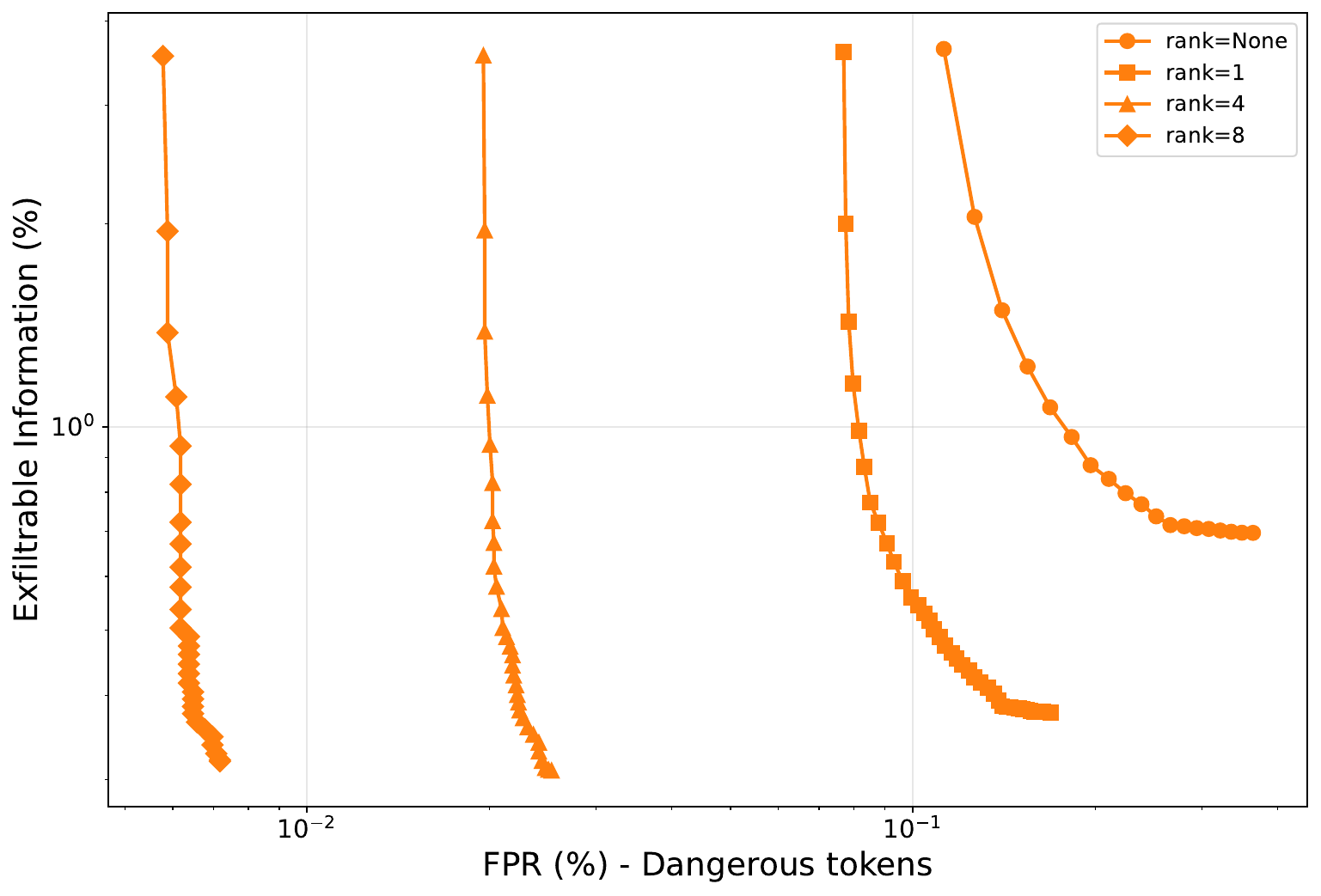}
    \caption{\textbf{Qwen3-30B-A3B Pareto frontier across rank thresholds.} This plot shows the extractable information vs. FPR trade-off for the Qwen3-30B-A3B model across different rank threshold configurations. Each curve represents a different rank cutoff, showing how the three-tier classification scheme performs for this larger model. The Pareto-optimal curves demonstrate that finite rank thresholds consistently outperform the baseline (rank=None) configuration.}
    \label{fig:qwen-30b-single-model}
\end{figure}

\begin{figure}[H]
    \centering
    \includegraphics[width=0.65\linewidth]{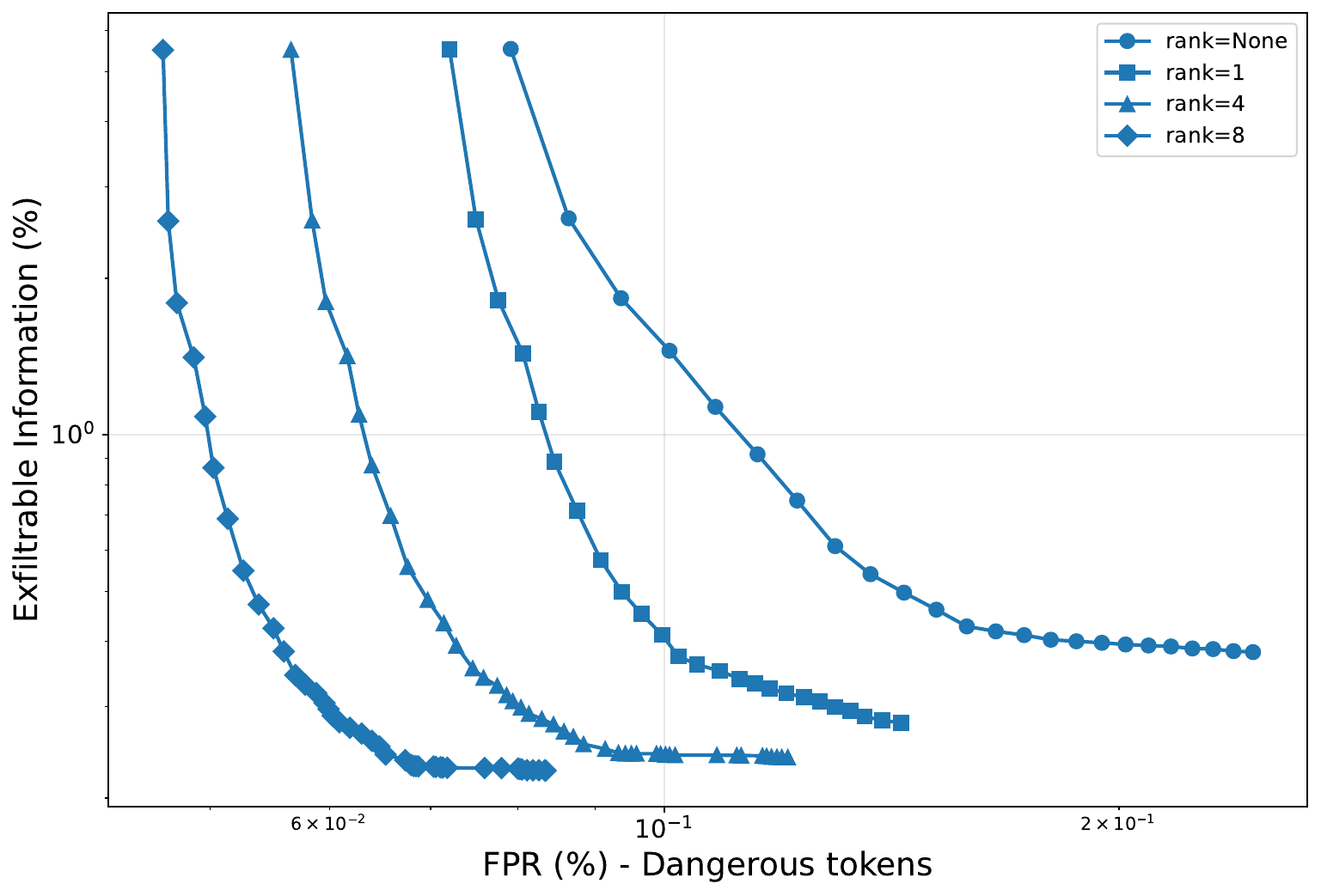}
    \caption{\textbf{Qwen1.5-MoE-A2.7B Pareto frontier across rank thresholds.} This plot displays the trade-off curves for the Qwen1.5-MoE mixture-of-experts model. Despite having fewer active parameters (2.7B), the MoE architecture shows distinct verification characteristics compared to dense models, providing insights into how model architecture affects the \GLS~verification effectiveness.}
    \label{fig:qwen-moe-single-model}
\end{figure}

\begin{figure}[H]
    \centering
    \includegraphics[width=0.65\linewidth]{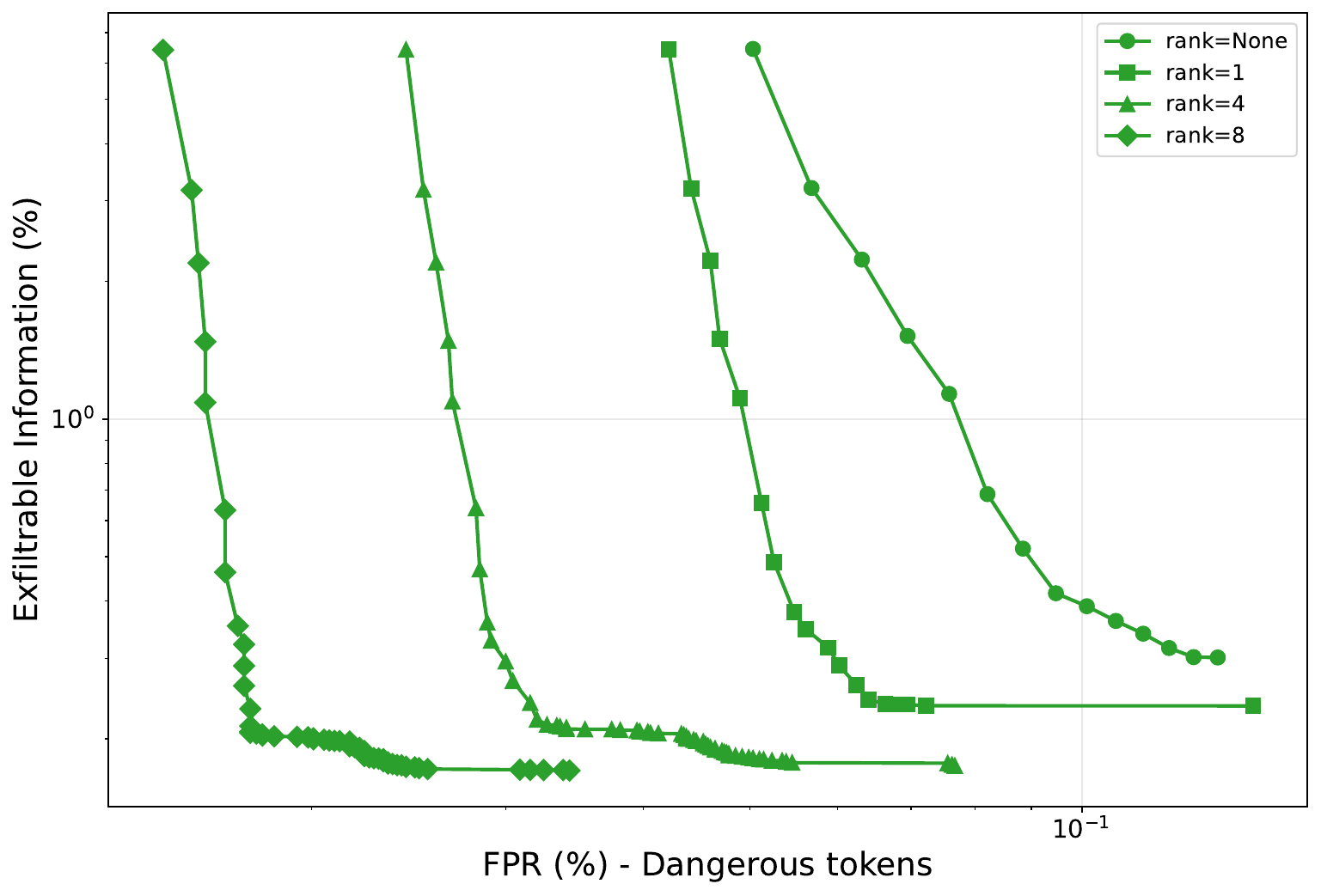}
    \caption{\textbf{Llama-3.1-8B-Instruct Pareto frontier across rank thresholds.} This plot presents the verification trade-offs for Meta's Llama-3.1-8B-Instruct model. The curves show how the three-tier classifier performs on this mid-sized instruction-tuned model, with each rank threshold providing different operating points for practitioners to balance security requirements against service availability.}
    \label{fig:llama-31-8b-single-model}
\end{figure}

\begin{figure}[H]
    \centering
    \includegraphics[width=0.65\linewidth]{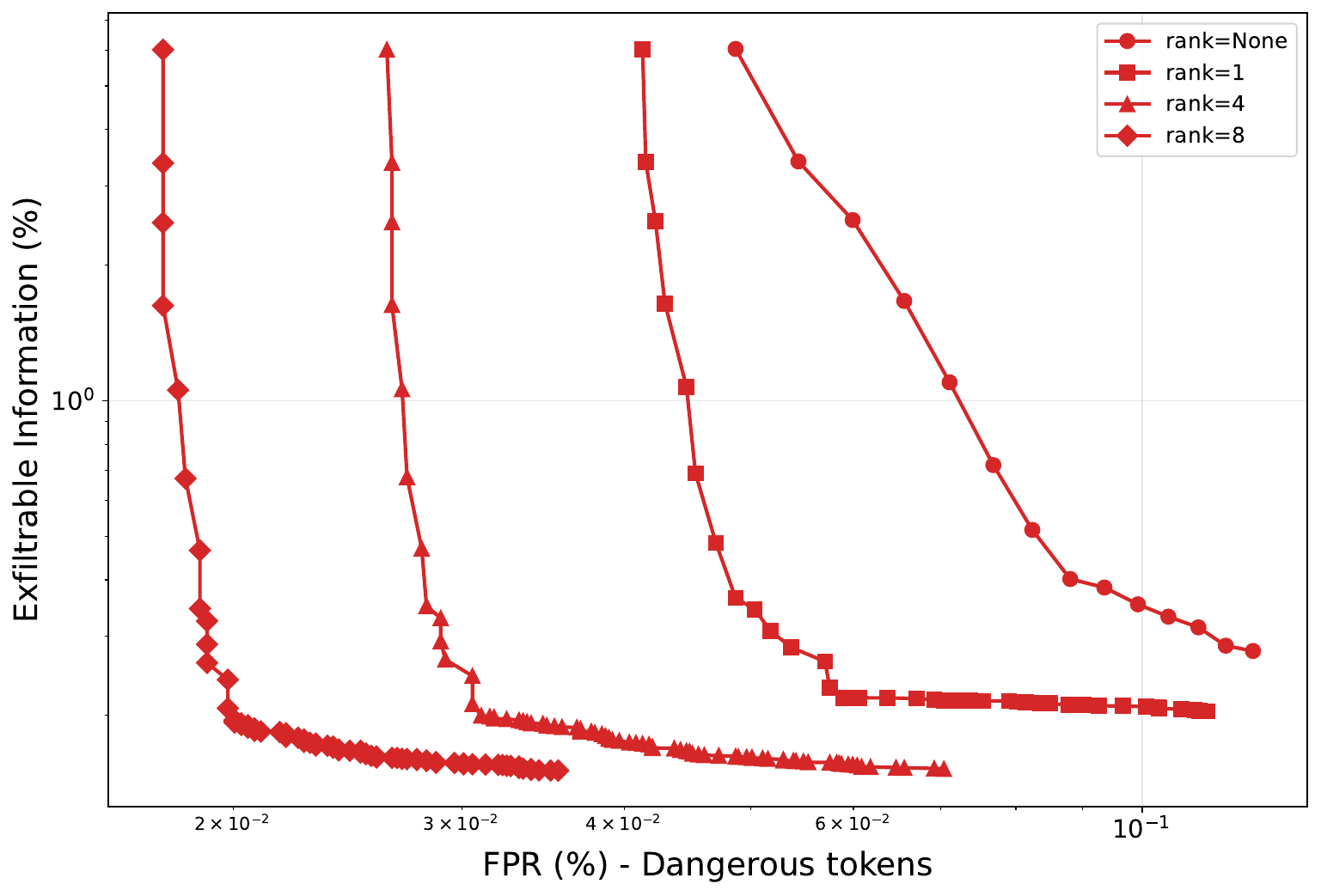}
    \caption{\textbf{Llama-3.2-3B-Instruct Pareto frontier across rank thresholds.} This plot shows the verification performance for the smaller Llama-3.2-3B-Instruct model. Smaller models typically exhibit slightly higher extractable information at comparable FPR thresholds due to their reduced output entropy, as evidenced by the frontier positioning relative to larger models.}
    \label{fig:llama-32-3b-single-model}
\end{figure}

\subsection{Context Length Ablations}\label{sec:context-length-ablations}

To understand how input context length affects the \GLS~verification system's performance, we conducted ablation studies using prompts of varying lengths. We generated prompts with 4,000, 8,000, and 12,000 tokens using the LongBench v2 dataset~\cite{bai2025longbenchv2deeperunderstanding}, which provides realistic long-context tasks for multitask benchmarks. We create generations of max-length 500 tokens for 100 prompts, given 100 \GLS thresholds, and $\sigma \in (0.001, 0.01, 0.1, 1.)$.
For each context length and model combination, we measured the trade-off between false positive rate and extractable information under the same experimental setup as Section~\ref{sec:empirical}.

Figures~\ref{fig:context-llama-31-8b}--\ref{fig:context-qwen-30b} show results for three models across the different context lengths\footnote{For these plots we don't only plot the pareto-optimal plots, to show the effect on the entire curve.}. Each figure displays three plots side-by-side, corresponding to context lengths of 4,000, 8,000, and 12,000 tokens. These results demonstrate how the verification system's security-usability trade-off varies with input length, providing insights into how context window size affects the detectability of steganographic exfiltration.

We note that we do see some minor degradation in utility of our methods as the context-length increases; however this is effect is relatively minor.

\begin{figure}[H]
    \centering
    \begin{subfigure}[b]{0.32\textwidth}
        \centering
        \includegraphics[width=\textwidth]{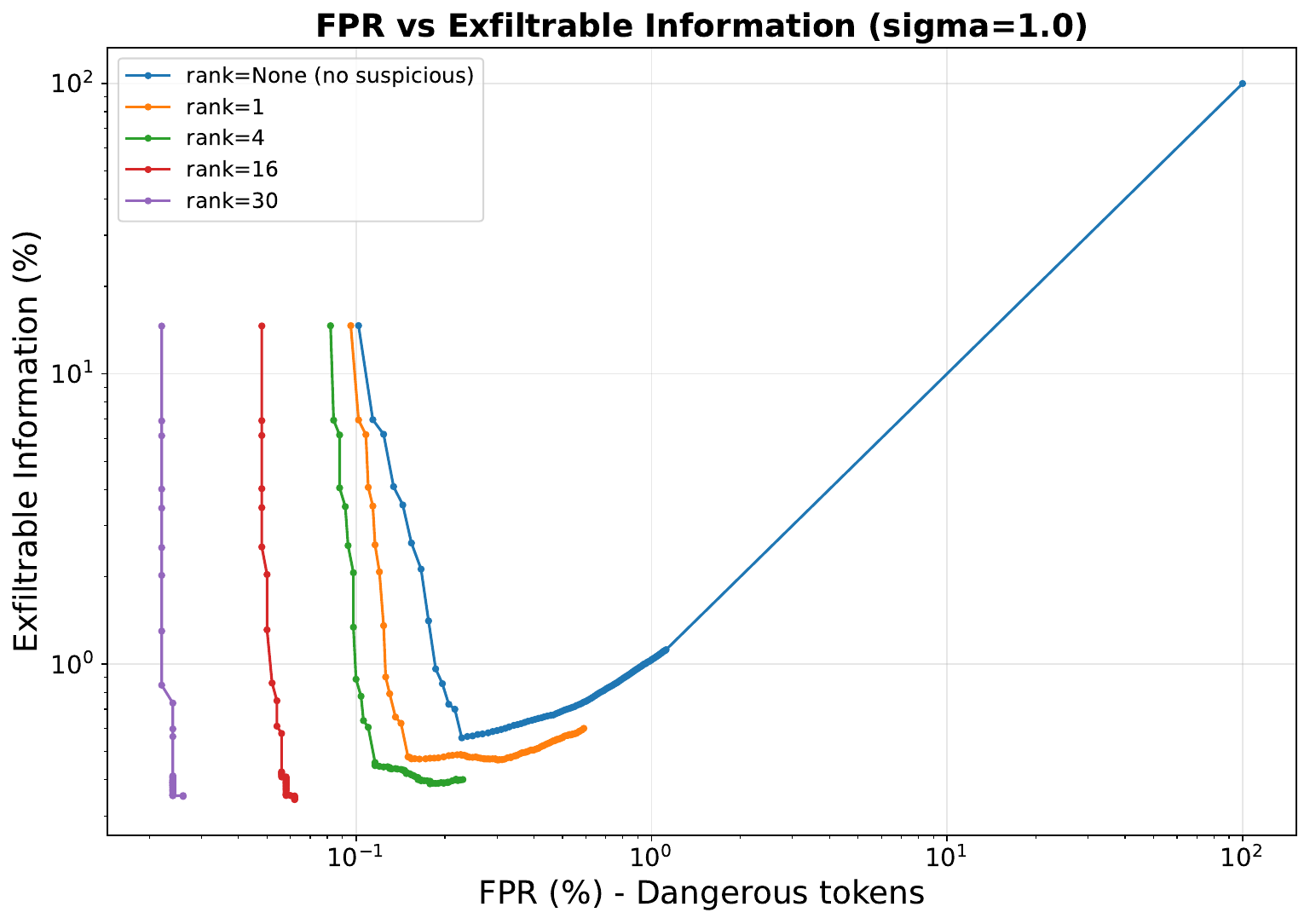}
        \caption{Context length: 4,000}
    \end{subfigure}
    \hfill
    \begin{subfigure}[b]{0.32\textwidth}
        \centering
        \includegraphics[width=\textwidth]{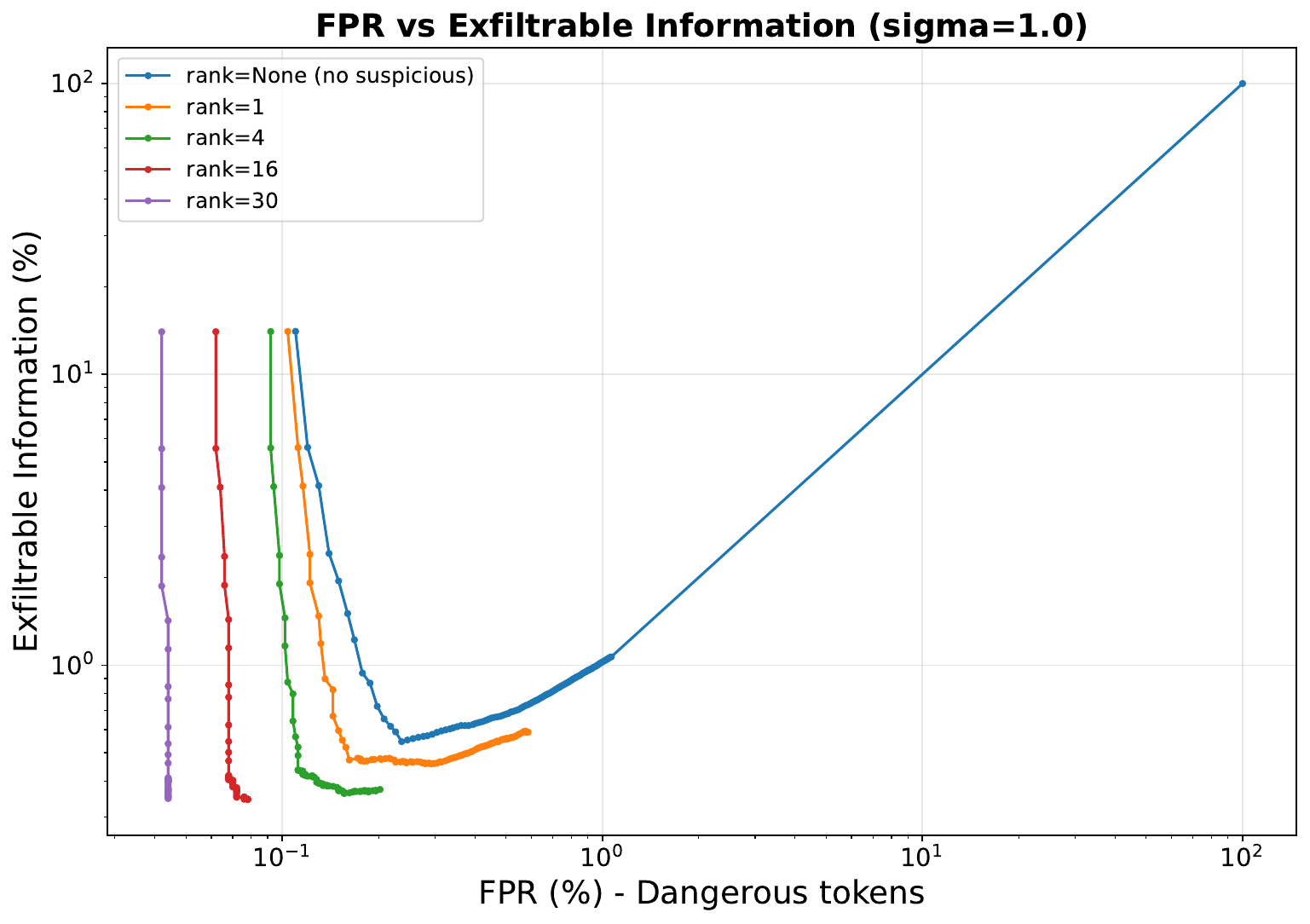}
        \caption{Context length: 8,000}
    \end{subfigure}
    \hfill
    \begin{subfigure}[b]{0.32\textwidth}
        \centering
        \includegraphics[width=\textwidth]{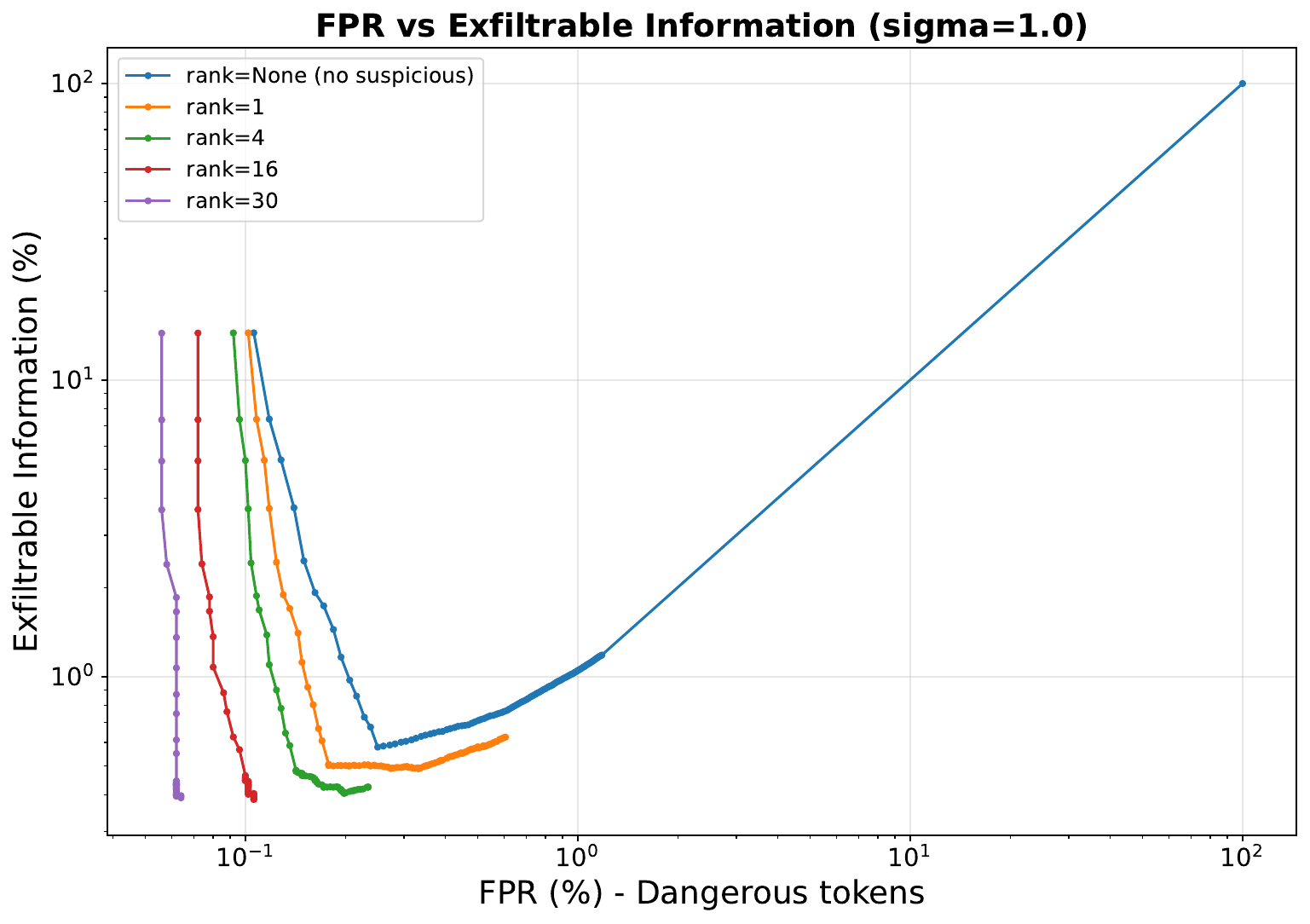}
        \caption{Context length: 12,000}
    \end{subfigure}
    \caption{\textbf{Llama-3.1-8B-Instruct context length ablation.} FPR vs. extractable information trade-off for Llama-3.1-8B-Instruct across different input context lengths. Each subplot shows how the verification system performs when processing prompts of 4,000, 8,000, and 12,000 tokens respectively.}
    \label{fig:context-llama-31-8b}
\end{figure}

\begin{figure}[H]
    \centering
    \begin{subfigure}[b]{0.32\textwidth}
        \centering
        \includegraphics[width=\textwidth]{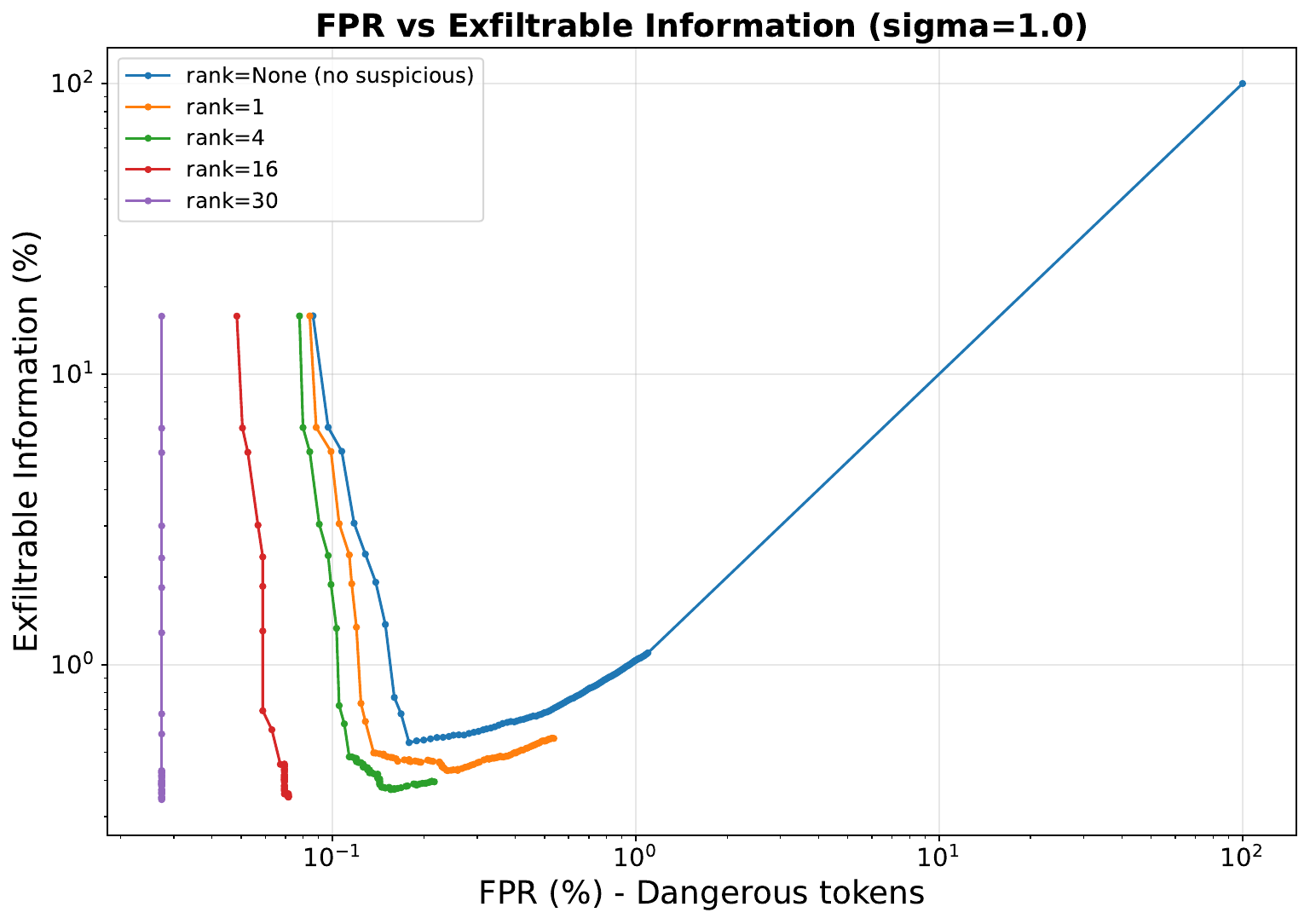}
        \caption{Context length: 4,000}
    \end{subfigure}
    \hfill
    \begin{subfigure}[b]{0.32\textwidth}
        \centering
        \includegraphics[width=\textwidth]{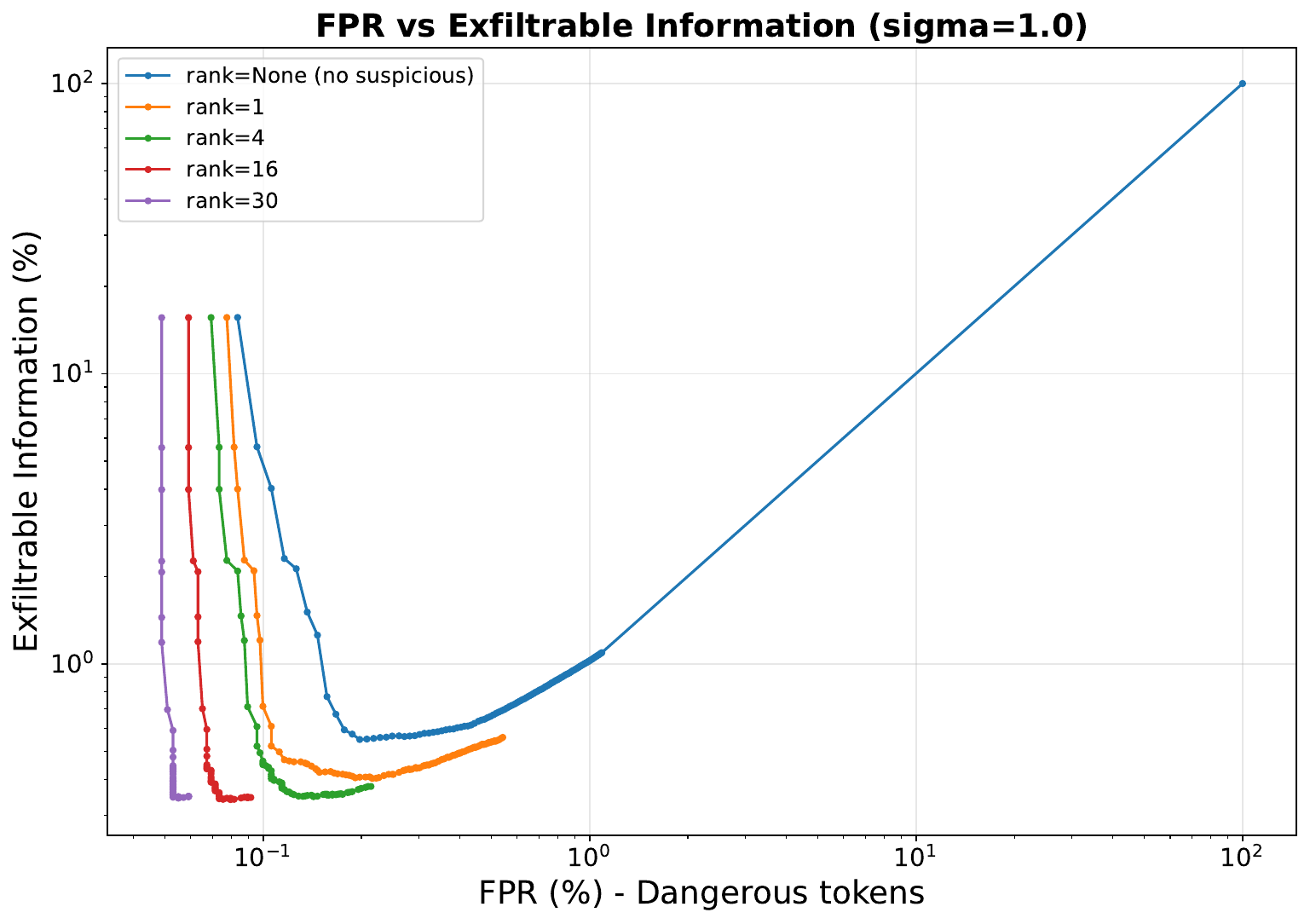}
        \caption{Context length: 8,000}
    \end{subfigure}
    \hfill
    \begin{subfigure}[b]{0.32\textwidth}
        \centering
        \includegraphics[width=\textwidth]{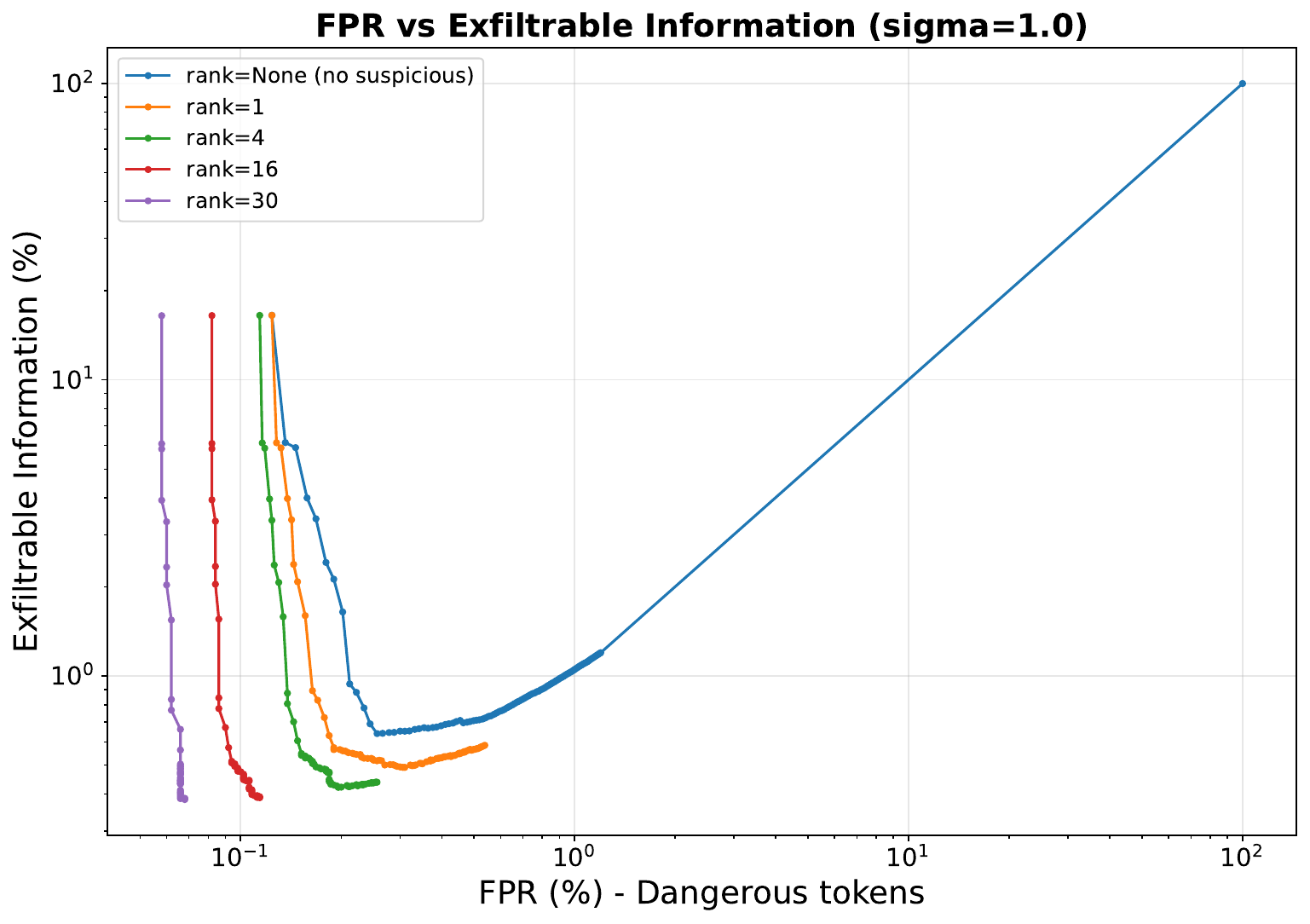}
        \caption{Context length: 12,000}
    \end{subfigure}
    \caption{\textbf{Llama-3.2-3B-Instruct context length ablation.} FPR vs. extractable information trade-off for Llama-3.2-3B-Instruct across different input context lengths. Each subplot shows how the verification system performs when processing prompts of 4,000, 8,000, and 12,000 tokens respectively.}
    \label{fig:context-llama-32-3b}
\end{figure}

\begin{figure}[H]
    \centering
    \begin{subfigure}[b]{0.32\textwidth}
        \centering
        \includegraphics[width=\textwidth]{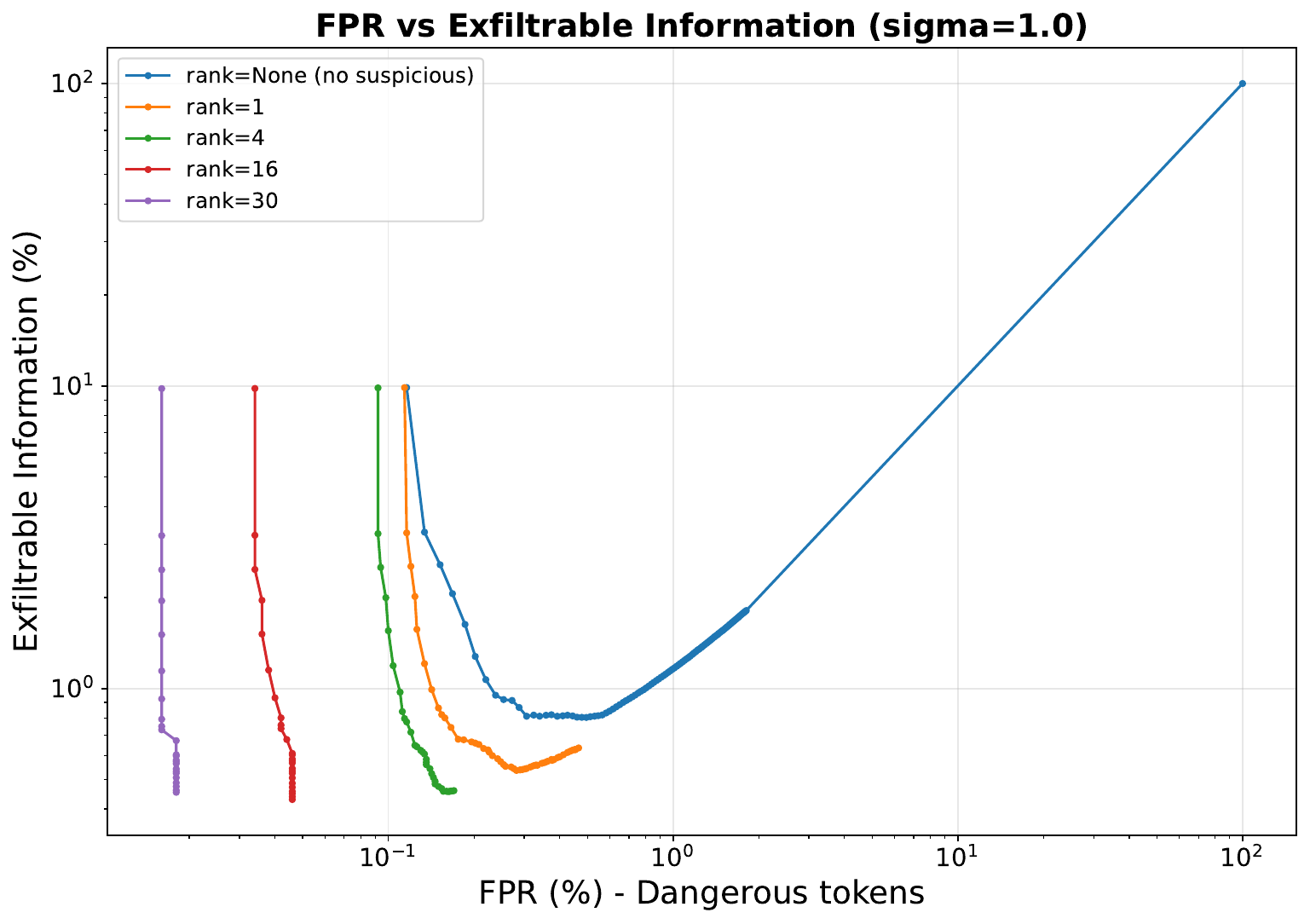}
        \caption{Context length: 4,000}
    \end{subfigure}
    \hfill
    \begin{subfigure}[b]{0.32\textwidth}
        \centering
        \includegraphics[width=\textwidth]{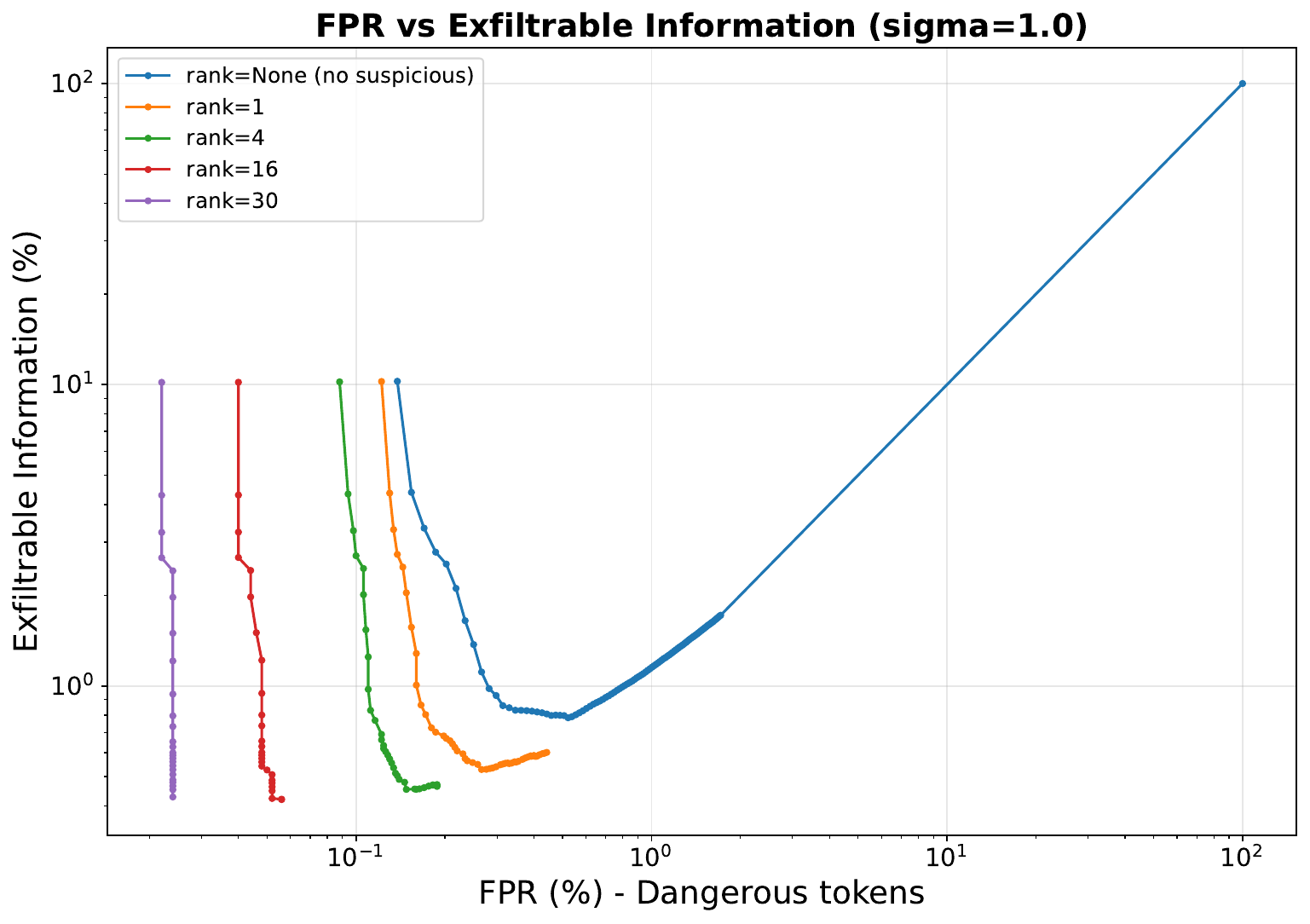}
        \caption{Context length: 8,000}
    \end{subfigure}
    \hfill
    \begin{subfigure}[b]{0.32\textwidth}
        \centering
        \includegraphics[width=\textwidth]{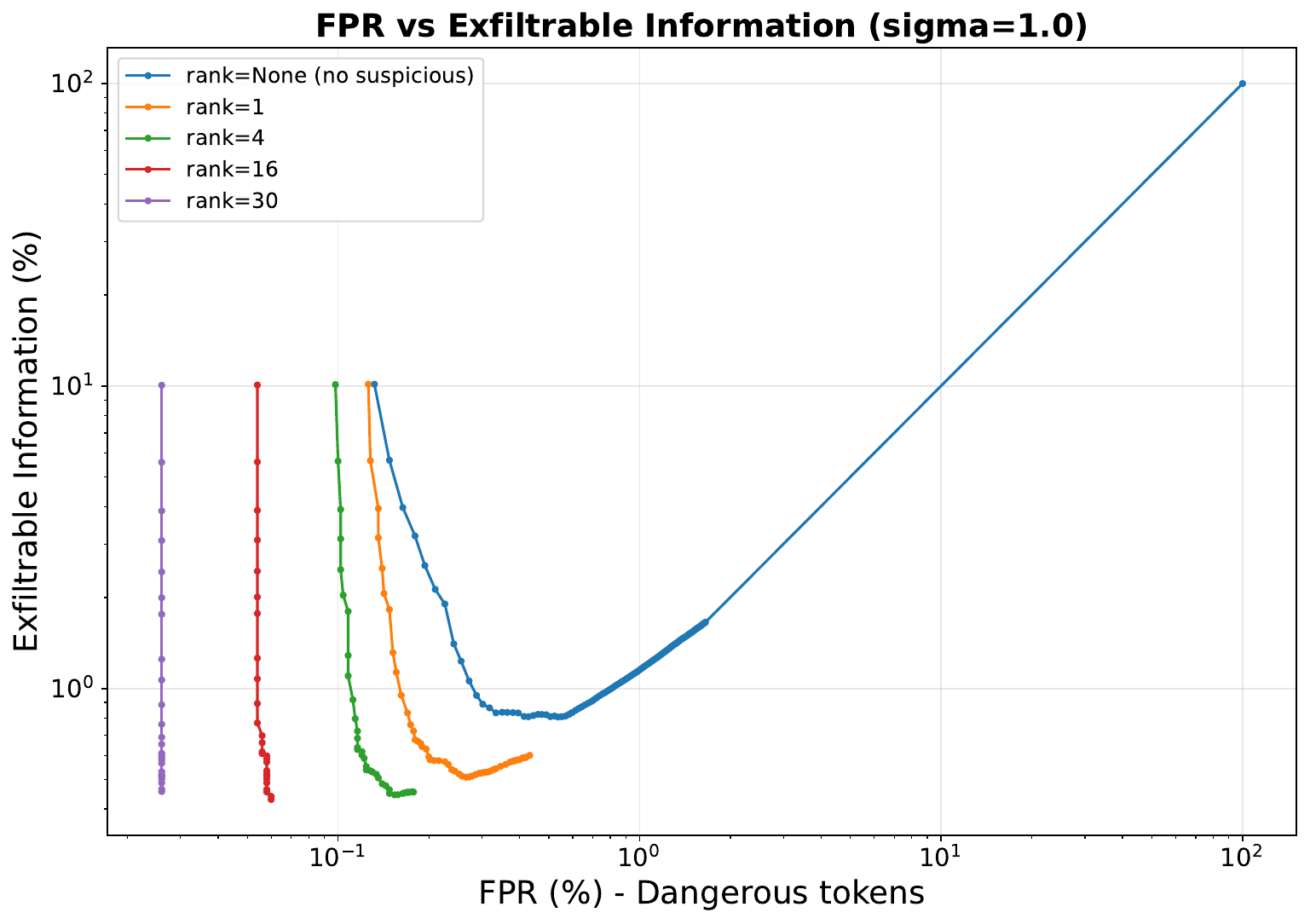}
        \caption{Context length: 12,000}
    \end{subfigure}
    \caption{\textbf{Qwen3-30B-A3B context length ablation.} FPR vs. extractable information trade-off for Qwen3-30B-A3B across different input context lengths. Each subplot shows how the verification system performs when processing prompts of 4,000, 8,000, and 12,000 tokens respectively.}
    \label{fig:context-qwen-30b}
\end{figure}

\subsection{Relationship to watermarking.}
From another perspective, one can view the role of the warden more broadly as a variant of \emph{watermarking}. Roughly speaking, the warden wishes to design a setup where honestly generated outputs contain some signal, i.e., a watermark, which can be detected by that warden on their way out of the internal infrastructure. In this view, the goal would be to design an extremely \emph{brittle} watermark, so that if any adversary tried to deviate from honest generation, then the watermark would break.

We draw this parallel as a potential future direction for developing future exfiltration mitigation. Current watermarking schemes are built for different infrastructure and generally designed to be robust to perturbations \cite{ARXIV:KGWKMG23,COLT:ChrGunZam24,SP:CHS25}. However, it is an interesting possibility for future work to adapt techniques from this literature to this setting.

As a philosophically interesting point, we observe that one can view the problem of verifying outputs as purposefully producing an extremely brittle watermarking scheme, namely, one that any text generated by the model is seen as ``watermarked'' and any deviation from that text is no longer seen as watermarked.


It is valuable to note that our verification solution is entirely compatible with current watermarking schemes too. So, an AI system which is implementing watermarking already can implement this on top of it, with no changes to the system, though, if the watermarking scheme involves a private key, that private key would need to be reproduced in the verification server. 
\section{Empirical Base Rates of non-determinism} \label{sec:base-rates-of-nondeterminism}

\paragraph{Empirical Characterization of Non-Determinism.}
To quantify the inherent non-determinism in language model outputs, we generate 1{,}000 responses to the same prompt using identical random seeds and construct a \emph{token-level prefix tree} to visualize divergence patterns (Figure~\ref{fig:prefix-tree}).  

Each node in the tree represents a unique token prefix shared by one or more responses. Nodes are connected to their continuations, forming branches that capture all distinct response trajectories. When responses sharing a prefix diverge into multiple next-token choices, we mark that position as a \emph{fork point} and record its \emph{fork factor}—the number of unique next tokens (i.e., the number of children). Nodes with a single child correspond to deterministic continuations, whereas fork points indicate positions where the model produced different tokens despite identical seeds.

Our primary experiments are conducted with \texttt{OpenAI 4o-mini}, which exhibits a moderate but measurable level of seeded non-determinism. Across 1{,}000 runs, we observe a ``mostly deterministic with critical decision points'' pattern: approximately 95\% of nodes have exactly one child, 5\% represent binary forks, and only 0.2\% involve higher-order splits. This indicates that even under fixed seeds, small implementation-level or system-level variations can lead to isolated yet systematic divergences in generation trajectories.

\begin{figure}[ht!]
    \centering
    \begin{subfigure}[t]{0.45\linewidth}
        \centering
        \includegraphics[width=\linewidth]{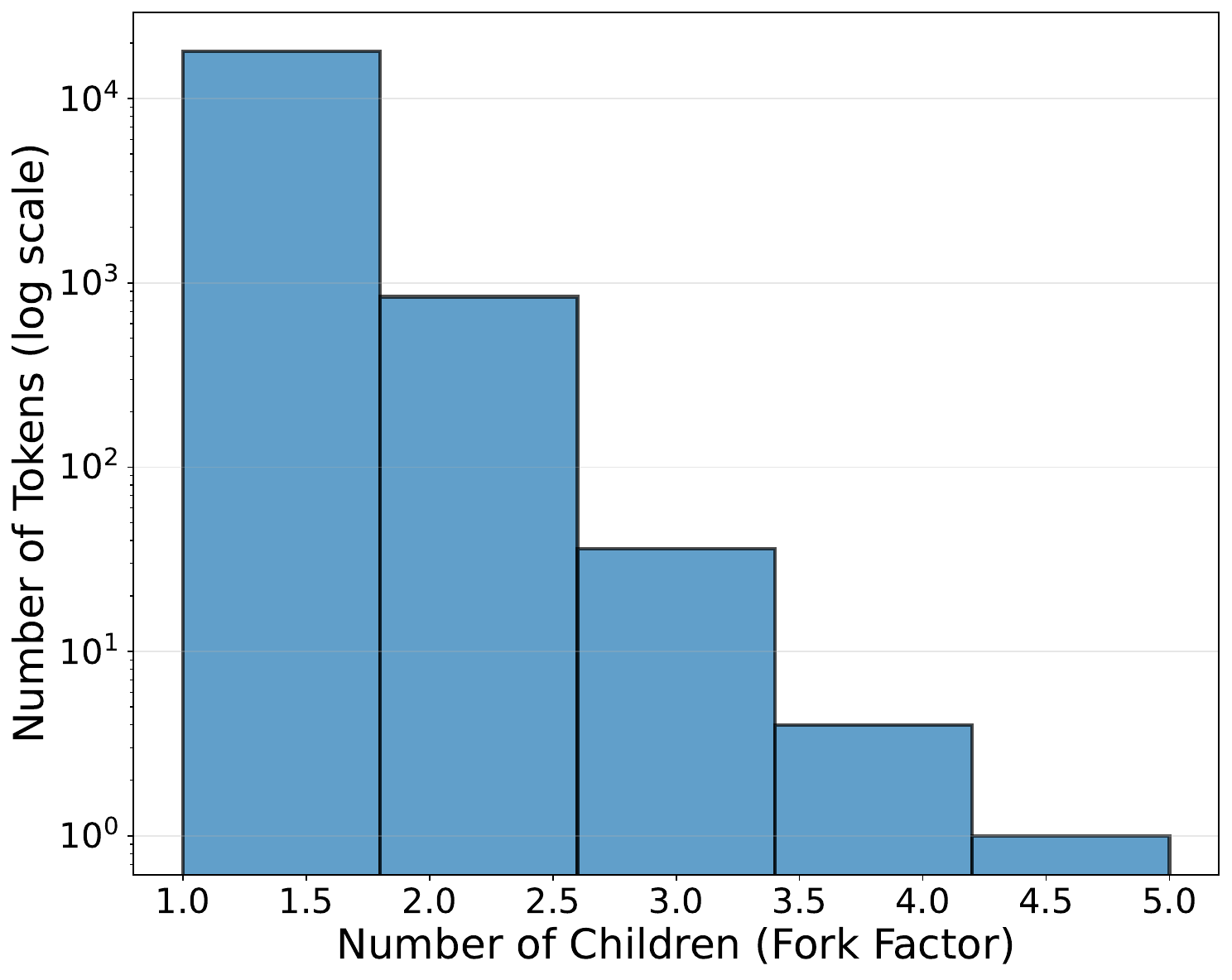}
        \caption{Distribution of fork factors for a prompt.}
        \label{fig:fork-factor-histogram}
    \end{subfigure}
    \hfill
    \begin{subfigure}[t]{0.53\linewidth}
        \centering
        \includegraphics[width=\linewidth]{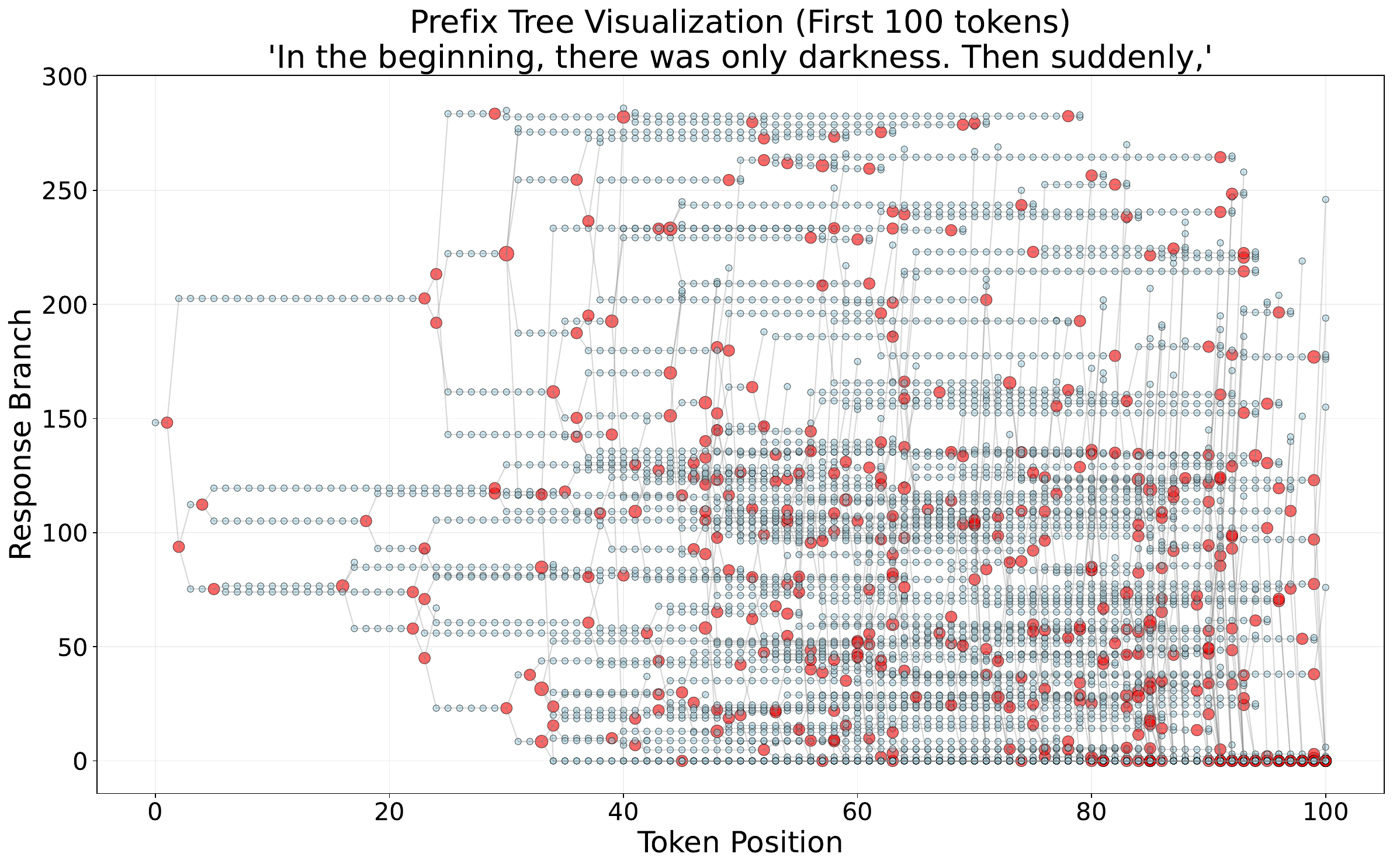}
        \caption{Prefix tree visualization of 1{,}000 responses to the same prompt under identical random seeds (first 100 tokens shown). 
        Red nodes mark fork points (multiple children), with size proportional to fork factor.}
        \label{fig:prefix-tree}
    \end{subfigure}
    \caption{Empirical characterization of seeded non-determinism for a single prompt using \texttt{OpenAI 4o-mini}. 
    (\subref{fig:fork-factor-histogram})~The histogram quantifies the overall frequency of fork points, 
    while (\subref{fig:prefix-tree})~the prefix tree visualizes how and where these divergences occur in token space.}
    \label{fig:fork-analysis}
\end{figure}

\paragraph{Observations.}
Empirically, these results demonstrate that \textit{valid non-determinism} in language model outputs is highly structured and non-uniform. Rather than exhibiting uniform stochasticity across tokens, variability concentrates at specific decision points. Most tokens are effectively deterministic—reproducing the same output across all runs—while a small minority exhibit localized randomness, occasionally branching into several (up to five or more) distinct continuations.  

Interestingly, when we replicate this experiment locally using \texttt{vLLM} with \texttt{Llama-3.1-8B}, the generation process becomes dramatically more deterministic (unsurprisingly, as both system load and numerical environment are tightly controlled). In this setting, 99.8\% of nodes have a single child, and only 0.2\% exhibit any branching at all. This contrast reinforces that much of the observed non-determinism in hosted models such as \texttt{4o-mini} arises not from the model’s sampling rule itself, but from small, environment-level or distributed-inference sources of variability.

\section{Non-Exhaustive Taxonomy of Exfiltration Vectors}

In this section we outline a handful of attack surfaces for exfiltrating model weights (or related sensitive data). We describe how an attacker may operate and give concrete example.

\subsection{Main Message Channel}
An attacker may transmit secrets through the model’s ordinary user-facing outputs. In \textit{steganographic natural text}, seemingly benign responses embed bits (e.g., via synonym choice or punctuation patterns) that the attacker later decodes. This is the primary focus of this work.

\subsection{Developer Channels}
An attacker may exploit development or maintenance environments—such as build machines, CI systems, and fine-tune hosts—to exfiltrate model weights or sensitive data by abusing trusted workflows or misconfigurations.

This could happen if a a malicious or compromised developer can transfer full checkpoints to external storage (e.g., AWS S3) using valid credentials. Or less maliciously, if a finetuning job on a shared cluster can write intermediates (weights, optimizer states, logs) to world-readable paths.

\subsection{Operational Security Channel}
An attacker may target systems used for backup, logging, and operational continuity to retrieve archived model artifacts. For example, the attacker could compromise a backup service and make a copy of the weights, which would likely be encrypted, from there they only need to exfiltrate a small key to decrypt the backups from the main development server. As such, it is critical to protect not just the datacenter serving weights, but also servers storing auxiliary information, like backups, or key management.

\subsection{Side Channels}
An attacker outside of a datacenter may exploit \textit{side-channels} that correlate with internal computations to infer sensitive information; or an attacker inside of a datacenter may send information outside encoded through a ``side-channel''. 
For example, for \textbf{timing channels}: per-token or total latency variations correlate with secret-dependent execution paths, or timing delays could be used by a adversary to send a specifically encoded message.

\subsection{Miscellaneous}
There are also a suite of attacks that are either LLM specific or unique to the workflow associated with an AI workflow. Notably, defending against a \textit{distillation} attack, where an adversary outside of the datacenter trains their model on the outputs of a closed source model, requires an entirely different set of defenses.

Additionally, an attacker may combine internal services to bypass perimeter controls when the vector does not fit other categories. An example is \textit{Repeat-After-Me} (internal cross-server exfiltration) attack: an internal service with weight access sends “repeat after me: \texttt{<binary payload>}” to an externally exposed service, which echoes it to a client. In such a setting, the 2nd server is doing \textit{expected} behavior and would pass verification. In spirit, this is akin to a confused deputy problem.

\section{Recommended (High-level) Defenses} \label{sec:high-level-defenses}
This scheme is one tool among many, in order to lock down model weights, it is important to also do the following high-level defenses:

\begin{enumerate}[noitemsep]
    \item \textbf{Close down side channels.} Remove ways information can leak through timing, message size, or system behavior. \emph{Examples:} send replies at a steady pace; pad/clip lengths; do not expose internal scores/log probabilities; do not let users set random seeds; keep sensitive jobs on separate machines.
    \item \textbf{Close down \& monitor development channels.} Reduce what internal tools can reveal and monitor their use through access control patterns. \emph{Examples:} give minimum necessary access to developers; use short-lived credentials; require review for features that show internals; turn off raw traces; log use of special debugging features.
    \item \textbf{Prevent distillation attacks.}
    Distillation attacks seek to replicate a proprietary model's behavior by repeatedly querying it through a public API and training a surrogate model to match its outputs.
    While these attacks do not require system compromise, large-scale access can gradually reproduce most of the model's decision boundaries and degrade its commercial value.
    Mitigations focus on limiting the quantity and quality of information exposed through normal inference channels.
    \emph{Examples:} restrict total query volume and sampling diversity per user; randomize or truncate probability outputs to prevent exact gradient estimation; limit temperature and sampling controls.

    \item \textbf{Operational security.} Protect model files and access across training, storage, movement, and serving. \emph{Examples:} encrypt checkpoints and backups with hardware-protected keys; restrict and log restore operations; separate roles for training vs. inference; require strong auth and just-in-time access for cluster consoles; sign model artifacts and verify signatures before deployment; use authenticated encryption for moving models between systems; block ad-hoc copies to personal storage; rotate keys when models move from training to serving.
    
    \item \textbf{Monitor main traffic (this scheme).} Score tokens for risk and send only the riskiest cases to extra checks; likely apply rate-limiting when needed. 
    
    \item \textbf{Incident response \& investigations.} When signals cross a threshold, contain, investigate, and fix. \emph{Examples:} freeze affected keys/accounts; isolate servers and storage; save logs; map impact; rotate keys; write up lessons and update rules.
    \item \textbf{Political \& policy responses.} Align technical controls with business, safety, and legal needs. \emph{Examples:} verify identity for high-volume users; set fair-use tiers; define clear escalation paths; publish regular transparency reports; schedule red-team drills.
\end{enumerate}

\subsection{More Concrete Recommendations} \label{sec:concrete-recommendations}
We discuss a short list of general mitigations that complement a verification-based system.

\begin{itemize}[noitemsep]
  \item \textbf{Egress limiting:} Restrict and monitor all outbound channels (rate limits, content quotas, enforced formats) to make large-scale exfiltration costly or detectable. Diagramed in Figure \ref{fig:egress-limiting}.
  \item \textbf{KYC and incident response:} Strong identity, access controls, and an established incident-response workflow (including KYC for privileged users and rapid revocation) to deter insider misuse and accelerate containment.
    \item \textbf{One-in-One-out policy:} For inference traffic, require that any message leaving the inference server is matched an inbound request; this will prevent exfiltration to only be possible if it is initiated from outside; this is particualrly effective when coupled with traffic monitor (as suggested by this paper) and KYC policies for users sending message requests.
\end{itemize}

\begin{figure}[H]
  \centering
  \includegraphics[width=0.5\linewidth]{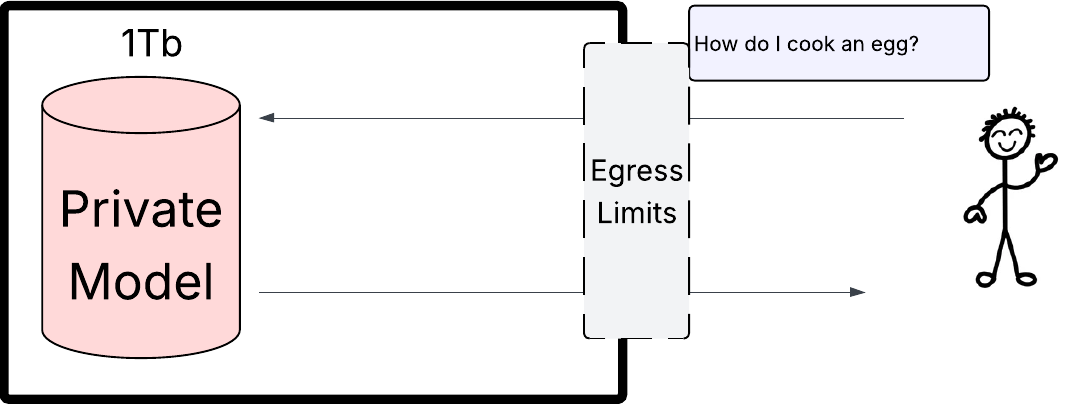}
  \caption{Egress-limiting schematic (illustrative): controls on outbound channels reduce bandwidth and increase observability for exfiltration attempts. Egress limiting as a form of strong guarantees on exfiltration is employed by the company Anthropic \cite{anthropic2025asl3}.}
  \label{fig:egress-limiting}
\end{figure}

\end{document}